\documentclass[12pt]{article}

\usepackage[margin=1in]{geometry}
\usepackage{setspace}
\usepackage[T1]{fontenc}
\usepackage[utf8]{inputenc}

\usepackage{amsmath,amssymb,amsfonts,amsthm,mathtools}
\usepackage{natbib}
\usepackage{graphicx}
\usepackage{booktabs}
\usepackage{multirow}
\usepackage{algorithm}
\usepackage{algorithmic}
\usepackage{enumitem}
\usepackage{bbm, bm}
\usepackage{multirow}
\usepackage{booktabs}
\usepackage{mathtools}
\usepackage{commath}
\usepackage{algorithm}
\usepackage{algorithmic}
\newtheorem{prop}{Proposition}
\newtheorem{lem}{Lemma}
\allowdisplaybreaks

\newcommand{\on}{{(1)}} %

\newcommand{\sn}{\sum_{i=1}^{n}} %
\newcommand{\E}{\mathbb{E}} %
\newcommand{\scj}{\sum_{j=1}^{K-1}}
\newcommand{\scl}{\sum_{l=1}^{K-1}}
\newcommand{\onj}{{(j)}}
\newcommand{\onjc}{{(j^c)}}

\usepackage[
    colorlinks=true,
    citecolor=blue,
    linkcolor=blue,
    urlcolor=blue
]{hyperref}

\newtheorem{theorem}{Theorem}[section]

\theoremstyle{definition}

\title{\bfseries Fast high-dimensional mean testing via logistic regression}

\author{
Sayan Das\thanks{Department of Statistics and Data Science, Washington University in St. Louis, St. Louis, MO, USA}
\and
Debraj Das\thanks{Department of Mathematics, Indian Institute of Technology Bombay, Mumbai, India}
\and
Subhajit Dutta\thanks{Applied Statistics Unit, Indian Statistical Institute, Kolkata, India}
}

\date{}

\begin{document}

\maketitle

\vspace{0.4cm}

\begin{abstract}
We propose computationally efficient tests for equality of mean vectors of two or more high-dimensional populations. Central to our approach is an equivalence between equality of means and a zero population logistic regression parameter. We establish this equivalence for independently distributed observations without imposing common distributional assumptions across populations. Our procedure uses logistic Lasso to screen informative variables and an unpenalized logistic refit for inference in the reduced dimension, yielding asymptotically correct size and consistency. For a specified two-sample Gaussian submodel and sparse discriminative class, the test also attains the minimax separation rate. The framework extends to multiple populations through multi-class logistic regression. Simulations demonstrate accurate size control, strong power, and favorable computational scaling compared with existing tests under unbalanced designs and variance heterogeneity. Applications to gene-expression data with more than twenty-two thousand variables illustrate the practical scalability of the proposed procedures.
\end{abstract}

\vspace{0.5em}

\noindent
\textbf{Keywords:}
Discriminative set; Logistic regression; Minimax; Signal;  Sparsity; Sub-Weibull

\section{Introduction}\label{sec:intro}

Testing equality of mean vectors from two or more populations is a fundamental problem in high-dimensional statistics with wide applications. Traditional procedures such as Hotelling's $T^2$ test may not be applicable when the data dimension is comparable to or larger than the sample size, and can also be numerically unstable when the sample covariance matrix is nearly singular. This has led to a rich literature comprising new methods and theory for such data settings. One key idea underlying many existing tests is to measure a distance between suitably standardized sample means. Although such procedures can be effective, many of them either do not fully utilize dependence among the variables or require estimation of a high-dimensional covariance or precision matrix. Projection-based procedures offer an alternative route to improved power, but identifying and estimating a projection is often computationally challenging in high dimensions.

Our main idea is to transform the mean-testing problem into a logistic regression problem. By using population membership as the response, we show that the population means are equal if and only if the corresponding population logistic parameter is zero. This equivalence makes it possible to use logistic Lasso to screen informative variables and then construct the test in a reduced dimension, without estimating or inverting the full covariance matrix of the observations. We assume that observations are independent and identically distributed within each population, while allowing the population distributions to differ beyond their means.

\subsection{Existing literature}

The classical Hotelling's $T^2$ test for testing equality of the means of two populations was extended to high dimensions by \cite{bai1996effect}. Other influential contributions to the two-sample problem include \cite{chen2010two} and \cite{tony2014two}. More recently, \cite{yang2024new} proposed a correlation-aware test based on regularized estimation of the precision matrix under a linear-structure assumption. \cite{feng2024asymptotic} developed a test based on the asymptotic independence between the sum and maximum of suitable statistics. \cite{Huang2022} provide an overview and numerical comparison of a broad collection of high-dimensional two-sample mean tests.

The assumptions governing the relationship between the population distributions vary considerably across this literature. \cite{chen2010two} allow unequal covariance matrices through a common latent factor or linear transformation representation, but the populations retain a shared standardized latent structure. The non-Gaussian formulation of \cite{Cai2014} assumes a common centered distribution and covariance matrix, so that the populations differ through their locations. For the optimal-projection test of \cite{huang2015projection}, the non-Gaussian theory likewise retains a common covariance matrix, whereas its unequal-covariance extension is developed for Gaussian populations.

Some existing methods do accommodate heterogeneous populations. \cite{Xue2020} developed a bootstrap-based sup-norm test under coordinate-wise moment and exponential-tail conditions. On the contrary, we require a weaker sub-Weibull condition that includes the sub-exponential setting as a special case and permits heavier tails when the sub-Weibull parameter is below one. \cite{Kong2022} proposed consistent permutation procedures under exponential-tail conditions; their theory for the Hotelling-\(T^2\)-based procedure additionally assumes uniformly bounded eigenvalues of the full population covariance matrices. By contrast, our power analysis does not impose a global eigenvalue condition on the full data covariance matrices: its invertibility and non-degeneracy requirements concern only an active-set block of the expected logistic information matrix, whose dimension is determined by the sparse population logistic parameter.

Our numerical comparisons focus on procedures most directly related to the proposed tests. In the two-sample setting, we compare with the tests of \cite{chen2010two} (CQ), \cite{Xue2020} (XY), \cite{huang2015projection} (OP), and \cite{Kong2022} (ES). For the multi-sample case, \cite{Chakraborty2023} develop a bootstrap-based sup-norm test for testing linear hypotheses for high-dimensional means (say, the CS test), while \cite{Li2023} propose tests based on a generalized Hotelling’s $T^2$ statistic (referred to as the HDT test). We use these tests for numerical comparisons in Section \ref{sec:simulation}.

\subsection{Our contributions}

We first describe the two-sample formulation. Let $X_1,\ldots,X_{n_1}$ be independent and identically distributed (iid) from a non-degenerate $p$-dimensional distribution $F_1$ with mean vector $\mu_1$, and let $Y_1,\ldots,Y_{n_2}$ be iid samples from a non-degenerate $p$-dimensional distribution $F_2$ with mean vector $\mu_2$. The hypothesis of interest is

\begin{equation*}
    H_0:\mu_1=\mu_2
    \text{ vs. }
    H_1:\mu_1\ne\mu_2.
\end{equation*}

Pool the observations using their population prior probabilities, let the population indicator be the response in a logistic regression, and include the class-prior log-odds as an offset. We define $\beta$ through the population logistic score equation, equivalently as the pseudo-true minimizer of the population logistic risk; therefore, the conditional class probability need not follow a correctly specified linear logistic model. Our first key result establishes that
\begin{equation*}
    \mu_1=\mu_2
    \quad\Longleftrightarrow\quad
    \beta=0.
\end{equation*}
Consequently, the original mean-testing problem is equivalent to testing $H_0':\beta=0$ against $H_1':\beta\ne0$. This transformation is the main methodological device of the paper: it converts a high-dimensional mean problem into a regression-coefficient testing problem for which sparse logistic regression provides a computationally tractable route to an informative low-dimensional representation.

We construct the test using logistic Lasso (see \cite{tib1996}) followed by an unpenalized logistic refit on the selected variables. The Lasso step screens the relevant components of $\beta$ even when the ambient dimension is exponentially large relative to the sample size, while the post-Lasso step yields an estimator suitable for inference. The resulting procedure performs inference only in the selected dimension and avoids inversion of a full $p$-dimensional covariance or precision matrix. Under uniform sub-Weibull tails and the stated sparsity and design conditions, we establish asymptotically correct size and consistency; see Theorems \ref{theo:size} and \ref{theo:power}.

The sparsity assumption is placed on the discriminative parameter $\beta$, rather than directly on the mean difference $\mu_1-\mu_2$. A sparse discriminative set need not correspond to a sparse mean difference, so the proposed procedure can remain effective under some dense mean alternatives. For the two-sample Gaussian model and the sparse discriminative class of Section~\ref{sec:optimality}, we derive the minimax separation rate for the active components of $\beta$ and show that the proposed test attains the rate.

The same principle extends naturally to more than two populations. For $K\ge2$ populations, we prove that equality of all $K$ mean vectors is equivalent to a zero population multi-class logistic parameter. We then construct a multi-sample test using multi-class logistic Lasso and establish its asymptotic size control and consistency under sparsity of the multi-class discriminative set; see Theorem~\ref{theo:multi}. The two- and multi-sample procedures therefore share the same transformation, screening, and reduced-dimensional inference framework.

Computational efficiency is another central feature of the proposed approach. The basic procedure D3 uses logistic Lasso for screening and then performs inference in the selected lower-dimensional space, substantially reducing computational cost relative to most competing tests. D3op additionally uses label permutations to calibrate the Lasso penalty through directly computed maximal-penalty thresholds. Figure~\ref{fig:test} shows that across the two- and three-sample settings, D3 remains fast and relatively stable as the dimension increases, while D3op retains a practical balance between adaptive penalty selection and computation time. Section~\ref{sec:simulation} reports the full simulation timings with the current competitors, together with the size and power results.

\begin{figure}
\centering
\includegraphics[width=0.8\linewidth]{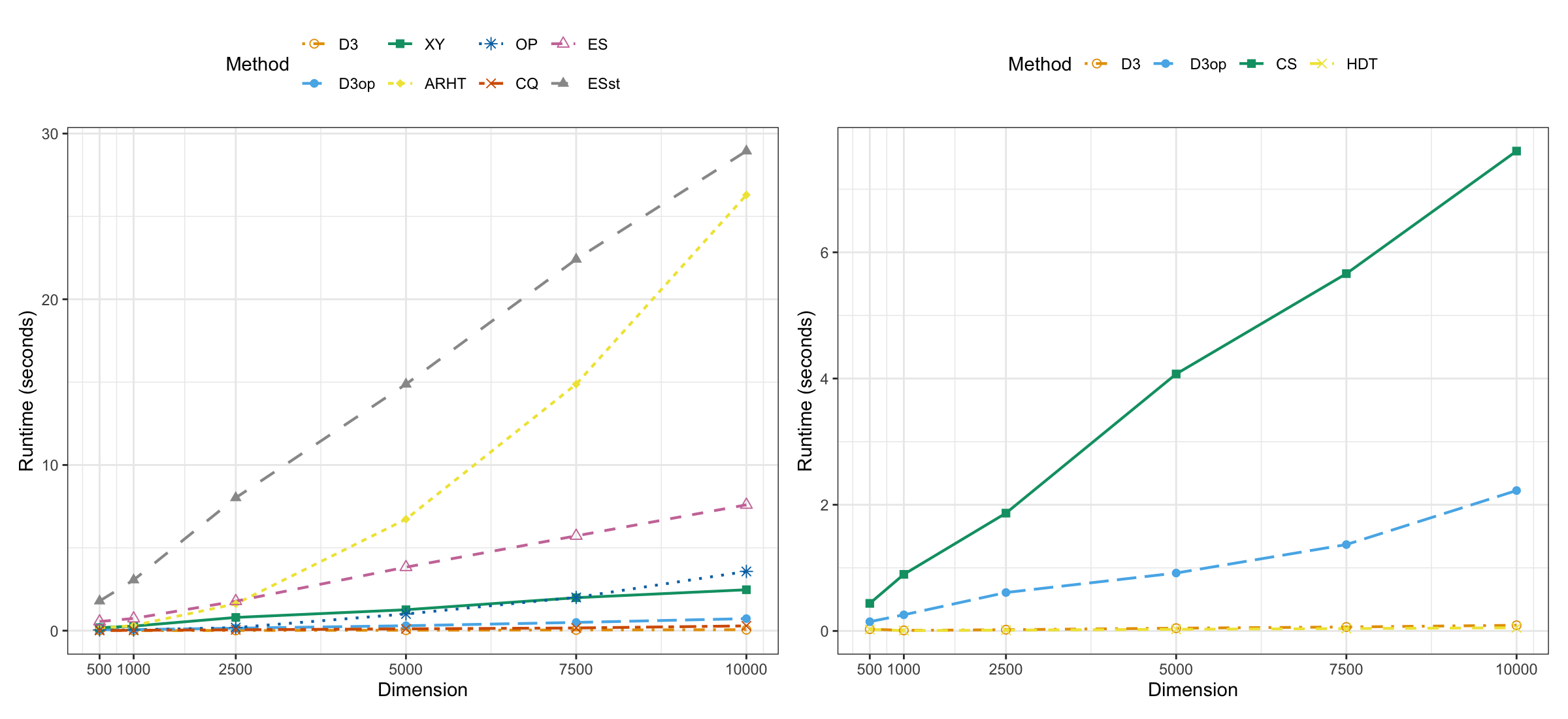}
\caption{Computing time of some existing tests for $p=500, 1000, 2500, 5000, 7500$ and $10000$ with 75 observations from each population. The left panel shows two-sample tests, while the right panel shows three-sample tests.}
\label{fig:test}
\end{figure}

Simulations further demonstrate accurate size control under balanced and unbalanced designs and strong power across a range of alternatives, including coordinate-wise variance heterogeneity. Applications to gene-expression data with more than twenty-two thousand variables illustrate that the proposed multi-sample procedures can detect meaningful group differences while remaining computationally feasible. Together, the theory and numerical results show how the logistic-regression transformation combines sparse discriminative modeling, reduced-dimensional inference, and scalability in a unified way for two- and multi-sample mean testing.

\subsection{Organization of the paper}

The two-sample scenario is considered in Section~\ref{sec:ttsm}. The transformation of the two-sample mean-testing problem into a test for the population logistic parameter is discussed in Section~\ref{sec: transform}. The proposed test is constructed in Section~\ref{sec:test stat}, and its asymptotic analysis is presented in Section~\ref{sec:tpower}. Minimax rate optimality under Gaussianity is established in Section~\ref{sec:optimality}. The extension to the multi-sample setup is presented in Section~\ref{sec:test stat multi}. Section~\ref{sec:simulation} contains an extensive simulation study comparing the finite-sample performance and computational cost of the proposed procedures with existing tests. Section~\ref{sec:real} presents applications to gene-expression data. Proofs of the main results and auxiliary lemmas, together with additional simulation studies and data analyses, are relegated to the Supplementary Material.

\section{Two-sample mean testing} \label{sec:ttsm}

Let $X_1,\dots, X_{n_1}$ be a random sample from a $p$-dimensional distribution $F_1$ and $Y_1,\dots, Y_{n_2}$ be from another $p$-dimensional distribution $F_2$. Assume that $F_1$ and $F_2$ may differ in their mean vectors.
For ease of presentation, we unify the two samples and define a random vector $W$ which can be generated either from $F_1$ or $F_2$ with nonzero probabilities $\pi_1$ and $\pi_2$, respectively, with $\pi_1+\pi_2 =1$. Here, $\pi_1$ and $\pi_2$ can be thought of as the corresponding prior probabilities for the underlying two populations. Defining $n = n_1+n_2$, $n_k/n$ is clearly a natural estimator of $\pi_k$ for $k=1,2$. In the next subsection, we translate the aforementioned testing problem into a testing problem for the parameter vector $\bm{\beta}$ in the logistic regression framework.

\subsection{Transformation to logistic regression} \label{sec: transform}

Consider the logistic regression (LR) based on the two samples $X_1,\dots, X_{n_1}$ and $Y_1,\dots, Y_{n_2}$ using a random variable $z$ where $z = 0$ or $1$ according to whether $W \sim F_1$ or $W \sim F_2$. Thus, we can devise an LR setup with $z_i = 0$ for $i=1,\dots, n_1$ and $z_i=1$ for $i=n_1+1,\dots,n$ being responses corresponding to the covariates $W_1,\dots, W_n$, where $W_i = X_i$ for $i=1,\dots, n_1$ and $W_i=Y_{i-n_1}$ for $i=(n_1+1),\dots,n$. As a consequence, $(0, X_1^\top)^\top, \dots, (0, X_{n_1}^\top)^\top,$ $(1, Y_1^\top)^\top,\dots, (1, Y_{n_2}^\top)^\top$ can be seen as independent and identically distributed (iid) realizations of $(z, W^\top)^\top$.

In LR, the logarithm of the odds (or logit) is modeled as a linear function of the covariates and hence, 
the logistic regression model is given by 
\begin{align*}
\log\bigg[\dfrac{P(z=1|W)}{1-P(z=1|W)}\bigg]= \log\big(\pi_2/\pi_1\big) + \beta_{0} + \bm{\beta}^{(1)\top}W,
\end{align*}
where $\bm{\beta}=(\beta_0, \bm{\beta}^{(1)\top})^\top = (\beta_{0}, \beta_1,$ $\dots, \beta_p)^\top$ is the $(p+1)$-dimensional regression parameter vector. Clearly, $\log\big(\pi_2/\pi_1\big)$ captures the discrepancy in the observed sample sizes and disappears when $\pi_1 = \pi_2$ (representing the scenario when the sample sizes $n_1$ and $n_2$ are \textit{equal}). 
Since the logistic function is strictly increasing, regression parameter $\bm{\beta}$ can equivalently be defined as the unique solution of 
\begin{equation}\label{eqn:pdef}
\mathbb{E}\Big[ \left(z - p(\bm t|W)  \right)  ( 1^\top, W^\top )^\top
   \Big] = 0, \text{ where } p(\bm t|W) =\dfrac{\pi_2e^{t_0 + W^{\top}\bm{t}^{(1)}}}{\pi_1+\pi_2e^{t_0 +W^{\top}\bm{t}^{(1)}}}. 
\end{equation} 

It is quite natural to expect that $\mu \equiv \mu_1-\mu_2$ and $\bm{\beta}$ are related. In fact, if the underlying populations are Gaussian, it is known~that
\begin{equation} \label{minimax_beta}
\beta_0 = -\frac{1}{2} (\mu_2+\mu_1)^\top\Sigma^{-1}(\mu_2-\mu_1)
\text{ and } \bm\beta^{(1)} = \Sigma^{-1}(\mu_2-\mu_1),
\end{equation} 
where $\Sigma$ is the common non-singular covariance matrix (see, e.g., \cite{hastie2009elements}). Clearly, $\mu_1=\mu_2$ if and only if $\bm{\beta} = \bm{0}$ under Gaussianity. For general distributions, an exact relation like equation (\ref{minimax_beta}) may not hold. 
For example, for general elliptical distributions (other than the Gaussian), this relation (\ref{minimax_beta}) is true only in some special cases.
Hence, the equivalence of $\mu_1=\mu_2$ with $\bm{\beta} = \bm{0}$ cannot be claimed directly. However, the following proposition shows that this fact is indeed true for general non-degenerate probability~distributions.

\begin{prop}\label{prop:transformation}
$\mu_1=\mu_2$ if and only if $\bm{\beta} = \bm{0}$.
\end{prop}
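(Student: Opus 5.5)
The plan is to work directly with the population score equation (\ref{eqn:pdef}) and to use the characterization of $\bm\beta$ as its unique solution. Under the mixture description, $P(z=1)=\pi_2$ and $W\mid z=0\sim F_1$, $W\mid z=1\sim F_2$. Hence $\mathbb{E}[z]=\pi_2$, $\mathbb{E}[zW]=\pi_2\mu_2$ and $\mathbb{E}[W]=\pi_1\mu_1+\pi_2\mu_2$. The key observation is that at $\bm t=\bm 0$ the offset makes $p(\bm 0\mid W)=\pi_2/(\pi_1+\pi_2)=\pi_2$, a constant, so the score at zero can be computed in closed form:
\begin{equation*}
\mathbb{E}\big[(z-\pi_2)(1,W^\top)^\top\big]
=\Big(0,\ \big(\pi_2\mu_2-\pi_2(\pi_1\mu_1+\pi_2\mu_2)\big)^\top\Big)^\top
=\big(0,\ \pi_1\pi_2(\mu_2-\mu_1)^\top\big)^\top .
\end{equation*}

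\emph{($\Rightarrow$)} If $\mu_1=\mu_2$, this display vanishes, so $\bm t=\bm 0$ solves (\ref{eqn:pdef}). By uniqueness of the solution, $\bm\beta=\bm 0$.

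\emph{($\Leftarrow$)} If $\bm\beta=\bm 0$, then $\bm 0$ solves (\ref{eqn:pdef}). The display then forces $\pi_1\pi_2(\mu_2-\mu_1)=0$, and since $\pi_1,\pi_2>0$ we get $\mu_1=\mu_2$.

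The only real obstacle is justifying that (\ref{eqn:pdef}) has a unique solution, which the paper asserts but which should be checked for general non-degenerate $F_1,F_2$. I would argue via the population logistic risk
\begin{equation*}
R(\bm t)=\mathbb{E}\big[\log\{\pi_1+\pi_2e^{t_0+W^\top\bm t^{(1)}}\}-z(t_0+W^\top\bm t^{(1)})\big].
\end{equation*}
It is finite when first moments exist, and its gradient is minus the left-hand side of (\ref{eqn:pdef}).

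Its Hessian is $\mathbb{E}[p(1-p)\tilde W\tilde W^\top]$ with $\tilde W=(1,W^\top)^\top$. This matrix is positive definite because $p(1-p)>0$ pointwise and, by non-degeneracy, no nonzero $a$ satisfies $a^\top\tilde W=0$ almost surely. Strict convexity then gives at most one stationary point, which is all both directions need, because each direction applies uniqueness to a point already known to solve (\ref{eqn:pdef}).

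Existence of a minimizer in general, via coercivity of $R$ when the two distributions are not linearly separated, is not needed for this proposition. I would only remark that it holds under the standing assumption that $\bm\beta$ is well defined.
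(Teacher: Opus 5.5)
Your argument is correct, but for the hard direction $\mu_1=\mu_2\Rightarrow\bm\beta=\bm 0$ it takes a different route from the paper.

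The paper obtains Proposition~\ref{prop:transformation} as the case $K=2$ of Proposition~\ref{prop:multitransformation}. There it works directly on the score equations. Using equality of means, it reduces them to $\mathbb{E}\big[(1-e^{\eta_j})D^{-1}(1,U^\top)^\top\big]=\bm 0$, with $\eta_j=\beta_{0j}+\bm\beta_{1j}^\top U$ and $D>0$ the normalizing denominator. It then excludes a nonzero solution by a sign argument: an odd/even-function argument for symmetric densities in the main-text sketch, and a case analysis on the sign of $\beta_{0j}$ in the appendix. Both amount to pairing the reduced equation with $(\beta_{0j},\bm\beta_{1j}^\top)^\top$ to get $\mathbb{E}[\eta_j(1-e^{\eta_j})/D]=0$, hence $\eta_j\equiv 0$. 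In other words, the paper proves by hand, and only at the null, that the score equation has no solution other than $\bm 0$.

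You instead get uniqueness globally from strict convexity of the population logistic risk. You then only need the observation that the offset makes $p(\bm 0\mid W)\equiv\pi_2$, so the score at $\bm 0$ is $(0,\pi_1\pi_2(\mu_2-\mu_1)^\top)^\top$. Your route has several advantages:
\begin{enumerate}
\item It is shorter and covers arbitrary priors and non-symmetric distributions at once.
\item It makes the role of the offset transparent.
\item It carries over directly to $K$ classes: the score at $\bm 0$ is $(0,\pi_k(\mu_k-\sum_{l}\pi_l\mu_l)^\top)^\top$, and the multinomial risk is strictly convex under non-degeneracy.
\item It actually justifies the uniqueness that the paper only asserts when it defines $\bm\beta$ through (\ref{eqn:pdef}).
\end{enumerate}

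The paper's route needs only first moments, whereas your Hessian requires $\mathbb{E}\|W\|^2<\infty$. This is automatic under the sub-Weibull assumption. In general you can avoid the Hessian by using strict convexity of $\eta\mapsto\log(\pi_1+\pi_2e^{\eta})-z\eta$ on the positive-probability event $\{\tilde W^\top(\bm t-\bm s)\neq 0\}$. Note also that your $(\Leftarrow)$ direction uses only that $\bm\beta$ solves (\ref{eqn:pdef}), not uniqueness, exactly as in the paper.
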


The above equivalence is pivotal in developing our approach to test the equality of high-dimensional means. So, we present a brief outline of the proof of this proposition under the simpler setup when the distributions of $X_1$ and $Y_1$ are symmetric and absolutely continuous and $\pi_1 = \pi_2 = 1/2$. The general case (with more than two populations) is presented as Proposition \ref{prop:multitransformation} and its proof is relegated to the Supplementary Material. First, assume that $\bm\beta = \bm{0}$. 
Using equation (\ref{eqn:pdef}), we have
\begin{align*}
    0 & =
    \mathbb{E}\Big((z- 1/2)W\Big) \\
    & = \mathbb{E}\Big((z- 1/2)W|W \sim F_1\Big) P(W \sim F_1) + \mathbb{E}\Big((z- 1/2)W|W \sim F_2\Big) P(W \sim F_2) \\
    & = 1/2\Big[\mathbb{E}\Big(-\dfrac{X_1}{2}\Big) + \mathbb{E}\Big(\dfrac{Y_1}{2}\Big)\Big] = 1/4(\mu_2 - \mu_1),
\end{align*}
which implies $\mu_1 = \mu_2$. 
Next, assume that $\mu_1 = \mu_2$. From (\ref{eqn:pdef}), we have
\begin{equation}
    \label{eqn:star1}
    \mathbb{E} \left[ \left(\frac{ -\exp(\beta_0+X_1^\top{\bm\beta^{(1)}}) }{1+\exp(\beta_0+X_1^\top{\bm\beta^{(1)}})}\right) \begin{pmatrix} 1 \\ X_1 \end{pmatrix} + \left(\frac{ 1}{1+\exp(\beta_0+Y_1^\top{\bm\beta^{(1)}})}\right) \begin{pmatrix} 1 \\ Y_1 \end{pmatrix}\right] = 0. 
\end{equation}
Clearly, $(\beta_0, {{\bm\beta^{(1)}}}^\top)^\top = \bm{0}$ is a solution of (\ref{eqn:star1}). Moreover, if ${\bm\beta^{(1)}} = \bm{0}$, then $\beta_0$ must be $0$. If possible, let ${\bm\beta^{(1)}} \neq \bm{0}$. Then $(X_1-\mu_1)^\top{{\bm\beta^{(1)}}}$ and $(Y_1-\mu_1)^\top{{\bm\beta^{(1)}}}$ (as $\mu_1 = \mu_2$) have absolutely continuous distributions symmetric around $0$. 
Denote the density of $(X_1-\mu_1)^\top {{\bm\beta^{(1)}}}$ by $f_1$ and density of $(Y_1-\mu_1)^\top {{\bm\beta^{(1)}}}$ by $f_2$ and their average by $f_{1,2} \equiv (f_1+f_2)/2.$ Therefore, from equation (\ref{eqn:star1}), we obtain
\begin{align}
     \int_{-\infty}^{\infty} \begin{pmatrix} 1 \\ y \end{pmatrix}\Bigg[ \frac{f_2 (y-\beta_0- {{\bm\beta^{(1)}}}^\top \mu_1) - \exp(y) f_1 (y-\beta_0- {{\bm\beta^{(1)}}}^\top \mu_1)}{1+\exp(y)}\Bigg] \; dy = 0. \label{eqn:star2}
\end{align} 
Now for any density $f$ and for any $a \in \mathbb{R}$, $\Big[1 -\int_{-\infty}^{\infty} \frac{ f(z-a) \exp(z)}{1+\exp(z)}\;dz\Big] = \int_{-\infty}^{\infty} \frac{f(z-a)}{1+\exp(z)}\;dz.$ Thus, from (\ref{eqn:star2}), we have
\begin{align}
    &\int_{-\infty}^{\infty} \frac{1-\exp (y)}{1+\exp(y)}f_{1,2} (y-\beta_0- {{\bm\beta^{(1)}}}^\top \mu_1) \; dy = 0. \label{eqn:star200}
\end{align}
Since $\frac{1-\exp(\cdot)}{1+\exp(\cdot)}$ is an odd function and $f_{1,2}(\cdot)$ is symmetric around $0$,
we get $\beta_0 + {{\bm\beta^{(1)}}}^\top \mu_1=0$ from (\ref{eqn:star200}). Thus, from (\ref{eqn:star2}), we have
\begin{align}
     &\int_{-\infty}^{\infty} y\Bigg[ \frac{f_2 (y) - \exp(y) f_1 (y)}{1+\exp(y)}\Bigg] \; dy = 0. \label{eqn:star2000}
     \end{align}
Now, for any density $f$ satisfying $\int_{-\infty}^{\infty}zf(z) = 0$, we have
$\int_{-\infty}^{\infty} \frac{ zf(z)}{1+\exp(z)}\;dz = -\int_{-\infty}^{\infty} \frac{z\exp(z)f(z)}{1+\exp(z)}\;dz.$
Therefore, (\ref{eqn:star2000}) reduces to
$$\int_{-\infty}^{\infty} y\frac{1-\exp(y)}{1+\exp(y)}f_{1,2}(y)\;dy = 0,$$
which cannot be true since the function $z~\frac{1-\exp(z)}{1+\exp(z)}$ is a non-positive even function and $f_{1,2}$ is a density function. Therefore, we arrive at a contradiction. Hence, the assumption that ${\bm\beta^{(1)}} \neq \bm{0}$ cannot be true and the proof of the special case is complete.

In the next subsection, we construct an estimator of $\bm{\beta}$ which we then utilize to construct a procedure to test an equivalent hypothesis
\begin{equation} \label{hyp_beta}
H_0^\prime: \bm{\beta} = \bm{0} \text{ vs. } H_1^\prime: \bm{\beta} \neq \bm{0}.    
\end{equation}

\subsection{Construction of the test using logistic Lasso regression}\label{sec:test stat}
In this subsection, we construct a testing procedure based on a suitable estimator of the LR parameter $\bm{\beta}$. Since the dimension of the underlying populations may be large compared to the sample size $n$, the regression parameter $\bm{\beta}$ may be high-dimensional. Hence, the maximum likelihood estimator (MLE) of $\bm{\beta}$, the most natural estimator of $\bm{\beta}$, may be sub-optimal and may not even exist, as observed by \cite{candes2020phase} when $p \ge n$. A natural way out is to assume that the parameter $\bm{\beta}$ lies in a lower-dimensional space and accordingly, one may consider a penalized version of the MLE as a potential estimator of $\bm{\beta}$. 

Several penalized estimators have been introduced for the linear regression setting (more generally, for generalized linear models). Among the different penalized estimators, the most well-known method is the Lasso introduced by \cite{tib1996}. 
One reason behind the popularity of Lasso is its computational feasibility in high-dimensional regression problems (see, e.g., \cite{Efron2004, Friedman2007, Fu1998, Osborne2000}). 
A natural estimator of $\pi_k$ is $n_k/n$ for $k=1, 2$. So, the logistic Lasso estimator can naturally be defined as 
\begin{equation*}
\begin{split}
    \tilde{\bm{\beta}}_n = \underset{(t_0, \bm{t}^\top)^\top \in \mathbb{R}^{p+1}}{\arg\min}& \Big[-\sum_{i=1}^n z_i (t_0 + W_i^\top \bm{t}) + \sum_{i=1}^{n} \log (n_1+n_2\exp(t_0 + W_i^\top\bm{t})) +\lambda_n \sum_{i=0}^{p} |t_j|\Big], 
\end{split}
\end{equation*}
which is the $\ell_1$-penalized log-likelihood of the LR model based on $\{(z_i, W_i)\}_{i=1}^n$. 

Define $\mathcal{A}_n = \{0\leq j\leq p: \beta_{j}\neq 0\}$, the set of true non-zero regression coefficients and let
$\tilde{\mathcal{A}}_n = \{0\leq j\leq p: \tilde{\beta}_{j,n}\neq 0\}$ is the estimator of $\mathcal{A}_n$ based on $\tilde{\bm{\beta}}_n$. 
For a suitable choice of the penalty parameter $\lambda_n$, the logistic Lasso estimator $\tilde{\bm{\beta}}_n$ can be shown to be variable selection consistent (VSC), i.e., 
$\mathbbm{P}(\tilde{\mathcal{A}}_n=\mathcal{A}_n)\rightarrow 1 \text{ as } n \to \infty.$
For such choices of $\lambda_n$, the components of $\tilde{\bm{\beta}}_n$ can be shown to be not $\sqrt{n}-$ consistent. 
See \cite{lahiri2021} and \cite{Chak26} for this duality result of the Lasso for linear regression.
However, the VSC property and $\sqrt{n}$-consistency are both important to construct an effective testing procedure for the hypothesis stated in \eqref{hyp_beta}. In view of these reasons, we will construct our test based on the post-Lasso logistic estimator $\hat{\bm{\beta}}_n = \Big(\big(\hat{\bm{\beta}}^{\tilde{\mathcal{A}}_n}_n\big)^\top, \bm{0}^\top\Big)^\top$. Define $\dot{W}_i = (1, W_i^\top)^\top$ for $1\leq i \leq n$. Then, $\hat{\bm{\beta}}^{\tilde{\mathcal{A}}_n}_n$ is defined as
\begin{align*}
\bm{\hat\beta}^{\tilde{\mathcal{A}}_n}_n = \underset{\bm{t} \in \mathbb{R}^{|\tilde{\mathcal{A}}_n|}}{\arg\min}  &\bigg[-\sum_{i=1}^n z_i \big(\dot{W}_{i}^{\tilde{\mathcal{A}}_n}\big)^\top \bm{t}
+ \sum_{i=1}^n \log \Big(n_1+n_2\exp\Big(\big(\dot{W}_{i}^{\tilde{\mathcal{A}}_n}\big)^\top\bm{t}\Big)\Big)\bigg],
\end{align*}
where $\bm{x}^{\tilde{\mathcal{A}}_n}$ is the sub-vector of a vector $\bm{x}$ containing the components in $\tilde{\mathcal{A}}_n$.
Note that $\hat{\bm{\beta}}_n$ is a two-step estimator. First one needs to compute the logistic Lasso estimator $\tilde{\bm{\beta}}_n$ and identify the estimated active set $\tilde{\mathcal{A}}_n$ of covariates, and then one defines $\hat{\bm{\beta}}_n$ in terms of the LR estimator computed using the covariates present in $\tilde{\mathcal{A}}_n$.
The above definition of $\hat{\bm{\beta}}_n$ is motivated by \cite{Belloni2013}, who introduced the post-Lasso OLS estimator in linear regression. When $\tilde{\bm{\beta}}_n$ is VSC, $\hat{\bm{\beta}}_n$ trivially satisfies the VSC property due to its definition and the components of the active part $\bm{\hat\beta}^{\tilde{\mathcal{A}}_n}_n$ are obviously $\sqrt{n}$-consistent for $\bm{\beta}^{\mathcal{A}_n}$, since $\bm{\hat\beta}^{\tilde{\mathcal{A}}_n}_n$ is actually the LR estimator of $\bm{\beta}^{\mathcal{A}_n}$ on the active set $\mathcal{A}_n$ with probability tending to~$1$.

Under $H_0^\prime: \bm{\beta} = \bm{0}$, the estimator $\hat{\bm{\beta}}_n$ will be $\bm{0}$ with probability tending to $1$ due to the VSC property. To achieve the given significance level $\alpha$  for some $\alpha \in (0, 1)$, we introduce the $(p+1)$-dimensional standard Gaussian random vector $\bm{Z}_1$ and subsequently, 
define the test statistic as
$\bm{T}_n^{(1)} =
   \sqrt{n}\hat{\bm{\beta}}_n 
   + b_1(p, n)\bm{Z}_1,$
where, $b_1(p, n)$ is a positive sequence of $p$ and $n$ tending to zero that serves as a scaling factor for $\bm{Z}_1$ since we are in a high-dimensional setting.
Now, define the quantity $m_{n,p,\gamma}^{(1)}> 0$ such that $\mathbb{P}\left( \underset{j}{\max}\; |Z_{1j}| \leq \frac{m_{n,p,\gamma}^{(1)}}{b_1(p, n)} \right) = \gamma \in  (0,1)$. 
Let $\mathbbm{I}(\cdot)$ be the indicator function and $T_{nj}^{(1)}$ be the $j$th component of $T_{n}^{(1)}$. Then at level $\alpha$, we define our test function  
$$\psi_{n, \alpha}^{(1)} = \mathbbm{I} \left \{ \underset{j}{\max}\; |T_{nj}^{(1)}| > m_{n,p,1-\alpha}^{(1)} \right \}.$$

To study the size of the test $\psi_{n, \alpha}^{(1)}$ in high dimensions, we need some tail condition on the underlying population distribution. For the remainder of this section, we shall assume that the distribution $F_i$ is sub-Weibull (see, e.g., \cite{kuchi22}). In particular, we assume $\underset{1\leq i \leq n}{\max}\; \underset{1\leq j \leq p}{\max}\|W_{ij}\|_{\zeta_\kappa} < C.$
The class of sub-Weibull distributions clearly generalizes the classes of sub-Gaussian and sub-Exponential distributions
(which correspond to $\kappa = 2$ or $1$, respectively). We are now ready to state the first theoretical result:

\begin{theorem}\label{theo:size}
Suppose that $|n_k - n\pi_k|= o\big(n^{1/2}\big)$ for  $k=1,2$ and $$\dfrac{\lambda_n}{\sqrt{n}}>C_1 \max\left\{\sqrt{\log(np)},\frac{(\log(n))^{1/\kappa}(\log(np))^{1/\min\{1,\kappa\}}}{\sqrt{n}} \right\}$$ for some constant $C_1 \geq  1$. Then, for any $\alpha \in (0, 1)$, we have $\mathbb{P}_{H_0}\big(\tilde{\bm{A}}_n = \emptyset\big)\rightarrow 1$ as $n\rightarrow \infty$ 
and hence, $$\mathbb{E}_{H_0}\big(\psi_{n, \alpha}^{(1)}\big) \rightarrow \alpha\;\; \text{as}\;\; n \rightarrow \infty.$$ 
\end{theorem}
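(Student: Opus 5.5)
Under $H_0$, Proposition~\ref{prop:transformation} gives $\bm\beta=\bm 0$, so we must show that the logistic Lasso returns $\tilde{\bm\beta}_n=\bm 0$ with probability tending to one. The objective is convex, so we will use the KKT (subgradient) characterization: $\bm 0$ is a minimizer if and only if the gradient of the unpenalized negative log-likelihood at $\bm t=\bm 0$ has sup-norm at most $\lambda_n$. At $\bm t=\bm 0$ the fitted probability is $n_2/n$ for every $i$. Hence the gradient is $-\sum_{i=1}^n (z_i - n_2/n)\dot W_i$. Its intercept coordinate equals $-(n_2 - n\cdot n_2/n)=0$ exactly. For coordinate $j\ge 1$ it equals
\begin{align*}
-\sum_{i=1}^n (z_i-n_2/n)W_{ij} = \frac{n_1n_2}{n}\big(\bar X_{j}-\bar Y_{j}\big),
\end{align*}
with $\bar X_j,\bar Y_j$ the coordinate sample means. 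Under $H_0$ we have $\mu_{1j}=\mu_{2j}$, so this equals
\begin{align*}
\frac{n_2}{n}\sum_{i\le n_1}(X_{ij}-\mu_{1j})-\frac{n_1}{n}\sum_{i\le n_2}(Y_{ij}-\mu_{2j}),
\end{align*}
a sum of $n$ independent, centered, uniformly sub-Weibull($\kappa$) terms with weights bounded by one.

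We then show that, with probability tending to one, $\max_{1\le j\le p}$ of this quantity in absolute value is at most $\lambda_n$. We will apply a Bernstein-type concentration inequality for sums of independent sub-Weibull variables, e.g.\ Theorem 3.1 of \cite{kuchi22}. This gives, for each $j$, with probability at least $1-2e^{-t}$,
\begin{align*}
\Big|\sum_i a_i(W_{ij}-\E W_{ij})\Big|\le C\big(\sqrt{nt}+(\log n)^{1/\kappa}\, t^{1/\min\{1,\kappa\}}\big).
\end{align*}
We take $t=c\log(np)$ and apply a union bound over $p$ coordinates. The failure probability is then at most $2p(np)^{-c}\to 0$ for $c>1$. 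The resulting threshold is exactly of the order $\sqrt n\max\{\sqrt{\log(np)},(\log n)^{1/\kappa}(\log(np))^{1/\min\{1,\kappa\}}/\sqrt n\}$. The assumed lower bound on $\lambda_n$, with a suitable $C_1$ absorbing the sub-Weibull constant $C$, therefore makes the KKT condition hold with high probability. The main obstacle is getting the sub-Weibull maximal inequality with the right $(\log n)^{1/\kappa}$ factor when $\kappa<1$, where only polynomial-type truncation is available. If the cited inequality is unavailable in that form, we will truncate at level $(\log(np))^{1/\kappa}$, apply Bernstein to the truncated part, and bound the exceedance probability by a union bound.

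We also need uniqueness, or at least that the solver returns the zero solution. The logistic loss is strictly convex in the linear predictor. Moreover, the KKT inequality holds strictly under the assumption (the constant leaves slack), and a strict dual certificate at $\bm 0$ forces every minimizer to have empty support. Hence $\mathbb P_{H_0}(\tilde{\mathcal A}_n=\emptyset)\to1$. On this event $\hat{\bm\beta}_n=\bm 0$, so $T_n^{(1)}=b_1(p,n)\bm Z_1$. Then, by the definition of $m^{(1)}_{n,p,1-\alpha}$,
\begin{align*}
\mathbb P\Big(\max_j|T^{(1)}_{nj}|>m^{(1)}_{n,p,1-\alpha}\Big)=\alpha
\end{align*}
exactly on that event. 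Writing $\E_{H_0}\psi^{(1)}_{n,\alpha}=\alpha+O\big(\mathbb P(\tilde{\mathcal A}_n\neq\emptyset)\big)$ gives convergence to $\alpha$. The condition $|n_k-n\pi_k|=o(n^{1/2})$ is used only to ensure $n_1n_2/n^2$ is bounded away from $0$. It also guarantees that using $n_k/n$ in place of $\pi_k$ in the offset leaves the gradient at zero exactly as computed.
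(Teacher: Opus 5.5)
Your proposal is correct and is essentially the paper's argument: you check the KKT condition at $\bm t=\bm 0$ with the sub-Weibull maximal inequality of \cite{kuchi22} at $t\asymp\log(np)$, then bound $|\mathbb{E}_{H_0}\psi^{(1)}_{n,\alpha}-\alpha|$ by a multiple of $\mathbb{P}_{H_0}(\tilde{\mathcal{A}}_n\neq\emptyset)$. The differences favour you: the paper splits the score as $(z_i-\pi_2)+(\pi_2-n_2/n)$ and uses $|n_k-n\pi_k|=o(n^{1/2})$ to make the second piece $o(1)$, whereas your exact centering via $\sum_i(z_i-n_2/n)=0$ and $\mu_1=\mu_2$ shows that balance condition is not needed at all (so your closing remark about where it is used is off), and your strict-dual-certificate argument supplies the uniqueness of the empty support that the paper leaves implicit.
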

\vspace*{-0.4cm}
Theorem \ref{theo:size} establishes that for any $\alpha \in (0,1)$, the test $\psi_{n, \alpha}^{(1)}$ achieves the nominal size asymptotically. This essentially shows that the test $\psi_{n, \alpha}^{(1)}$ is a valid test for the hypothesis \eqref{hyp_beta}
at any given significance level $\alpha \in (0, 1)$. 
Before moving to the power analysis of the test $\psi_{n, \alpha}^{(1)}$, we would like to point out that one may construct a test based on other estimators (not necessarily the post-Lasso estimator) of $\bm{\beta}$. For example, in low dimensions, i.e., when the dimension $p$ is fixed or grows like a fractional power of $n$, then a test statistic may be defined based on the original LR estimator. In high dimensions, one existing alternative to the post-Lasso estimator is the debiased Lasso estimator proposed by {Ma et al. (2020)}. However, the testing procedure based on the debiased Lasso estimator would be computationally quite slow, for reasons similar to the testing procedure developed by \cite{tony2014two}.

\subsection{Asymptotic power analysis}\label{sec:tpower}

In this subsection, we analyze the power of the test $\psi_{n, \alpha}^{(1)}$ in high dimensions. We carry out power analysis for the alternative corresponding to 
\begin{equation} \label{hyp_beta_local}
H_0^\prime: \bm{\beta} = \bm{0}\text{ vs. } H_1^\prime: \bm{\beta} \in  \Theta^\prime(p_0)= \{\bm{\beta} \in \mathbb{R}^{p+1}: \|\bm{\beta}\|_0 = p_0+1 \},
\end{equation}
for some $p_0\le p.$ Here, $\|\cdot\|_0$ denotes the $\ell_0$ metric, and $\Theta^\prime(p_0)$ is essentially the informative set which dictates why $H_0^\prime$ may be false. 
The structure of $H_1^{\prime}$ is motivated by the LR classifier utilized to transform $H_0: \mu_1 =\mu_2$ to $H_0^\prime: \bm{\beta}= \bm{0}$ (see Section \ref{sec: transform} for more~details).

Moreover, when the populations are Gaussian, the LR classifier is the optimal discriminant method (see, e.g., \cite{hastie2009elements}), implying that $\Theta^\prime(p_0)$ is actually the discriminative set. In high dimensions (in a large-$p$, small-$n$ setting), $p_0$ must be substantially smaller than $p$, since without such sparsity a discriminant method may fail (see, e.g., \cite{fan2008high} and \cite{mai2012}). However, the sparsity of the discriminative set does not necessarily induce sparsity in the signal set $\{\mu_1- \mu_2: \|\bm{\beta}\|_0 = p_0 + 1\}$, and therefore our proposed test may work even under a dense alternative in terms of $\mu_1-\mu_2$. One can see Proposition 1 of \cite{mai2012} for the connection between the discriminative set and the signal set under Gaussian.  

Now, we are ready to state the regularity conditions required to develop asymptotic power analysis of our test $\psi_{n, \alpha}^{(1)}$ under $H_1$. However, first we define some additional notation. 
For any matrix $\bm{M}$, let $\bm{M}_{j\cdot}^\top$ denote the $j$th row of $\bm{M}$ and $\|\bm{M}\|_2$, $\|\bm{M}\|_1$ and $\|\bm{M}\|_\infty$ denote the spectral norm, $\ell_1$ norm and $\ell_\infty$ norm, respectively, of $\bm{M}$. For any vector $\bm{l}\in \mathbb{R}^{m}$, $\bm{l}_j$ denotes the $j$th component of $\bm{l}$. For a real number $k\geq 1$, $\|\bm{l}\|_k = \Big(m^{-1}\sum_{j=1}^{m}|l_j|^k\Big)^{1/k}$ and 
$\|\bm{l}\|_{\infty}=\max\{|l_{m}|: 1\leq j \leq m\}$ denote the $\ell_k$ and the sup norm, respectively. When $k=2$, we denote $\|\bm{l}\|_2$ simply as $\|\bm{l}\|$. Define $sgn(x) =-1, 0 ,1$ according as $x<0$, $x=0$, $x>0$, respectively. Recall that $n = n_1+n_2$. Without loss of generality, assume that under $H_1^\prime$, $\mathcal{A}_n \equiv \mathcal{A}_n(p_0)=\{j: \beta_{j}\neq 0\} = \{0,\dots,p_0\}$ is the set of active indices in the discriminative set $\Theta^\prime(p_0)$. Now, recall that $\dot{W}_i = (1, W_i^\top)^\top$ for $1\leq i \leq n$. Define $(p+1)\times (p+1)$ matrix $\bm{L}_n$ as 
$\bm{L}_n  = 
\frac{1}{n}\sum_{i=1}^n \dot{W}_i\dot{W}_i^\top\frac{\pi_2\exp\big(\dot{W}_i ^\top\bm{\beta}\big)}{\pi_1+\pi_2\exp\big(\dot{W}_i ^\top\bm{\beta}\big)}.$
Consider the partition of  $\bm{L}_n$ with respect to $\mathcal{A}_n$ as follows:
\begin{align*}
\bm{L}_{n} = \begin{bmatrix}
\bm{L}_{11,n}\;\;\;\bm{L}_{12,n}\\
\bm{L}_{21,n}\;\;\; \bm{L}_{22,n}
\end{bmatrix}.
\end{align*}
Under VSC, $\bm{L}_{11, n}$ is the variance of $\hat{\bm{\beta}}_n^{\tilde{\mathcal{A}}_n}$. Now, for some $\delta_1, a_1 \in (0, 1]$ and $ \tau_1 \in (0, 1)$, consider the following regularity conditions whenever $n \geq \delta_1^{-1}$. 
Under $H_1^\prime$ and for all $\bm{\beta} \in \Theta^\prime(p_0)$, let us consider the following:
\begin{enumerate}
    \item[(A.1)] $\underset{p_0+1\leq j\leq p}{\max}\left\{|\big(\mathbb{E}\bm{L}_{21, n}\big)_{j \cdot}^\top \big(\mathbb{E} \bm{L}_{11, n}\big)^{-1} sgn\big(\bm{\beta}_n^{(1)}\big) |\right\} \leq 1-\tau_1$.
    \item[(A.2)]
    \begin{enumerate}
    \item[(i)] $\|(\mathbb{E}\bm{L}_{11, n})^{-1}\|_\infty \leq \delta_1^{-1} n^{a_1}$. 
    \item[(ii)]
    $\Big[\underset{0 \leq j \leq p_0}{\min}\Big(\big(\mathbb{E}\bm{L}_{11, n}\big)^{-1}\Big)_{jj}
    \Big] \geq \delta_1$.
    \item[(iii)] $\underset{0\leq j\leq p}{\max} \Big\|  \mathbb{E}\Big(|\dot{W}_{1j}| \dot{W}^{(1)}_{1} \dot{W}^{(1)\top}_{1} \Big) \Big\| \leq \delta_1^{-1}$
    \end{enumerate}
    \item[(A.3)] $|n_k - n\pi_k| = o\big(n^{1/2-a_1}\big)\; \text{for} \; k=1,2$.
    \item[(A.4)]\begin{enumerate}
    \item[(i)]$\dfrac{\lambda_n}{\sqrt{n}} \geq \delta_1^{-1} \max\left\{\sqrt{\log (np)}, p_0n^{a_1}\sqrt{\log{(np_0)}}\right\}$
    \item[(ii)]
    $\dfrac{\lambda_n}{n}n^{3a_1} p_0^2 = o(1)$.
    \item[(iii)]
    $\log p \leq \delta_1 \big(n^{{\kappa/3}} (\log(n))^{-1}\big).$
    \end{enumerate}
\end{enumerate}
Condition (A.1) is the strong irrepresentable condition which has appeared routinely in the literature of Lasso (see, e.g., \cite{Zhao2006}, \cite{Wainwright2009}, \citet{lahiri2021} and \cite{mai2012} among others). This condition is sufficient and almost necessary for 
identifying the relevant set of covariates in Lasso. One can drop this irrepresentable condition if one considers weighted $\ell_1$-penalty like the adaptive Lasso (\cite{zou2006}) or UniLasso (\cite{chatterjee2025}). However, we stick to the usual Lasso, since the adaptive Lasso and UniLasso estimators generally require a two-stage approach and hence are computationally costlier.
Condition (A.2)(i) is on the weighted design matrix $\bm{L}_{11, n}$ corresponding to the discriminative set under $H_1^\prime$. This type of condition is quite common in the literature of logistic regression and guarantees that the underlying MLE exists uniquely under $H_1^\prime$. On the other hand, condition (A.2)(ii) is required to ensure that the distribution of the estimator of individual regression coefficients under $H_1^\prime$ is non-degenerate. Assumption (A.2)(iii) is a technical moment condition required to handle the Karush–Kuhn–Tucker (KKT) conditions corresponding to the Lasso estimator. 
Condition (A.3) is on how far the sample proportions can be from the actual prior probabilities of the underlying distributions. When $a_1 = 0$ in (A.2)(i), (A.3) indicates that the sample proportions cannot be different from the population proportions by the magnitude of the order of the error, which is $o(n^{-1/2})$. 
When the sample sizes $n_1$ and $n_2$ are close, $\pi_1 = \pi_2 = 1/2$ is enough to ensure (A.3). Finally, condition (A.4) specifies how the penalty parameter $\lambda_n$ should be chosen based on the intrinsic properties of the underlying testing problem. 
Note that while $p_0$, the sparsity index of the discriminative set $\Theta^\prime(p_0)$ under $H_1^\prime$, can be of order $o(n^{1/6})$, the actual dimension $p$ can be exponentially large compared to $n$. For example, when $\kappa = 2$, i.e., when the distribution $F$ is sub-Gaussian, then $\log p$ can grow like $o\big(n^{2/3}\big)$ up to a logarithmic factor of $n$, whereas if the distribution $F$ is sub-exponential (i.e., $\kappa = 1$), then $\log p$ can grow like $o\big(n^{1/3}\big)$. We would like to point out that if, in addition to the above regularity conditions, we also have 
$\underset{(p_0+1)\leq j\leq p}{\max}\|\big(\mathbb{E}\bm{L}_{21, n}\big)_{j \cdot}\|_1  \leq \delta_1^{-1},$
then $p_0$ can be made as large as $o(n^{1/2})$. 
We are now ready to state the result on consistency of the test function $\psi_{n, \alpha}^{(1)}$ for testing the hypothesis stated in \eqref{hyp_beta}.

\begin{theorem}\label{theo:power}
Suppose that the regularity conditions (A.1)--(A.4) hold 
for the alternative $H_1: (\mu_1^\top, \mu_2^\top)^\top \in \Theta(p_0)$. Then for any $\alpha \in (0, 1)$, we have
$$\mathbb{E}_{H_1}\big(\psi_{n, \alpha}^{(1)}\big) \rightarrow 1\;\; \text{as}\;\; n \rightarrow \infty,$$ provided $\underset{j \in \mathcal{A}_n(p_0)}{\min}|\beta_{j}|  \geq C_2\dfrac{\lambda_n}{n}n^{a_1}$ for some $C_2 > 1$.
\end{theorem}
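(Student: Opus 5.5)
Our plan has three parts. First we show that the logistic Lasso $\tilde{\bm\beta}_n$ is variable selection consistent under $H_1'$. Second, on the selection event, $\hat{\bm\beta}_n$ coincides with the oracle refit on $\mathcal{A}_n$, so $\hat{\bm\beta}_n^{\mathcal{A}_n}-\bm\beta^{\mathcal{A}_n}=O_P(n^{a_1}\sqrt{p_0\log(np_0)/n})$ in sup norm. Third, the signal condition then forces $\max_j|T^{(1)}_{nj}|$ to exceed $m^{(1)}_{n,p,1-\alpha}$. This threshold satisfies $m^{(1)}_{n,p,1-\alpha}\asymp b_1(p,n)\sqrt{2\log(p+1)}$, which is $o(\sqrt{\log(np)})$ in the regime where $b_1\to 0$.

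\textbf{Step 1: primal--dual witness.} We construct the oracle Lasso solution restricted to $\mathcal{A}_n$: $\check{\bm t}\in\mathbb{R}^{p_0+1}$ minimizing the penalized loss over coordinates in $\mathcal{A}_n$ with $\check{\bm t}_{\mathcal{A}_n^c}=0$. The restricted KKT equation, expanded around $\bm\beta^{\mathcal{A}_n}$, gives
\[
\check{\bm t}-\bm\beta^{\mathcal{A}_n}=\bm L_{11,n}^{-1}\Big(n^{-1}S_n^{\mathcal{A}_n}-\tfrac{\lambda_n}{n}\,sgn(\bm\beta^{\mathcal{A}_n})\Big)+R_n .
\]
Here $S_n$ is the empirical score $\sum_i(z_i-p(\bm\beta|W_i))\dot W_i$ with $\pi_k$ replaced by $n_k/n$. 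The score has mean zero by \eqref{eqn:pdef}, up to an offset error of order $|n_k-n\pi_k|/n=o(n^{-1/2-a_1})$ by (A.3). We would bound $\|S_n\|_\infty/n=O_P(\sqrt{\log(np)/n})$ by a sub-Weibull Bernstein inequality with a union bound (using (A.4)(iii)). We would also replace $\bm L_{11,n}$ by $\mathbb{E}\bm L_{11,n}$ using matrix concentration over $p_0^2$ entries. By (A.2)(i) the leading term has sup norm at most $\delta_1^{-1}n^{a_1}(\lambda_n/n+\sqrt{\log(np)/n})$, which is below $\min_{j\in\mathcal A_n}|\beta_j|$ when $C_2>1$ is chosen large enough relative to $\delta_1^{-1}$. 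Hence $sgn(\check{\bm t})=sgn(\bm\beta^{\mathcal{A}_n})$ with probability tending to one. The remainder $R_n$ is quadratic in the error and is controlled by a Taylor bound on the logistic Hessian with third-moment weights from (A.2)(iii). Its size is $O_P(n^{2a_1}p_0(\lambda_n/n)^2)$, which is $o(\lambda_n/n)$ by (A.4)(ii).

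\textbf{Step 2: strict dual feasibility.} For $j\notin\mathcal A_n$, the dual variable is
\[
\frac{n}{\lambda_n}\Big[\tfrac1n S_{n,j}-(\bm L_{21,n})_{j\cdot}^\top(\check{\bm t}-\bm\beta^{\mathcal{A}_n})+\text{rem}\Big].
\]
We substitute Step 1 and use (A.1) to bound the deterministic part by $1-\tau_1$. The stochastic parts are $n\lambda_n^{-1}\|S_n\|_\infty/n=O_P(\sqrt{n\log(np)}/\lambda_n)$, which is small by (A.4)(i), and the replacement error $\|\bm L_{21,n}-\mathbb{E}\bm L_{21,n}\|$ times $\|\mathbb{E}\bm L_{11,n}^{-1}\|_\infty$. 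This replacement error is where the $p_0n^{a_1}\sqrt{\log(np_0)}$ term in (A.4)(i) enters. Together these give $|\text{dual}_j|<1$ uniformly in $j$ with probability tending to one. By uniqueness, $\tilde{\mathcal A}_n=\mathcal A_n$. We expect this step to be the main obstacle, because it requires uniform control of the sub-Weibull products $\dot W_{ij}\dot W_{ik}\,p(1-p)$ and of the Taylor remainder over all $p$ inactive coordinates simultaneously. We would first try a truncation at level $(\log(np))^{1/\kappa}$ combined with Bernstein's inequality, as suggested by the threshold used in Theorem \ref{theo:size}.

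\textbf{Step 3: refit and power.} On $\{\tilde{\mathcal A}_n=\mathcal A_n\}$, $\hat{\bm\beta}_n^{\mathcal{A}_n}$ is the unpenalized $(p_0+1)$-dimensional MLE. The same expansion as in Step 1 with $\lambda_n=0$ gives $\|\hat{\bm\beta}^{\mathcal A_n}_n-\bm\beta^{\mathcal A_n}\|_\infty=O_P(n^{a_1}\sqrt{\log(np_0)/n})=o(\lambda_n/n\cdot n^{a_1})$. Therefore
\[
\sqrt n\max_j|\hat\beta_{j,n}|\ge \sqrt n(1-o_P(1))\min_{j\in\mathcal A_n}|\beta_j|\ge C_2\lambda_n n^{a_1-1/2}(1-o_P(1)),
\]
and $\lambda_n n^{-1/2}\ge\delta_1^{-1}\sqrt{\log(np)}$. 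Finally, $b_1\max_j|Z_{1j}|=O_P(b_1\sqrt{\log p})=o_P(\sqrt{\log(np)})$ and $m^{(1)}_{n,p,1-\alpha}=o(\sqrt{\log(np)})$. So $\max_j|T^{(1)}_{nj}|>m^{(1)}_{n,p,1-\alpha}$ with probability tending to one, which gives $\mathbb{E}_{H_1}\psi^{(1)}_{n,\alpha}\to1$.
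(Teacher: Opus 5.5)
Your proposal is correct at the paper's level of rigor, but it reaches the conclusion by a different final step. Steps 1--2 and the refit rate in Step 3 are essentially the paper's Lemma~\ref{vscnew} and Lemma~\ref{pl}, recast in primal--dual-witness language. The paper solves the active-block KKT equation by a Taylor expansion and Brouwer's fixed point theorem, then checks the inactive KKT inequalities via (A.1). Your appeal to uniqueness is, if anything, more careful: the paper only exhibits a KKT point with the correct support.

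The routes part at the conclusion. The paper proves a Berry--Esseen normal approximation for each centred active coordinate $\sqrt n(\hat\beta_{nj}-\beta_{nj})+b_1(p,n)Z_{1j}$ (Lemma~\ref{lem:asymnor}). That lemma needs (A.2)(ii) to keep $\sigma_{nj}$ away from zero, plus a third-moment bound. The paper then bounds the type II error by $2\Phi\big(\sigma_{nj'}^{-1}[m^{(1)}_{n,p,1-\alpha}-\sqrt n|\beta_{j'}|]\big)+o(1)$, using $\sigma_{nj'}^2\le Cn^{a_1}$ and $m^{(1)}_{n,p,1-\alpha}\le C\sqrt{\log p}$.

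You instead compare the sup-norm error $O_P(n^{a_1}\sqrt{\log(np_0)/n})$ directly with the beta-min level. This is legitimate: by (A.4)(i) the signal exceeds that error by a factor of order $p_0n^{a_1}\to\infty$. On the $\sqrt n$ scale the signal is at least of order $n^{a_1}\sqrt{\log(np)}$, which dominates $m^{(1)}_{n,p,1-\alpha}\approx b_1(p,n)\sqrt{2\log(p+1)}$. Your route is shorter and dispenses with (A.2)(ii) and Berry--Esseen entirely. The paper's route gives a distributional approximation of the active coordinates, which a local power formula would need, but that is more than consistency requires.

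Two bookkeeping remarks.
\begin{enumerate}
\item Sign recovery in Step 1 requires $C_2$ to exceed the constant (of order $\delta_1^{-1}$) in your sup-norm bound on the restricted Lasso error. The paper's proof needs the same thing implicitly: its bound on $\|\tilde{\bm u}_n^{(1)}\|_\infty$ carries a constant $C$ that $C_2$ must beat. So neither argument literally covers every $C_2>1$.
\item The $p_0n^{a_1}\sqrt{\log(np_0)}$ part of (A.4)(i) is not where you place it. It controls the propagated score term $(\bm L_{21,n})_{j\cdot}^\top\bm L_{11,n}^{-1}S_n^{\mathcal A_n}$, which appears when you substitute Step 1 into the dual variable (the paper's term $II$). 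You do not list this term. The replacement error $\|(\bm L_{21,n}-\mathbb{E}\bm L_{21,n})_{j\cdot}\|_1\|\bm L_{11,n}^{-1}\|_\infty\lesssim p_0n^{a_1}\sqrt{\log(np)/n}$, which you do list, is made negligible by (A.4)(ii) combined with (A.4)(i).
\end{enumerate}
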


Theorems \ref{theo:size} and \ref{theo:power} together imply that the test $\psi_{n, \alpha}^{(1)}$ that we constructed above 
is a valid and consistent test for testing the hypothesis stated in \eqref{hyp_beta} for high dimensions under sparsity of the discriminative set. Moreover, as discussed in Section \ref{sec:intro}, the test is computationally fast compared to existing tests in the literature. The \textit{beta-min} condition in Theorem \ref{theo:power} specifies the minimum magnitude of the active components of $\bm{\beta}$ required for the test $\psi_{n,\alpha}^{(1)}$ to be consistent. In the high-dimensional case, if we choose the penalty $\lambda_n$ such that $\lambda_n \sim \sqrt{n\log p}$ and $a_1 = 0$, then for the consistency of the test function $\psi_{n, \alpha}^{(1)}$, it is necessary to have 
\begin{align}\label{eqn:minseppsi}
\underset{j \in \mathcal{A}_n(p_0)}{\min}|\beta_{j}|  \geq C_2\sqrt{\log p/n}.
\end{align}
In the next subsection, we establish that the separation in equation (\ref{eqn:minseppsi}) is the best possible in the minimax sense for testing the hypothesis stated in \eqref{hyp_beta_local} when the underlying populations are Gaussian. 

\subsection{Minimax separation distance and optimality} \label{sec:optimality}

In this section, 
we study the minimax separation distance of the components of $\bm{\beta}$ from $0$ required for any level $\alpha$ test to achieve power arbitrarily close to $1$. We again concentrate on the alternative $H_1: \mu = (\mu_1^\top,\mu_2^\top)^\top \in \Theta(p_0)$ and consider the underlying populations to be Gaussian. It turns out that the resulting minimax separation distance is of the same order as the \textit{beta-min} condition mentioned in equation (\ref{eqn:minseppsi}). We denote the law of the samples by $\mathbbm{P}_{\mu}$ (or, by $\mathbbm{P}_{\bm \beta}$) for a particular choice of $\mu$ (or, $\bm \beta$) under $H_1$. 
To fix ideas, for a level $\alpha\in(0,1)$ and a Type II error probability $\delta \in(0,1)$, define the $\delta$-separation distance of a level $\alpha$ test $\psi_\alpha$ for testing the hypothesis stated in \eqref{hyp_beta_local} as follows:
\begin{align*}
\rho(\phi_\alpha, \delta) &= \inf\Big\{ \rho>0: \inf_{\mu\in \Theta^\prime(p_0): \min_{j: |\beta_j| \ge \rho}}  \mathbbm{P}_{\mu} (\phi_\alpha=1) \ge 1 - \delta  \Big\}\\
&= \inf\Big\{ \rho>0: \sup_{\mu\in \Theta^\prime(p_0): \min_{j: |\beta_j| \ge \rho}}  \mathbbm{P}_{\mu} (\phi_\alpha=0) \le \delta  \Big\}.
\end{align*}
Subsequently, define the $(\alpha, \delta)$-minimax separation distance for testing the hypothesis stated in \eqref{hyp_beta_local} to be

\begin{equation*}
    \rho^* \equiv \rho^*(\alpha, \delta) = \inf_{\phi_\alpha} \rho(\phi_\alpha, \delta).
\end{equation*}
This quantity measures the minimum magnitude of the components of $\bm{\beta}$ required for any level $\alpha$ test for testing the hypothesis stated in \eqref{hyp_beta_local}
to have power at least $1-\delta$. Under Gaussianity of the underlying populations, the following theorem gives a lower bound on $\rho^*$.

\begin{theorem} \label{theo:minimax}
Let $X_1, \dots, X_n \stackrel{iid}{\sim} N(\mu_1, \Sigma)$ and $Y_1,\dots,Y_n \stackrel{iid}{\sim} N(\mu_2,\Sigma)$, with the two samples independent.
Suppose that $(\alpha, \delta)\in (0, 1)^2$ such that $\alpha+\delta<1$, the minimum eigenvalue of $\Sigma$, $\Sigma_{\min} \ge p_0^{-1}$, $\| \Sigma \|_{L_1} \leq M$, and $p_0 \leq Mp^{1/4}$, for some constant $M>0$. Then, for sufficiently large $n$ and $p$, we have
\begin{equation*}
    \rho^* \ge c \sqrt{\log p/n},
\end{equation*}
for some positive constant $c$.
\end{theorem}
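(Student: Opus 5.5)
We use the standard Le Cam two-point / mixture reduction. A level-$\alpha$ test $\phi_\alpha$ satisfies
$$\sup_{\bm\beta\in\text{alt}}\mathbbm{P}_{\bm\beta}(\phi_\alpha=0)\ge \mathbbm{E}_{\nu}\mathbbm{P}_{\bm\beta}(\phi_\alpha=0)\ge 1-\alpha-\|\mathbbm{P}_\nu-\mathbbm{P}_0\|_{TV}$$
for any prior $\nu$ supported on $\{\bm\beta\in\Theta'(p_0):\min_{j\in\mathcal A_n}|\beta_j|\ge\rho\}$. Here $\mathbbm{P}_\nu=\int \mathbbm{P}_{\bm\beta}\,d\nu$. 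Hence it suffices to exhibit, for $\rho=c\sqrt{\log p/n}$, a prior with $\|\mathbbm{P}_\nu-\mathbbm{P}_0\|_{TV}<1-\alpha-\delta$. This in turn is implied by $\chi^2(\mathbbm{P}_\nu,\mathbbm{P}_0)=\mathbbm{E}_{\nu\otimes\nu}\big[\int d\mathbbm{P}_{\bm\beta}\,d\mathbbm{P}_{\bm\beta'}/d\mathbbm{P}_0\big]-1\to0$.

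\textbf{Parametrization.} By (\ref{minimax_beta}), $\bm\beta^{(1)}=\Sigma^{-1}(\mu_2-\mu_1)$. We fix $\mu_1=0$ and a known $\Sigma$ satisfying the hypotheses; the simplest choice is $\Sigma=I_p$, which has $\Sigma_{\min}=1\ge p_0^{-1}$ and bounded $L_1$ norm. We then set $\mu_2=\Sigma\bm\beta^{(1)}$, so that under $\Sigma=I$ we have $\mu_2=\bm\beta^{(1)}$. The intercept $\beta_0=-\tfrac12\mu_2^\top\Sigma^{-1}\mu_2$ is automatically nonzero. One caveat must be handled: the alternative requires $|\beta_0|\ge\rho$ as well. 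Since $|\beta_0|=\tfrac12\|\bm\beta^{(1)}\|^2\asymp p_0\rho^2$, this holds only when $p_0\rho\gtrsim1$. If it fails, the extra condition is a restriction on the alternatives that can only shrink the admissible set, and we would state the lower bound for the active slope coordinates, as the theorem intends. The prior $\nu$ is as follows: choose a uniformly random support $S\subset\{1,\dots,p\}$ with $|S|=p_0$, and set $\beta_j=\rho$ for $j\in S$ (signs may be randomized as well).

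\textbf{Chi-square computation.} Only the second sample depends on $\bm\beta$, and its law is $N(\Sigma\bm\beta^{(1)},\Sigma)^{\otimes n}$. The Gaussian identity gives
$$\int \frac{dP_{\theta}\,dP_{\theta'}}{dP_0}=\exp\big(n\,\theta^\top\Sigma^{-1}\theta'\big),$$
with $\theta=\Sigma\bm\beta^{(1)}$. The exponent is therefore $n\,\bm\beta^{(1)\top}\Sigma\bm\beta'^{(1)}$. With $\Sigma=I$ this equals $n\rho^2|S\cap S'|$. For a general $\Sigma$ with $\|\Sigma\|_{L_1}\le M$, we bound $|\bm\beta^\top\Sigma\bm\beta'|\le M\rho^2\cdot$(an overlap-type count); the $L_1$ assumption is used here to control cross terms by $\|\Sigma\|_{L_1}$ times $\rho^2$ times the number of index pairs within the support. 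We thus need
$$\mathbbm{E}\exp\big(n\rho^2 |S\cap S'|\big)\to1,$$
where $|S\cap S'|$ is hypergeometric and stochastically dominated by $\mathrm{Bin}(p_0,p_0/(p-p_0))$. This yields
$$\mathbbm{E}\exp(t|S\cap S'|)\le\Big(1+\tfrac{p_0}{p-p_0}(e^{t}-1)\Big)^{p_0}\le\exp\Big(\tfrac{p_0^2}{p-p_0}e^{t}\Big).$$

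\textbf{Choice of $c$ and main obstacle.} Take $t=n\rho^2=c^2\log p$, so that $e^t=p^{c^2}$. Then $p_0^2p^{c^2-1}\le M^2p^{c^2-1/2}\to0$ provided $c^2<1/2$. This is exactly where the assumption $p_0\le Mp^{1/4}$ enters. It follows that $\chi^2\to0$, so $\mathrm{TV}\to0$, and every level-$\alpha$ test has worst-case type II error at least $1-\alpha-o(1)>\delta$ at separation $c\sqrt{\log p/n}$, which gives $\rho^*\ge c\sqrt{\log p/n}$. I expect the main obstacle to be the general-$\Sigma$ version of the overlap bound, namely controlling $\bm\beta^\top\Sigma\bm\beta'$ by the support overlap using only $\|\Sigma\|_{L_1}\le M$ and $\Sigma_{\min}\ge p_0^{-1}$. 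My first attempt would be to bypass this by working with the least favorable $\Sigma=I$ (or a block-diagonal $\Sigma$) inside the allowed class, since a lower bound over a subclass suffices.
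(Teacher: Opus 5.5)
Your chi-square route is the paper's own. It uses Lemma~\ref{lem:minimax}, $\mu_1=0$, a uniform prior on $p_0$-subsets $S$ with entries $\rho$, the identity $\chi^2(\mathbbm{P}_\nu,\mathbbm{P}_0)+1=\mathbb{E}_{S,S'}\exp\{n\rho^2\sum_{k\in S,\,l\in S'}\sigma_{kl}\}$, and $p_0\le Mp^{1/4}$ to kill the overlap term. For $\Sigma=I$ your hypergeometric/binomial computation is correct.

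\textbf{First gap: the reduction to $\Sigma=I$.} In $\rho(\phi_\alpha,\delta)$ the supremum is over $\mu$ only. So $\Sigma$ is a fixed, known matrix, and the theorem asserts the bound for \emph{every} $\Sigma$ with $\|\Sigma\|_{L_1}\le M$ and $\Sigma_{\min}\ge p_0^{-1}$. A bound at $\Sigma=I$ says nothing about another fixed $\Sigma$; otherwise the $L_1$ hypothesis would be idle.

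Your sketch for general $\Sigma$ also fails. The cross term $\rho^2\sum_{k\in S,\,l\in S'}\sigma_{kl}$ is nonzero even when $S\cap S'=\emptyset$, so no overlap count controls it.

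The missing idea, which the paper takes from \cite{tony2014two}, is truncation. For a threshold $d$ put $B_S=\{l: |\sigma_{kl}|\ge M/d \text{ for some } k\in S\}$; then $|B_S|\le p_0 d$ by the row bound $\sum_l|\sigma_{kl}|\le M$.
\begin{itemize}
\item Each $l\in S'\cap B_S$ contributes at most $M$ to $\sum_{k\in S}\sigma_{kl}$. Since $|S'\cap B_S|$ is dominated by $\mathrm{Bin}(p_0,p_0d/(p-p_0))$, your own computation gives a factor $\exp\{p_0^2d\,p^{Mc^2}/(p-p_0)\}$.
\item The remaining entries are below $M/d$ in absolute value and add at most $c^2Mp_0^2\log p/d$ to the exponent. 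Together this is essentially (\ref{minimax4}).
\end{itemize}
Be aware that with this worst-case treatment of the small entries, both exponents can vanish only if $p_0^4\log p=o(p^{1-Mc^2})$, i.e.\ with some room below $p^{1/4}$. To reach $p_0\le Mp^{1/4}$, separate the two parts by Cauchy--Schwarz. Then bound the small part through Hoeffding's comparison of sampling without and with replacement; this gives $\exp\{O(c^2Mp_0^2\log p/p)\}$ once $d\ge 2c^2Mp_0\log p$.

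\textbf{Second gap: your handling of the intercept is backwards.} Since $\Theta'(p_0)$ counts $\beta_0$ among the active coordinates, every atom of the prior must satisfy $|\beta_0|\ge\rho$. Otherwise the prior is not supported on the alternative and Lemma~\ref{lem:minimax} gives nothing; restricting the alternative makes a lower bound harder, not easier. With $\mu_1=0$ one has $|\beta_0|=\tfrac12\bm\beta^{(1)\top}\Sigma\bm\beta^{(1)}\le\tfrac12Mp_0\rho^2$, which is below $\rho$ whenever $Mp_0\rho<2$.

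The paper's step (b) does not settle this either: it uses $\|\bm\beta^{(1)}\|_2^2\ge\rho p_0$, where only $\|\bm\beta^{(1)}\|_2^2\ge\rho^2p_0$ holds.

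A clean fix exploits the composite null. Take $\mu_1=-(1+M)\mathbf{1}_p$ and $\mu_2=\mu_1+\Sigma\bm\beta^{(1)}$, and compare with the null point $\mu_1=\mu_2=-(1+M)\mathbf{1}_p$. Then
\[
\beta_0=-\mu_1^\top\bm\beta^{(1)}-\tfrac12\bm\beta^{(1)\top}\Sigma\bm\beta^{(1)}\ge(1+M)\rho p_0-\tfrac12M\rho^2p_0\ge\rho \quad\text{for } \rho\le 2.
\]
The common translation leaves all likelihood ratios, and hence the chi-square bound, unchanged.
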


Theorem \ref{theo:minimax} shows that if $c$ is sufficiently small, then there does not exist any level $\alpha$ test which can reject the null hypothesis $H_0^\prime: \bm{\beta} = \bm{0}$ uniformly over the discriminative set $ \Theta^\prime(p_0) \cap \big\{{\min_j}|\beta_{j}|  \geq c\sqrt{\log p/n}\big\}$ with probability tending to~$1$. Clearly, the \textit{beta-min} condition stated in Theorem \ref{theo:power} implies that the lower bound on $\rho^*$ (obtained in Theorem \ref{theo:minimax}) is attainable by our proposed test $\psi_{n, \alpha}^{(1)}$ in the Gaussian setting and hence, our proposed test is minimax rate optimal for testing the hypothesis stated in \eqref{hyp_beta_local}.

\section{An extension to the multi-sample case} \label{sec:test stat multi}
We now extend our testing procedure for comparing high-dimensional two-sample means to the multi-sample setup. 
Similar to the two-sample setup, here we also transform the testing problem to a classification problem using multi-class logistic regression. This technique may be thought of as an alternative to high-dimensional ANOVA. More precisely, suppose that there are $K\geq 2$ populations with means $\mu_1,\ldots, \mu_K$ and we would like to test
\begin{equation}
H_{0K}: \mu_1=\mu_2=\dots=\mu_K \text{ vs. } H_{1K}: \mu_k \neq \mu_l \;\text{for some}\; k \neq l, 
\end{equation}
based on the $p$-dimensional observed samples $\big\{X^{(k)}_1,\dots,X^{(k)}_{n_k}\big\}_{k=1}^{K}$ \big(with $n = \sum_{k=1}^{K}n_k$\big). Throughout this section, we assume that $X^{(k)}_1,\dots,X^{(k)}_{n_k} \stackrel{{iid}}{\sim} F_k$ a non-degenerate probability distribution for $k = 1,\dots, K$.
Our aim here is to extend the testing procedure developed in Section \ref{sec:test stat} to more than two populations. 

We can utilize the multi-class LR to construct a valid test function for testing $H_{0}^K$ vs. $H_{1}^K$. Define a random vector $U$ such that $\mathbb{P}(U = X^{(k)}) = \pi_k$ for $k=1,2,\dots,K$ with $\sum_{k=1}^{K}\pi_k = 1$, where $\pi_1, \dots, \pi_K$ are prior probabilities corresponding to the $K$ populations. 
Next, define a $(K-1)$-dimensional random vector $\bm{y}=(y_1\dots,y_{K-1})^\top$ such that $y_{k}$ is 1 and the remaining elements of $\bm{y}$ are all $0$ when $U = X^{(k)}$ for $k=1,2,\dots,(K-1)$, and $\bm{y} = \bm{0}$ when $U = X^{(K)}$. Based on the observed samples, we can view the data as arising from a multi-class logistic regression setup, where $\bm{y}$ is the response vector with observed values $\{\bm{y}_1, \dots, \bm{y}_n\}$ and $U$ is the vector of covariates with $\{U_1, \dots, U_n\}$ being the corresponding observed values. Subsequently, the regression parameter vector $\bm\beta = ((\beta_{1, 0},\bm\beta_{1}^\top)^\top,\dots,\bm(\beta_{(K-1), 0},\bm\beta_{(K-1)}^\top)^\top)^\top$ is defined as the solution of
\begin{align*}
\mathbb{E}\left[ \left(y_{k} - p_k(\bm t_k|U)\right) (1,U^\top)^\top \right] = 0,
\end{align*}
where for $k \in \{1,\dots, (K-1)\}$,
\[p_k(\bm t_k|U) = \frac{ \pi_k\exp(t_{k, 0}+\bm t_{k}^\top U) }{\pi_K+\sum_{l=1}^{K-1}\pi_l\exp(t_{l, 0}+\bm t_{l}^\top U)}.\]
Now, similar to Proposition \ref{prop:transformation}, $H_{0}^K$ vs. $H_{1}^K$ can be can be expressed in terms of $\bm{\beta}$. The following proposition summarizes the~result.
\begin{prop}\label{prop:multitransformation} 
$\mu_1= \cdots = \mu_K$ if and only if $\bm{\beta} = \bm{0}$.
\end{prop}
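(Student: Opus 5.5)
The plan is to prove both directions directly from the population score equations defining $\bm\beta$. The reverse direction is a moment computation. The forward direction uses the fact that the population multinomial logistic risk is convex, so its minimizer is unique, and $\bm 0$ is a critical point.

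\emph{Direction $\bm\beta=\bm 0 \Rightarrow$ equal means.} At $\bm t=\bm 0$ we have $p_k(\bm 0|U)=\pi_k/(\pi_K+\sum_{l=1}^{K-1}\pi_l)=\pi_k$. The score equation for class $k$ then reads $\mathbb{E}[(y_k-\pi_k)U]=0$. Conditioning on the population, $\mathbb{E}[y_kU]=\pi_k\mu_k$ and $\mathbb{E}[U]=\sum_{l=1}^K\pi_l\mu_l=:\bar\mu$. Hence $\pi_k(\mu_k-\bar\mu)=0$, and since $\pi_k>0$ we get $\mu_k=\bar\mu$ for every $k\le K-1$. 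Because $\bar\mu$ is a $\pi$-weighted average with $\pi_K>0$, it follows that $\mu_K=\bar\mu$ as well. The intercept equations $\mathbb{E}[y_k-\pi_k]=0$ hold trivially.

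\emph{Direction equal means $\Rightarrow \bm\beta=\bm 0$.} First I would check that $\bm t=\bm 0$ solves the score equations when $\mu_1=\cdots=\mu_K=\mu$. The equations reduce to $\mathbb{E}[(y_k-\pi_k)(1,U^\top)^\top]=(0,\pi_k\mu-\pi_k\mu)^\top=0$.

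It then remains to show uniqueness of the solution. I would introduce the population risk
\[
R(\bm t)=\mathbb{E}\Big[-\sum_{k=1}^{K-1}y_k\eta_k+\log\Big(\pi_K+\sum_{l=1}^{K-1}\pi_l e^{\eta_l}\Big)\Big],
\]
where $\eta_k=t_{k,0}+\bm t_k^\top U$. Its gradient is exactly the score above. The map $\eta\mapsto\log(\pi_K+\sum_l\pi_le^{\eta_l})$ is a log-sum-exp function, which is strictly convex along every direction of $\mathbb{R}^{K-1}$ because the $K$-th reference term has no exponent. Thus $R$ is convex, and every critical point is a global minimizer.

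For uniqueness, suppose $\bm t^*\neq\bm 0$ is another solution. Then $R$ is constant on the segment joining $\bm 0$ and $\bm t^*$. Strict convexity in $\eta$ forces the linear predictor direction $\big(s_{k,0}+\bm s_k^\top U\big)_{k}$, with $\bm s=\bm t^*$, to vanish almost surely. Non-degeneracy of each $F_k$ means its support is not contained in any affine hyperplane. Therefore $s_{k,0}+\bm s_k^\top U=0$ a.s.\ forces $\bm s_k=\bm 0$ and then $s_{k,0}=0$, giving $\bm t^*=\bm 0$, a contradiction. This also justifies the uniqueness implicitly assumed in the definition of $\bm\beta$.

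\emph{Main obstacle.} The delicate step is making the strict-convexity argument rigorous under only the moment assumptions. This requires finiteness and differentiability of $R$ under the expectation, for which first moments of $U$ suffice since the log-partition grows linearly. It also requires careful use of the Hessian $\mathbb{E}[(\mathrm{diag}(p)-pp^\top)\otimes \dot U\dot U^\top]$, with $\dot U=(1,U^\top)^\top$, being positive definite. I would prove positive definiteness by showing that $\mathrm{diag}(p)-pp^\top$ is positive definite whenever all $p_k>0$ and $\sum_{k\le K-1}p_k<1$, and that $\mathbb{E}\dot U\dot U^\top\succ0$ by non-degeneracy. Strict positivity of the Hessian along the segment then yields uniqueness. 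This argument replaces the symmetry and density device used in the two-sample sketch, and Proposition \ref{prop:transformation} follows as the case $K=2$.
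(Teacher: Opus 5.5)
Your proof is correct. The first direction is the same moment computation as in the paper, but for the converse you take a genuinely different route.

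The paper never uses convexity of the risk. Assuming equal means, it takes an arbitrary solution $\bm\beta$ of the score equations and manipulates them: it divides the class-$j$ equation by $\pi_j$, multiplies by $\pi_j$ and sums over $j$, then subtracts. This yields, for each $j$, $\mathbb{E}\big[(1-e^{\eta_j(U)})\,(1,U^\top)^\top/D(U)\big]=\bm 0$, where $\eta_j(U)=\beta_{0j}+\bm\beta_{1j}^\top U$ and $D(U)=\pi_K+\sum_{l=1}^{K-1}\pi_l e^{\eta_l(U)}$. Pairing this with $(\beta_{0j},\bm\beta_{1j}^\top)^\top$ gives $\mathbb{E}[(1-e^{\eta_j})\eta_j/D]=0$. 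Since $(1-e^{x})x\le 0$ with equality only at $x=0$ (the paper reaches this through a case split on the sign of $\beta_{0j}$), $\eta_j(U)=0$ almost surely, and non-degeneracy gives $\bm\beta_j=\bm 0$.

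That argument is elementary and shows directly that every solution vanishes. However, it uses the equal-means hypothesis inside the manipulation and yields no uniqueness statement in general.

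Your version separates existence ($\bm 0$ solves the equations) from uniqueness. The uniqueness part, strict convexity of $\eta\mapsto\log(\pi_K+\sum_l\pi_l e^{\eta_l})$ plus non-degeneracy, holds for arbitrary means. It therefore also justifies the ``unique solution'' that the definition of $\bm\beta$ takes for granted. The cost is the routine justification of differentiating under the expectation.

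One minor point: your midpoint/segment argument needs only first moments. The Hessian route in your last paragraph would require $\mathbb{E}\|U\|^2<\infty$, which the proposition does not assume. The segment argument is the one to keep.
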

The above proposition indicates that it is enough to test $H_{0K}^\prime: \bm{\beta} = \bm{0}$ vs. $H_{1K}^\prime: \bm{\beta} \neq \bm{0}$ to conduct the multi-sample test of means. Thus, we need to construct a suitable test statistic based on an estimator of $\bm{\beta}$ and then use it to define a test function.

The inability of the Lasso to achieve both VSC and $\sqrt{n}$-consistency simultaneously motivates us to consider a two-step estimator of the multi-class LR parameter $\bm{\beta}$. In the first step, we consider the Lasso estimator to identify the non-zero components. In the second step, we consider the multi-class post-Lasso LR estimator of the selected active set to build our testing procedure. 
To that end, define the multi-class logistic Lasso estimator $\tilde{\bm{\beta}}_n$ of $\bm{\beta}$ as follows:
\begin{align*}
\tilde{\bm{\beta}}_n =& \big(\tilde{\bm{\beta}}_{n,1}^\top, \dots, \tilde{\bm{\beta}}_{n, (K-1)}^\top\big)^\top = \Big(\Big(\tilde{\beta}_{n,1,0}, \tilde{\bm{\beta}}_{n,1}^{\top}\Big), \dots, \Big(\tilde{\beta}_{n,(K-1),0}, \tilde{\bm{\beta}}_{n,(K-1)}^{\top}\Big)\Big)^\top \nonumber \\
= & \underset{\Big(\big(t_{1,0}, \bm{t}_{1}^{\top}\big), \dots, \big(t_{(K-1),0}, \bm{t}_{(K-1)}^{\top}\big)\Big)^\top \in \mathbb{R}^{(p+1)(K-1)}}{\arg\min} \Bigg[-\sum_{i=1}^{n}\sum_{k=1}^{K-1} y_{ik}\big(t_{k,0}+\bm{t}_{k}^{(1)\top} U_i\big)\nonumber\\
+&\sum_{i=1}^n\log \Big(n_K+\sum_{k=1}^{K-1}n_k\exp\Big(t_{k,0}+\bm{t}_{k}^{\top} U_i\Big)\Big) +\lambda_n\sum_{k=1}^{K-1}\Big(|t_{k,0}|+\|\bm{t}_k\|_1\Big)\Bigg]. 
\end{align*}
For each class $k \in \{1,\dots, (K-1)\}$, we assume $\tilde{\bm{\beta}}_{n, k} = \big(\tilde{\beta}_{n, k,0}, \dots, \tilde{\beta}_{n, k,p}\big)^\top$ and define the set of selected covariates as $\tilde{\mathcal{A}}_{nk}= \{l: \tilde{\beta}_{n, k,l}\neq 0\}$. Subsequently, we can define the multi-class post-Lasso logistic estimator $\hat{\bm{\beta}}_n^{\tilde{\mathcal{A}}_n}$ corresponding to the selected active set $\tilde{\mathcal{A}}_n = \cup_{k=1}^{K-1}\tilde{\mathcal{A}}_{nk}$ as
\begin{align*}
\bar{\bm{\beta}}_n^{\tilde{\mathcal{A}}_n} =&\Big(\big(\bar{\bm{\beta}}_n^{\tilde{\mathcal{A}}_{n1}}\big)^\top, \dots, \big(\bar{\bm{\beta}}_n^{\tilde{\mathcal{A}}_{n(K-1)}}\big)^\top\Big)^\top 
= \underset{\Big(\bm{t}_1^\top, \dots, \bm{t}_{K-1}^\top\Big)^\top \in \mathbb{R}^{\sum_{k=1}^{K-1}|\tilde{\mathcal{A}}_{nk}|}}{\arg\min}\nonumber \\ &\Bigg[-\sum_{i=1}^{n}\sum_{k=1}^{K-1} y_{ik}\bm{t}_{k}^\top \dot{U}_i^{\tilde{\mathcal{A}}_{nk}}+\sum_{i=1}^n\log \Big(n_K+\sum_{k=1}^{K-1}n_k\exp\Big(\bm{t}_{k}^\top \dot{U}_i^{\tilde{\mathcal{A}}_{nk}}\Big)\Big)\Bigg],
\end{align*}
where $\dot{U}_i = \big(1, U_i^\top\big)^\top$ for $i \in \{1,\dots, n\}$. We can define the test statistic for testing $H_{0K}^\prime$ vs. $H_{1K}^\prime$ as 
\begin{align*}
\bm{T}_{n}^{(2)} =
\sqrt{n}\begin{pmatrix}  \bm{\bar\beta}^{\tilde{\mathcal{A}}_n}_n\\ \bm{0} \end{pmatrix} + b_2(p, n)\bm{Z}_{2},
\end{align*}
where $\bm{Z}_2 \sim N\big(\bm{0}, I_{(p+1)(K-1)}\big)$ and $b_2(p, n)$ is a positive scaling factor (tending to zero) for $\bm{Z}_2$ to accommodate the increasing $p$ scenario. 
For any $\alpha \in (0, 1)$, define the quantity $m_{n,p,\alpha}^{(2)}> 0$ such that $\mathbb{P}\left( \underset{j}{\max}\; |Z_{2j}| \leq \frac{m_{n,p,\alpha}^{(2)}}{b_2(p, n)} \right) = \alpha$. At level $\alpha$, we define the test function for testing $H_{0K}^\prime: \bm{\beta} = \bm{0}$ vs. $H_{1K}^\prime: \bm{\beta} \neq \bm{0}$ as follows
$$\psi_{n, \alpha}^{(2)} = \mathbbm{I} \left \{ \underset{j}{\max}\; |T_{nj}^{(2)}| > m_{n,p,1-\alpha}^{(2)} 
\right \}.$$
We now state some assumptions that we require for size and power analyses of the test $\psi_{n, \alpha}^{(2)}$.
Similar to the two-sample case, here also throughout we assume that the distribution of $U$ is sub-Weibull, or in other words the components of $U$ are uniformly sub-Weibull, i.e., for some constant $C \in (0, \infty)$ and $\kappa \in (0, 2]$,
 $\underset{1\leq i \leq n}{\max}\underset{1\leq j \leq p}{\max}\|U_{ij}\|_{\zeta_\kappa} < C,$
where  $\|\cdot\|_{\zeta_{\kappa}}$ is the  norm as defined before.
We are going to analyze the power of the test $\psi_{n, \alpha}^{(2)}$ over the~alternative 
$$H_{1K}: \mu^{(K)}= (\mu_1^\top,\dots, \mu_{K}^\top) \in \Theta_K(p_0) = \{\mu^{(K)} \in \mathbf{R}^{pK}: \|\bm{\beta}\|_0 = (p_0+K-1)\}.$$ Therefore, for the power analysis, we essentially focus on the testing problem
$$H_{0K}^\prime: \bm{\beta} = 0 \text{ vs. } H_{1K}^\prime: \bm{\beta}\in \Theta_K^\top(p_0),$$ where $\Theta_K^\top(p_0) = \{\bm{\beta} \in \mathbf{R}^{(p+1)(K-1)}: \|\bm{\beta}\|_0 = (p_0+K-1)\}$ is the discriminative set under the alternative with sparsity index $(p_0+K-1)$. Without loss of generality, assume that under $H_{1K}^\prime$, $\mathcal{A}_{nk} =\{l: \beta_{k, l}\neq 0\}$ 
$\text{for all}\; k=1,\dots,(K-1)$ with $\sum_{k=1}^{K-1}|\mathcal{A}_{nk}| = p_0 +K -1$ and hence, $\mathcal{A}_n = \cup_{k=1}^{K-1}\mathcal{A}_{nk}$ is the set of active indices in the discriminative set $\Theta_K^\top(p_0)$. For any $k\in \{1,\dots, (K-1)\}$, define $\dot{U}_i^{(k)} = \dot{U}_i^{\mathcal{A}_{nk}}$, $\dot{U}_i^{(k^c)} = \dot{U}_i^{\mathcal{A}_{nk}^c}$ and $\bm{\beta}_k^{(1)} = \bm{\beta}^{\mathcal{A}_{nk}}$. Moreover, for any $k, l\in \{1,\dots, (K-1)\}$, we define
\begin{align*}
& w_i^{(k,l)} = \begin{cases}   \dfrac{\pi_k(\pi_K+\tilde{\Sigma}_{i,-k})\exp\left(\dot{U}_i^{(k)\top}\bm{\beta}_k^\on\right)}{(\pi_K+\tilde{\Sigma}_i)^2} \;\;\text{ if } k=l,\\
- \dfrac{\pi_k \pi_{l}\exp\left(\dot{U}_i^{(k)\top}\bm{\beta}_k^\on\right)\exp\left(\dot{U}_i^{(l)\top}\bm{\beta}_{l}^\on\right)}{(\pi_K+\tilde{\Sigma}_i)^2} \;\; \text{ if } k\neq l,   \end{cases}
\end{align*}
where \begin{equation*}
\tilde{\Sigma}_i = \sum_{l=1}^{K-1} \pi_l\exp\left(\dot{U}_i^{(l)\top}\bm{\beta}^\on_l\right) \text{ and }
\tilde{\Sigma}_{i, -k} = \underset{l\neq k}{\sum_{l=1 }^{K-1}} \pi_l\exp\left(\dot{U}_i^{(l)\top}\bm{\beta}^\on_l\right).    
\end{equation*}
Further, define the $(p+1)(K-1) \times (p+1)(K-1)$ symmetric matrix 
\begin{align*}
\bm{L}_{n} = \begin{bmatrix}
\bm{L}_{11,n}\;\;\;\bm{L}_{12,n}\\
\bm{L}_{21,n}\;\;\; \bm{L}_{22,n}
\end{bmatrix},
\end{align*}
where $\bm{L}_{11,n}$ is the $(p_0+K-1)\times (p_0+K-1)$ block matrix with $(l,l')$th block being $\bm{L}_{11,n}^{(l,l')} = \frac{1}{n} \sn w_i^{(l,l')}\dot{U}_i^{(l)} \dot{U}_i^{(l')\top}$ and $\bm{L}_{21,n}$ is the $(p(K-1) -p_0)\times (p_0+K-1)$ block matrix with $(k,l)$th block being $\bm{L}_{21,n}^{(l,l')} = \frac{1}{n} \sn w_i^{(l,l')}\dot{U}_i^{(l^c)} \dot{U}_i^{(l')\top}$ and $\bm{L}_{22,n}$ is the $(p(K-1) -p_0)\times (p(K-1) -p_0)$ block matrix with $(k,l)$th block being $\frac{1}{n} \sn w_i^{(k,l)}\dot{U}_i^{(k^c)} \dot{U}_i^{(l^c)\top}$. Note that under VSC, $\bm{L}_{11,n}$ is the variance of $\bar{\bm{\beta}}_n^{\tilde{\mathcal{A}}_n}$. Now, we are ready to state some regularity conditions, in addition to the sub-Weibull nature of $U$, required for the power analysis of $\psi_{n, \alpha}^{(2)}$. For some $\delta_2, a_2 \in (0, 1]$ and $ \tau_2 \in (0, 1)$, consider the following regularity conditions whenever $n \geq \delta_2^{-1}$. 
\begin{enumerate}
    \item[(C.1)] $\underset{1\leq k\leq (K-1)}{\max}\;\underset{l_k \in \mathcal{A}_{nk}^c}{\max}\left\{|\big(\mathbb{E}\bm{L}_{21, n}\big)_{l_k \cdot}^\top \big(\mathbb{E} \bm{L}_{11, n}\big)^{-1} sgn\big(\bm{\beta}^{\mathcal{A}_n}\big) |\right\} \leq 1-\tau_2$.
    \item[(C.2)] 
    \begin{enumerate}
    \item[(i)]$\|\big(\mathbb{E}\bm{L}_{11, n}\big)^{-1}\|_{\infty} \leq \delta_2^{-1} n^{a_2}$. 
    \item[(ii)] 
    $\Big[\underset{1 \leq k \leq (K-1)}{\min}\;\underset{l_k \in \mathcal{A}_{nk}}{\min}\Big(\big(\mathbb{E}\big(\bm{L}_{11, n}\big)\big)^{-1}\Big)_{l_k l_k}
    \Big] \geq \delta_2$.
    \item[(iii)] $\underset{1\leq k\leq (K-1)}{\max}\; \underset{l_k \in \mathcal{A}_{nk}}{\max} \Big\| \mathbb{E}\Big(|\dot{U}_{1l_k}| \dot{U}^{(k)}_{1} \dot{U}^{(k)\top}_{1} \Big) \Big\| \leq \delta_2^{-1}.$
    \end{enumerate}
    \item[(C.3)] $|n_k - n\pi_k| = o\big(n^{1/2-a_2}\big),\; \text{for} \; k=1,2,\dots,(K-1)$.
    \item[(C.4)]\begin{enumerate}
    \item[(i)]$\dfrac{\lambda_n}{\sqrt{n}} \geq \delta_2^{-1} \max\left\{\sqrt{\log (np)}, p_0n^{a_2}\sqrt{\log{(np_0)}}\right\}.$
    \item[(ii)]
    $\dfrac{\lambda_n}{n}n^{3a_2} p_0^2 = o(1)$.
    \item[(iii)] 
    $\log p \leq \delta_2\big(n^{{\kappa/3}} (\log(n))^{-1}\big).$
    \end{enumerate}
\end{enumerate}

The regularity conditions stated above are in the same spirit as the regularity conditions (A.1)--(A.4) considered in the two-sample case. So, all discussions on the regularity conditions (A.1)--(A.4) in the two-sample case (including those on the growth of the original dimension $p$) and the active dimension $p_0$ with respect to the sample size $n$ are valid here as well. 
Now, we have the following theorem on the test $\psi_{n, \alpha}^{(2)}$ for testing $H_{0K}: \mu_1=\dots=\mu_K$ vs. $H_{1K}: \mu^{(K)}= (\mu_1^\top,\dots, \mu_{K}^\top) \in \Theta_K(p_0)$.

\begin{theorem} \label{theo:multi}
\begin{enumerate}
\item[(a)] Suppose that $|n_k - n\pi_k|= o\big(n^{1/2}\big),\;$  for  $k=1,\dots,(K-1),$ and $\dfrac{\lambda_n}{\sqrt{n}}>C_3 \max\left\{\sqrt{\log(np)},\frac{(\log(n))^{1/\kappa}(\log(np))^{1/\min\{1,\kappa\}}}{\sqrt{n}} \right\},$ for some constant $C_3 \geq~1$. Then, for any $\alpha \in (0, 1)$, we have 
$$\mathbb{E}_{H_{0K}}\big(\psi_{n, \alpha}^{(2)}\big) \rightarrow \alpha\;\; \text{as}\;\; n \rightarrow \infty.$$ 
\item[(b)] Suppose that the regularity conditions (C.1)--(C.4) hold. Then, 
we have  $$\mathbb{E}_{H_{1K}}\big(\psi_{n, \alpha}^{(2)}\big) \rightarrow 1\;\; \text{as}\;\; n \rightarrow \infty,$$ provided \label{mminb}$\underset{1\leq j\leq (K-1)}{\min}\;\underset{l_j \in \mathcal{A}_{nj}}{\min}|\beta_{j,l_j}| \geq C_4\dfrac{\lambda_n}{n}n^{a_2}$ with some $C_4 > 1$.
\end{enumerate}
\end{theorem}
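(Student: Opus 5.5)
The plan is to reduce Theorem~\ref{theo:multi} to the two-sample arguments behind Theorems~\ref{theo:size} and~\ref{theo:power}, with the vectorized multi-class parameter in $\mathbb{R}^{(p+1)(K-1)}$ playing the role of the binary one. The object to control throughout is the KKT system of the multi-class logistic Lasso. Writing $\ell_n(\bm t)$ for the unpenalized objective, the gradient block for class $k$ is
\[
\nabla_k \ell_n(\bm t)=-\sum_{i=1}^n\bigl(y_{ik}-\hat p_k(\bm t_k\mid U_i)\bigr)\dot U_i ,
\]
where $\hat p_k$ uses the weights $n_k/n$ in place of $\pi_k$. The Hessian has the block structure defined through the weights $w_i^{(k,l)}$, so $\bm L_n$ is exactly $n^{-1}$ times the Hessian at $\bm\beta$.

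\textbf{Part (a).} Under $H_{0K}$, Proposition~\ref{prop:multitransformation} gives $\bm\beta=\bm 0$. I will show that $\bm t=\bm 0$ satisfies the KKT conditions strictly, that is,
\[
\max_{k,j}\bigl|\nabla_{k,j}\ell_n(\bm 0)\bigr|<\lambda_n
\]
with probability tending to one. By convexity of $\ell_n$, this gives $\tilde{\mathcal A}_n=\emptyset$, so $\bm T_n^{(2)}=b_2(p,n)\bm Z_2$ on that event. By the definition of $m^{(2)}_{n,p,1-\alpha}$, the rejection probability is then $\alpha+o(1)$.

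At $\bm 0$ we have $\hat p_k=n_k/n$. The $j$th coordinate of the class-$k$ score is therefore $\sum_i (y_{ik}-n_k/n)\dot U_{ij}$. Centring $U_{ij}$ at the common mean $\mu$ (only possible under the null, since all means are equal), this equals
\[
\sum_i (y_{ik}-n_k/n)(U_{ij}-\mu_j).
\]
This is a sum of independent, mean-zero, sub-Weibull$(\kappa)$ terms. The intercept coordinate is exactly zero.

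A Bernstein-type inequality for sub-Weibull sums (Kuchibhotla--Chakrabortty) gives a tail bound of the form $\sqrt{n t}+(\log n)^{1/\kappa}t^{1/\min\{1,\kappa\}}$. Taking a union bound over $(K-1)(p+1)$ coordinates with $t\asymp\log(np)$ produces exactly the two terms in the $\lambda_n$ condition. The hypothesis $|n_k-n\pi_k|=o(n^{1/2})$ is only needed to make the weights $n_k/n$ harmless.

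\textbf{Part (b).} I will follow the primal--dual witness construction used for Theorem~\ref{theo:power}.

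First, define the oracle restricted estimator $\check{\bm\beta}$ by solving the penalized problem with support restricted to $\mathcal A_n=\cup_k\mathcal A_{nk}$ and with the sign pattern $\mathrm{sgn}(\bm\beta^{\mathcal A_n})$ imposed.

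Second, apply a Taylor expansion of the score around $\bm\beta$:
\[
\check{\bm\beta}^{\mathcal A_n}-\bm\beta^{\mathcal A_n}
=-(n\bm L_{11,n})^{-1}\Bigl[\nabla_{\mathcal A_n}\ell_n(\bm\beta)+\lambda_n\,\mathrm{sgn}(\bm\beta^{\mathcal A_n})\Bigr]+R_n .
\]
The remainder $R_n$ is controlled by third derivatives of the multinomial log-partition function. These are bounded by products of $|\dot U_{1l_k}|$ with $\dot U^{(k)}\dot U^{(k)\top}$, which is what (C.2)(iii) handles. I also need to replace $\bm L_{11,n}$ by $\mathbb E\bm L_{11,n}$ in $\|\cdot\|_\infty$ norm at rate $p_0\sqrt{\log(np_0)/n}$, using sub-Weibull concentration together with (C.2)(i). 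The score term involves $\sum_i(y_{ik}-\hat p_k)\dot U_i$ evaluated at the pseudo-true $\bm\beta$. It is mean-zero by the defining equation, up to an $o(n^{1/2-a_2})$ correction from $n_k/n-\pi_k$ guaranteed by (C.3).

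Third, two conditions must be verified.
\begin{itemize}
\item Sign consistency: $\|\check{\bm\beta}^{\mathcal A_n}-\bm\beta^{\mathcal A_n}\|_\infty\lesssim n^{a_2}\lambda_n/n$, which the beta-min condition with $C_4>1$ dominates.
\item Strict dual feasibility on $\mathcal A_{nk}^c$ for every $k$: by (C.1) the inactive gradient is at most $(1-\tau_2)\lambda_n$ plus error terms, and (C.4)(i),(ii) make these errors $o(\lambda_n)$.
\end{itemize}
Together these give $\tilde{\mathcal A}_n=\mathcal A_n$ with probability tending to one.

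Fourth, on that event the refit $\bar{\bm\beta}_n$ is the unpenalized multi-class MLE on $\mathcal A_n$. It is $\sqrt n$-consistent with non-degenerate coordinates by (C.2)(ii). Hence
\[
\max_j\sqrt n\,|\bar\beta_{n,j}|\ge \sqrt n\,\min|\beta_{j,l_j}|-O_P(\sqrt{\log p_0})\to\infty,
\]
while $m^{(2)}_{n,p,1-\alpha}\asymp b_2(p,n)\sqrt{\log(p(K-1))}$ stays bounded (for suitable $b_2\to0$). The power therefore tends to one.

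\textbf{Main obstacle.} I expect the hardest step to be the uniform remainder control in the multi-class expansion. The blocks for different classes couple through $\tilde\Sigma_i$, so the third-order terms mix supports $\mathcal A_{nk}$ across classes. I would first bound every softmax derivative by an absolute constant, which reduces everything to moments already covered by (C.2)(iii). I would then run the two-sample argument blockwise, absorbing the factor $K-1$ into constants.
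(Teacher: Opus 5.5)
Your proposal is correct and follows essentially the same route as the paper. In (a) you show strict KKT feasibility of $\bm t=\bm 0$ via the sub-Weibull Bernstein bound and a union bound, so that $\bm T_n^{(2)}=b_2(p,n)\bm Z_2$ with probability tending to one; your centring at the common mean is a neat way to remove the $n_k/n-\pi_k$ term that the paper bounds separately. In (b) you run a primal--dual witness version of the paper's fixed-point VSC argument (Lemma~\ref{vscm}) and finish with a single-coordinate power bound.

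Two small corrections to your last step. First, for $a_2>0$ the assumptions only control the refit error $\sqrt{n}\|\bar{\bm\beta}_n^{\mathcal A_n}-\bm\beta^{\mathcal A_n}\|_\infty$ at order $n^{a_2}\sqrt{\log(np_0)}$ (Lemma~\ref{pl:multi}), not $O_P(\sqrt{\log p_0})$, so you should not claim $\sqrt{n}$-consistency. Second, you need no extra restriction on $b_2(p,n)$: beta-min together with (C.4)(i) gives $\sqrt{n}\min|\beta_{j,l_j}|\ge C_4\delta_2^{-1}\max\{n^{a_2}\sqrt{\log(np)},\,p_0n^{2a_2}\sqrt{\log(np_0)}\}$. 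This dominates both that refit error and $m^{(2)}_{n,p,1-\alpha}=O(b_2(p,n)\sqrt{\log(np)})$.
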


\noindent
Theorem \ref{theo:multi} implies that the test $\psi_{n, \alpha}^{(2)}$ constructed based on the multi-class logistic regression classifier is a valid and consistent test for testing $H_{0K}: \mu_1=\mu_2=\dots=\mu_K$ vs. $H_{1K}: \mu_k \neq \mu_l \;\text{for some}\; k \neq l$, under sparsity of the discriminative set even when the dimension $p$ can grow exponentially with the sample size $n$. As in the two-sample case, the multi-sample procedure requires neither normality nor bounded observations.

\section{Simulation studies} \label{sec:simulation}

We now conduct simulation studies to evaluate the performance of the proposed methods against several competing procedures. Our primary focus is on methods that remain computationally feasible in high-dimensional settings. We consider both two- and three-sample scenarios. 

Throughout this section, the dimension is fixed at $p = 5{,}000$ and the significance level at $\alpha = 0.05$. For sample sizes, we consider both balanced and unbalanced settings: $(75,75)$ and $(50,100)$ in the two-sample case, and $(75,75,75)$ and $(50,75,100)$ in the three-sample case. All results are based on $3000$ Monte Carlo replications. The implementation is available on \href{https://github.com/dsayan97/fast-test.git}{GitHub}.

\subsection{Data-driven choice of the penalty parameter}

Our primary goal is to select the Lasso penalty parameter $\lambda_n$ that satisfies the VSC property. In theory, this requires $\lambda_n$ to scale as $C \sqrt{\log p / n}$ for some constant $C>0$ (the scaling factor $n$ differs from our assumptions due to the different objective function used in the \texttt{glmnet} function available in R). However, determining an appropriate value of $C$ in practice is quite challenging as it can vary substantially for different datasets. For example, in our proposed method D3, we used $C = 0.8$, but this choice is not universally applicable. The standard cross-validation approach selects the Lasso penalty parameter to optimize prediction performance and does not necessarily provide the sparsity required for VSC (see \cite{Chak26}). 

We instead calibrate the penalty parameter under the null hypothesis, with the aim of favoring the selection of no variables when there is no association between the covariates and the response. To approximate this setting, we randomly permute the group labels, thereby breaking the association between the covariates and the response, while preserving the observed covariate structure and group sizes. For each permuted dataset, we calculate the threshold penalty at which the active set produced by Lasso first becomes empty. This threshold, usually denoted by $\lambda_{\max}$, can be obtained directly without fitting the Lasso over a grid of penalty values (see, e.g., \citet[Section~2.5]{glmnet}). Calibrating the penalty parameter under these permutations is intended to promote VSC under the null model. We repeat this calculation for $500$ independent random permutations and use the empirical $99.5$th percentile of the resulting threshold values as the penalty parameter. This choice favors sparsity under the permutation-based null without unduly compromising the selection of informative variables in the original data.

After selecting the penalty, we proceed similarly to the method D3. This data-adaptive procedure is called D3op. Unlike conventional Lasso implementations that rely on fitting models for a sequence of candidate penalties, D3op computes the permutation-specific maximal penalties directly, making the proposed calibration computationally quite efficient.

\subsection{Two-sample setting}
We compare D3 and D3op with several existing methods introduced earlier. The CQ test is implemented using the \texttt{PEtests} R package \citep{petest}. Implementations of XY \citep{Xue2020}, OP \citep{huang2015projection}, and ES \citep{Kong2022} are based on code provided by the respective authors. In view of computational constraints (also see Figure \ref{fig:test}), the XY test is implemented with $1{,}000$ bootstrap resamples (instead of the recommended $10{,}000$). For the ES test, we use the non-studentized version of the infinity-norm statistic to reduce the computational cost.

Data are generated as follows. Let
\[
X_i = \mu_1 + (x_{i,1}, \ldots, x_{i,p})^\top, \quad 
Y_j = \mu_2 + (y_{j,1}, \ldots, y_{j,p})^\top,
\]
where $\{x_{i,k}\}$ and $\{y_{j,k}\}$ follow stationary autoregressive processes
\[
x_{i,k+1} = a_i x_{i,k} + \xi_k, \quad 
y_{j,k+1} = b_j y_{j,k} + \eta_k,
\]
with $\xi_k, \eta_k \overset{\text{iid}}{\sim} N(0,1)$ and $a_i, b_j \overset{\text{iid}}{\sim} \text{Unif}(0,0.95)$. We consider two models:
\begin{itemize}
\item[(i)] homoscedastic case (both populations have identical marginal variances),
\item[(ii)] heteroscedastic case (even-indexed components are multiplied by two, leading to unequal variances across populations).
\end{itemize}

For size analysis, we set $\mu_1 = \mu_2 = 0$. For power analysis, we fix $\mu_1 = 0$ and define $\mu_2$ as follows:
\[
\mu = 2\delta \sqrt{\frac{\log p}{n}} \text{ and } 
\mu_2 = \big( \mu \cdot \mathbf{1}_{\lceil 0.2\sqrt{p} \rceil}^\top, \, 0_{p - \lceil 0.2\sqrt{p} \rceil}^\top \big)^\top,
\]
with $n=n_1+n_2$. The signal strength parameter is varied over $\delta\in \{0,1,2,3\}.$

\begin{table}[t]
\centering
\caption{Empirical size and power for the two-sample setting, together with average computation times in seconds and their standard errors in parentheses.}
\label{tab:sim_2sample}
\scriptsize
\begin{tabular}{llcccccc}
\toprule
$(n_1,n_2)$ & Setting & D3 & D3op & CQ & XY & OP & ES \\
\midrule
\multicolumn{8}{c}{Model 1: Empirical size ($\delta=0$)} \\ \midrule
$(75,75)$   &  & 0.053 & 0.056 & 0.052 & 0.024 & 0.049 & 0.045 \\
$(50,100)$  &  & 0.043 & 0.046 & 0.055 & 0.020 & 0.049 & 0.053 \\
\midrule
\multicolumn{8}{c}{Model 1: Empirical power} \\ \midrule
$(75,75)$   & $\delta=1$ & 0.112 & 0.101 & 0.105 & 0.077 & 0.069 & 0.119 \\
            & $\delta=2$ & 0.972 & 0.969 & 0.420 & 0.940 & 0.383 & 0.965 \\
            & $\delta=3$ & 1.000 & 1.000 & 0.915 & 1.000 & 0.926 & 1.000 \\
$(50,100)$  & $\delta=1$ & 0.073 & 0.099 & 0.100 & 0.061 & 0.065 & 0.116   \\
            & $\delta=2$ & 0.847 & 0.930 & 0.378 & 0.864 & 0.335 & 0.918   \\
            & $\delta=3$ & 1.000 & 1.000 & 0.866 & 1.000 & 0.869 & 1.000  \\
\midrule
\multicolumn{8}{c}{Model 2: Empirical size ($\delta=0$)} \\ \midrule
$(75,75)$   &  & 0.050 & 0.058 & 0.052 & 0.029 & 0.045 & 0.048 \\
$(50,100)$  &  & 0.050 & 0.058 & 0.043 & 0.024 & 0.044 & 0.048 \\
\midrule
\multicolumn{8}{c}{Model 2: Empirical power} \\ \midrule
$(75,75)$   & $\delta=1$ & 0.097 & 0.100 & 0.067 & 0.030 & 0.056 & 0.052 \\
            & $\delta=2$ & 0.952 & 0.950 & 0.148 & 0.074 & 0.207 & 0.108 \\
            & $\delta=3$ & 1.000 & 1.000 & 0.364 & 0.441 & 0.688 & 0.516 \\
$(50,100)$  & $\delta=1$ & 0.059 & 0.078 & 0.068 & 0.021 & 0.056 & 0.046 \\
            & $\delta=2$ & 0.771 & 0.876 & 0.153 & 0.057 & 0.200 & 0.096 \\
            & $\delta=3$ & 1.000 & 1.000 & 0.328 & 0.318 & 0.593 & 0.428 \\
\midrule
\multicolumn{8}{c}{Average computation time in seconds over 50 iterations} \\ \midrule
$(75,75)$   &  & 0.032 (0.008) & 0.301 (0.023) & 0.114 (0.013) & 1.461 (0.103) & 1.157 (0.045) & 4.143 (0.167) \\
$(50,100)$  &  & 0.027 (0.004) & 0.289 (0.008) & 0.111 (0.001) & 1.291 (0.076) & 1.101 (0.026) & 1.912 (0.098)  \\
\bottomrule
\end{tabular}
\end{table}
Table~\ref{tab:sim_2sample} reports the empirical size, power, and computation time of the two-sample procedures. Under the null hypothesis, D3 and D3op are well calibrated across covariance models and sample-size configurations, with rejection probabilities between $0.043$ and $0.058$. CQ, OP, and ES also remain close to the nominal level, whereas XY is conservative, with empirical sizes between $0.020$ and $0.029$.

Under the homoscedastic setting of Model 1, all procedures have relatively low power at the weakest signal level. The proposed procedures become competitive as the signal increases. At $\delta=2$, D3 and D3op attain powers of $0.972$ and $0.969$ in the balanced design, comparable to ES at $0.965$ and XY at $0.940$. Under sample-size imbalance, D3op has the highest power, $0.930$, followed by ES at $0.918$. At $\delta=3$, D3, D3op, XY, and ES attain power one, while CQ and OP remain less powerful.

The distinction between the procedures is more pronounced under the heteroscedastic setting of Model 2. At $\delta=2$, the powers of D3 and D3op are $0.952$ and $0.950$ in the balanced design and $0.771$ and $0.876$ in the unbalanced design; none of the competing procedures exceeds $0.207$ in these configurations. At $\delta=3$, both proposed procedures attain power one, whereas the strongest competing procedure, OP, has powers of $0.688$ and $0.593$ in the balanced and unbalanced designs, respectively. D3op is particularly advantageous over D3 under variance heterogeneity combined with sample-size imbalance, although both D3 and D3op perform similarly otherwise.

D3 is also the fastest procedure, requiring approximately $0.03$ seconds per replication. D3op requires approximately $0.30$ seconds, reflecting the cost of its data-adaptive penalty calibration, but remains substantially faster than XY, OP, and ES. Thus, D3 provides the most computationally efficient implementation, while D3op can yield an appreciable power gain in the unbalanced and heteroscedastic settings.

\subsection{Three-sample setting}

We extend the numerical analysis to the multi-sample case with three populations. We compare D3 and D3op with several existing methods. Implementations of CS \citep{Chakraborty2023} and HDT \citep{Li2023} are based on codes provided by the respective authors. Under the null hypothesis, we set $\mu_1 = \mu_2 = \mu_3 = 0$. Under the alternative, we take $\mu_1 = 0$, $\mu_2$ as defined above, and construct $\mu_3$ by randomly permuting the coordinates of $\mu_2$. This preserves the signal strength while introducing heterogeneity across populations. The covariance structure is kept identical across all three groups.
All the other settings, including data generation, sample sizes, and number of replications, remain the same as in the two-sample case. 

\begin{table}[t]
\centering
\caption{Empirical size and power for the three-sample setting, together with average computation times in seconds and their standard errors in parentheses.}
\label{tab:sim_3sample}
\scriptsize
\begin{tabular}{llcccc}
\toprule
$(n_1,n_2,n_3)$ & Setting & D3 & D3op & CS & HDT \\
\midrule
\multicolumn{6}{c}{Model 1: Empirical size ($\delta=0$)} \\ \midrule
$(75,75,75)$   &  & 0.055 & 0.050 & 0.020 & 0.049 \\
$(50,75,100)$  &  & 0.051 & 0.055 & 0.019 & 0.108 \\
\midrule
\multicolumn{6}{c}{Model 1: Empirical power} \\ \midrule
$(75,75,75)$   & $\delta=1$ & 0.094 & 0.101 & 0.067 & 0.090 \\
               & $\delta=2$ & 0.980 & 0.988 & 0.959 & 0.336 \\
               & $\delta=3$ & 1.000 & 1.000 & 1.000 & 0.859 \\
$(50,75,100)$  & $\delta=1$ & 0.080 & 0.073 & 0.042 & 0.134 \\
               & $\delta=2$ & 0.904 & 0.890 & 0.770 & 0.325 \\
               & $\delta=3$ & 1.000 & 1.000 & 1.000 & 0.694 \\
\midrule
\multicolumn{6}{c}{Model 2: Empirical size ($\delta=0$)} \\ \midrule
$(75,75,75)$   &  & 0.056 & 0.052 & 0.026 & 0.056 \\
$(50,75,100)$  &  & 0.054 & 0.053 & 0.018 & 0.100 \\
\midrule
\multicolumn{6}{c}{Model 2: Empirical power} \\ \midrule
$(75,75,75)$   & $\delta=1$ & 0.072 & 0.081 & 0.027 & 0.058 \\
               & $\delta=2$ & 0.929 & 0.941 & 0.052 & 0.120 \\
               & $\delta=3$ & 1.000 & 1.000 & 0.352 & 0.299 \\
$(50,75,100)$  & $\delta=1$ & 0.072 & 0.078 & 0.022 & 0.110 \\
               & $\delta=2$ & 0.840 & 0.817 & 0.033 & 0.167 \\
               & $\delta=3$ & 1.000 & 1.000 & 0.186 & 0.281 \\
\midrule
\multicolumn{6}{c}{Average computation time in seconds over 50 iterations} \\ \midrule
$(75,75,75)$   &  &  0.043 (0.005) & 1.020 (0.102) & 4.307 (0.175) & 0.028 (0.005)  \\
$(50,75,100)$  &  &  0.036 (0.002) & 0.991 (0.081) & 4.053 (0.172) & 0.013 (0.001)  \\
\bottomrule
\end{tabular}
\end{table}

Table~\ref{tab:sim_3sample} presents the results for the three-sample setting. D3 and D3op maintain rejection probabilities close to $0.05$ under both covariance models and both sample-size configurations. CS is conservative, with empirical sizes between $0.018$ and $0.026$. HDT is well calibrated in the balanced design but exhibits substantial size inflation under sample-size imbalance, with empirical sizes of $0.108$ and $0.100$ under Models 1 and 2, respectively.

Under the homoscedastic setting of Model 1, D3, D3op, and CS perform strongly at the moderate and large signal levels. At $\delta=2$, D3 and D3op attain powers of $0.980$ and $0.988$ in the balanced design and $0.904$ and $0.890$ in the unbalanced design. CS is also competitive, with corresponding powers of $0.959$ and $0.770$, whereas HDT is considerably less powerful. At $\delta=3$, D3, D3op, and CS attain power one. The relatively large rejection probabilities of HDT under weak signals in the unbalanced design should be interpreted in light of its inflated null rejection probability.

Under the heteroscedastic setting of Model 2, the proposed procedures clearly outperform the competitors. At $\delta=2$, D3 and D3op have powers of $0.929$ and $0.941$ in the balanced design and $0.840$ and $0.817$ in the unbalanced design, whereas neither CS nor HDT exceeds $0.167$. At $\delta=3$, both proposed procedures attain power one, while the powers of CS and HDT remain below $0.36$. There is no uniform power ordering between D3 and D3op: D3op has a modest advantage in the balanced design, whereas D3 is slightly more powerful at the moderate signal level under sample-size imbalance.

HDT is the fastest procedure, requiring $0.013$--$0.028$ seconds per replication, followed closely by D3 at $0.036$--$0.043$ seconds. D3op requires approximately one second, but remains substantially faster than CS, whose average computation time exceeds four seconds. Overall, D3 offers the strongest combination of speed, size control, and power, while D3op provides comparable statistical performance with the benefit of data-adaptive penalty selection.

\section{Real-data analysis} \label{sec:real}

We apply the proposed multi-sample tests to two breast-cancer gene-expression datasets, GSE1456 and GSE7390, from the Gene Expression Omnibus. These datasets are available in the \texttt{GEOquery} package in \texttt{R} \citep{OMLc2}. In GSE1456 (see \cite{GSE1456c1}), samples are grouped by Elston tumor grade into Grade 1 ($n_1=28$), Grade 2 ($n_2=58$), and Grade 3 ($n_3=61$). In GSE7390 (see \cite{GSE7390c1}), the corresponding group sizes are $n_1=30$, $n_2=83$, and $n_3=83$. Both datasets contain $p=22{,}283$ gene-expression measurements. We compare D3, D3op, CS (with $1{,}000$ resamples), and HDT.

All four procedures yield p-values below $0.001$ for both datasets, providing strong evidence against equality of the three mean vectors. Because these full-sample results do not distinguish between the procedures, we further examine their calibration and rejection stability using the resampling experiments reported in Table~\ref{tab:realdata}.

\begin{table}
\centering
\caption{Rejection proportions (with standard errors in parentheses) and computation times for different methods.}
\label{tab:realdata}
\scriptsize
\begin{tabular}{lcccc}
\toprule
 & D3 & D3op & CS & HDT \\
\midrule
\multicolumn{5}{c}{Rejection proportions over 500 random permutations} \\ \midrule
GSE1456 & 0.052 (0.010) & 0.054 (0.010) & 0.048 (0.009) & 1.000 (0.000) \\
GSE7390 & 0.060 (0.010) & 0.048 (0.009) & 0.046 (0.009) & 1.000 (0.000) \\
\midrule
\multicolumn{5}{c}{Rejection proportions over 500 bootstrap datasets} \\ \midrule
GSE1456 & 1.000 (0.000) & 1.000 (0.000) & 1.000 (0.000) & 1.000 (0.000) \\
GSE7390 & 1.000 (0.000) & 1.000 (0.000) & 1.000 (0.000) & 1.000 (0.000) \\
\midrule
\multicolumn{5}{c}{Average computation time in seconds over 50 iterations} \\ \midrule
GSE1456 & 0.172 (0.012) & 3.268 (0.115) & 10.849 (0.495) & 0.026 (0.001) \\
GSE7390 & 0.219 (0.017) & 4.433 (0.147) & 13.788 (0.356) & 0.027 (0.004) \\
\bottomrule
\end{tabular}
\end{table}

First, we generate $500$ random permutations of the group labels. This produces a permutation-null diagnostic that preserves the pooled covariate structure while removing its association with the observed labels. D3, D3op, and CS have rejection proportions close to $0.05$ for both datasets. In contrast, HDT rejects every permuted dataset. Its full-sample and resampling rejection results should therefore not be interpreted as evidence of high power in these applications. The permutation experiment is intended as a finite-sample calibration diagnostic rather than as an estimate of size over all heterogeneous distributions satisfying equality of means.

Second, we generate 500 bootstrap datasets by resampling independently within each group. D3, D3op, and CS reject all bootstrap samples from both datasets, indicating that the observed group separation is highly stable under within-group resampling. Because this experiment is saturated at the full sample sizes, we also consider smaller effective samples. For each retained proportion $l\in[0.25,1]$, we draw 300 subsamples independently across the three groups and report the resulting rejection proportions in Figure~\ref{fig:subsampling}.

\begin{figure}
    \centering
    \includegraphics[width=0.9\textwidth]{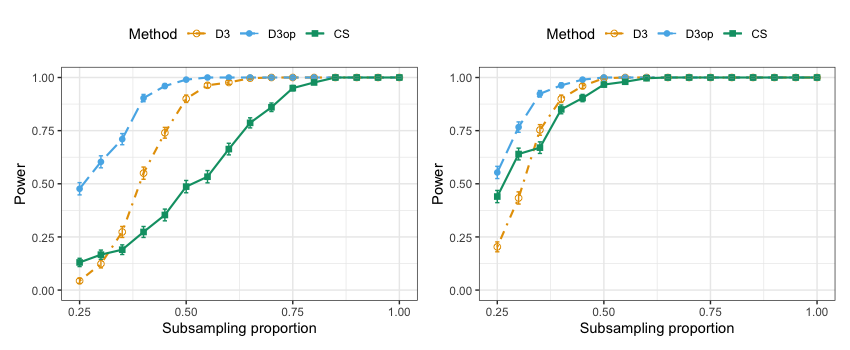}
    \caption{Rejection proportions over $300$ subsamples as a function of the proportion of observations retained from each group; vertical bars show Monte Carlo standard errors. The left panel corresponds to GSE1456 and the right panel to GSE7390.}
\label{fig:subsampling}
\end{figure} 

For GSE1456, D3op has the highest rejection proportion in the smaller-sample regime, particularly when no more than half of the observations are retained. D3 improves rapidly as the retained proportion increases, while CS requires larger subsamples to attain comparable rejection rates. For GSE7390, the differences are smaller: D3op performs best at the lowest retained proportions, but D3 becomes comparable once approximately $40\%$--$50\%$ of the observations are retained. All three procedures approach a rejection proportion of one as the retained sample size increases. Thus, the benefit of the data-adaptive penalty is most visible when the effective sample size is small, although its magnitude varies between datasets.

The computation times in Table~\ref{tab:realdata} reinforce the scalability of the proposed procedures. Among the methods exhibiting satisfactory permutation calibration, D3 is the fastest, requiring $0.172$ seconds for GSE1456 and $0.219$ seconds for GSE7390. D3op requires $3.268$ and $4.433$ seconds, respectively, but remains substantially faster than CS, which requires $10.849$ and $13.788$ seconds despite using only $1{,}000$ resamples. HDT is faster than all three procedures, but its rejection of every permuted dataset makes this computational advantage uninformative here. Overall, D3 provides the most computationally efficient analysis, whereas D3op can improve sensitivity when only a relatively small sample is available.

\section*{Funding}
The second author is partially supported by ANRF--MATRICS grant no. RD/0126-ANRF000-031. The third author is partially supported by the SERB project no. MTR/2023/000884.

\bibliographystyle{plainnat}
\bibliography{reference}

\section*{Appendix A: Preliminary lemmas} \label{sec: lemma}

\begin{lem}\label{kcvl2}
    Suppose $X_1, X_2,\dots,X_n$ are independent mean zero random vectors in $\mathcal{R}^p,$ for any  $p\geq 1$ such that $\|X_{ij}\|_{\zeta_\kappa} < \infty$ for all $1\leq i \leq n, 1\leq j \leq p$ and some $\kappa > 0.$ Additionally, $n^{-1} \sn \mathbb{E} (X_{ij}^2) < \infty$ for all $1\leq j \leq p,$ then there exists a universal constant $C$ that depends only on $\alpha$ such that,
    \begin{multline*}
        \mathbbm{P}\left(\underset{1\leq j \leq p}{\max} \big| \sn X_{ij} \big| \geq C \big( \sqrt{n(t+\log{p})} + (\log{n})^{1/\kappa} (t+\log{p})^{1/\kappa^*} \big) \right)\\ \leq 3 \exp{(-t)}\; \text{ for all } t \geq 0,
    \end{multline*}
    where $\kappa^* = \min\{1,\kappa\}$.
\end{lem}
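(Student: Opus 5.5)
This is a maximal concentration inequality for sums of independent sub-Weibull vectors. The plan is to combine a one-dimensional tail bound for each coordinate with a union bound over $1\le j\le p$.

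\emph{Step 1: one coordinate.} Fix $j$ and set $S_j=\sn X_{ij}$. For independent, mean-zero, sub-Weibull$(\kappa)$ summands, the known generalized Bernstein-type inequality of \cite{kuchi22} gives, for all $u\ge 0$,
\begin{equation*}
\mathbbm{P}\Big(|S_j|\ge C\big(\sqrt{\textstyle\sum_i \E X_{ij}^2}\,\sqrt{u}+\max_i\|X_{ij}\|_{\zeta_\kappa}(\log n)^{1/\kappa}\,u^{1/\kappa^*}\big)\Big)\le 2e^{-u}.
\end{equation*}
If I had to prove this directly, I would truncate each summand at level $M=\max_i\|X_{ij}\|_{\zeta_\kappa}\,(c\log n)^{1/\kappa}$ and handle the two pieces separately:
\begin{itemize}
\item For the bounded part, Bernstein's inequality produces the term $\sqrt{nu}$ together with a linear term $Mu$.
\item For the unbounded remainder, a tail estimate for $\max_i|X_{ij}|$ (a union bound over $n$ terms, which is the source of the $(\log n)^{1/\kappa}$ factor) combined with the Hoffmann-J{\o}rgensen inequality controls this part. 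For $\kappa<1$ this yields the exponent $u^{1/\kappa}$, which is why $\kappa^*=\min\{1,\kappa\}$ appears.
\end{itemize}
The truncation also creates a small centering bias. It is negligible because $\E|X_{ij}|\mathbbm{I}(|X_{ij}|>M)$ decays like $n^{-c}$.

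\emph{Step 2: absorb constants.} The hypotheses $n^{-1}\sn\E X_{ij}^2<\infty$ and $\|X_{ij}\|_{\zeta_\kappa}<\infty$ are read, as they are used in the paper, as uniform bounds. This gives $\sum_i\E X_{ij}^2\le Cn$ and $\max_{i,j}\|X_{ij}\|_{\zeta_\kappa}\le C$, and both bounds are absorbed into the universal constant $C$.

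\emph{Step 3: union bound.} Take $u=t+\log p$ in Step 1 and sum over the $p$ coordinates:
\begin{equation*}
p\cdot 2e^{-(t+\log p)}=2e^{-t}\le 3e^{-t}.
\end{equation*}
This gives the stated bound, with threshold $C\big(\sqrt{n(t+\log p)}+(\log n)^{1/\kappa}(t+\log p)^{1/\kappa^*}\big)$.

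The main obstacle is Step 1 in the heavy-tailed regime $\kappa<1$. There, the moment generating function of $X_{ij}$ need not exist, so a Chernoff argument fails. The truncation/Hoffmann-J{\o}rgensen route in Step 1 is what must be checked carefully to obtain exactly the $(\log n)^{1/\kappa}u^{1/\kappa^*}$ term. I would cite Theorem~3.1 of \cite{kuchi22} for this step rather than rederive it.
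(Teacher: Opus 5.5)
Your proof is correct, but it is organized differently from the paper's. The paper gives no argument of its own. It invokes Theorem~3.4 of \cite{kuchi22}, which is already a maximal inequality for independent mean-zero random vectors in $\mathbb{R}^p$. The maximum over coordinates, the $t+\log p$, the probability $3e^{-t}$ and the factor $(\log(2n))^{1/\kappa}$ are all built into that theorem. Its Gaussian term is governed by the variance proxy $\max_j n^{-1}\sum_i \E X_{ij}^2$, which is why the lemma carries a second-moment hypothesis. The $(\log n)^{1/\kappa}$ factor is essentially the price of that variance refinement.

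You instead reduce to a scalar inequality and take a union bound with $u=t+\log p$. This is more elementary and transparent, and it visibly uses no structure across coordinates. Once, as in your Step~2, the second-moment and $\zeta_\kappa$-norm bounds are treated as uniform constants, the variance refinement no longer matters. Your route applied to the actual Theorem~3.1 of \cite{kuchi22} then gives the threshold $C\big(\sqrt{n(t+\log p)}+(t+\log p)^{1/\kappa^*}\big)$ with probability $2e^{-t}$. This has no $(\log n)^{1/\kappa}$ factor at all, so it is at least as strong as the lemma for $n\ge 2$.

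The inequality you attribute to Theorem~3.1 in Step~1 is a hybrid and should be corrected.
\begin{itemize}
\item The Gaussian term of Theorem~3.1 is $\big(\sum_i\|X_{ij}\|_{\zeta_\kappa}^2\big)^{1/2}\sqrt{u}$, not the variance.
\item For $\kappa\le 1$, its second term is $\max_i\|X_{ij}\|_{\zeta_\kappa}u^{1/\kappa}$ with no $\log n$.
\item For $\kappa>1$, its second term involves the $\ell_{\kappa/(\kappa-1)}$-norm of the weights and grows with $n$. The clean way to get your $u^{1/\kappa^*}=u$ is to apply the $\kappa=1$ (sub-exponential Bernstein) case, using $\|X\|_{\zeta_1}\lesssim\|X\|_{\zeta_\kappa}$.
\end{itemize}
The variance-plus-$(\log n)^{1/\kappa}$ form you wrote is essentially Theorem~3.4 at $p=1$, with $3e^{-u}$ and $\log(2n)$. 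Either version suffices after Step~2.

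Finally, with $\log n$ in place of $\log(2n)$, both your Step~1 and the lemma as stated fail at $n=1$, where the second term vanishes. This affects only that trivial case.
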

\begin{proof}
This lemma is a direct consequence of Theorem 3.4 in \cite{kuchi22}.
\end{proof}

\begin{lem} \label{kcvl3}
    If $X_1, X_2,\dots,X_n$ are random variables (possibly dependent) with $\|X_i\|_{\zeta_{\kappa_i}} < \infty$ for some $\kappa_i>0$ and for all $1\leq i \leq n,$ then
    \begin{equation*}
        \left\| \prod_{i=1}^n X_i \right\|_{\zeta_\lambda} \leq \prod_{i=1}^n \big\| X_i \big\|_{\zeta_{\kappa_i}}, \text{ where } \frac{1}{\lambda} := \sn \frac{1}{\kappa_i}.
    \end{equation*}
\end{lem}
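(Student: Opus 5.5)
We prove the sub-Weibull product inequality of Lemma~\ref{kcvl3} by reducing it to a moment or tail bound for each factor, and then recombining these through a generalized H\"older/Young inequality. We use the standard definition
\[
\|X\|_{\zeta_\kappa}=\inf\{\eta>0:\ \mathbb{E}\exp\big((|X|/\eta)^\kappa\big)\le 2\}
\]
(the Orlicz quasi-norm of \cite{kuchi22}). By homogeneity, $\|cX\|_{\zeta_\kappa}=|c|\,\|X\|_{\zeta_\kappa}$. If some $\|X_i\|_{\zeta_{\kappa_i}}=0$, then $X_i=0$ a.s. and there is nothing to prove. Otherwise we may normalize $\eta_i:=\|X_i\|_{\zeta_{\kappa_i}}=1$ for every $i$, because the infimum in the definition is attained by monotone convergence. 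It then suffices to show
\[
\mathbb{E}\exp\Big(\prod_{i=1}^n|X_i|^\lambda\Big)\le 2 .
\]
No independence will be used, so the dependence among the $X_i$ is irrelevant.

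The key step is the weighted AM--GM (Young) inequality. Set $w_i=\lambda/\kappa_i$, so that $w_i\ge 0$ and $\sum_i w_i=1$ by the definition of $\lambda$. Then
\[
\prod_{i=1}^n|X_i|^\lambda=\prod_{i=1}^n\big(|X_i|^{\kappa_i}\big)^{w_i}\le \sum_{i=1}^n w_i|X_i|^{\kappa_i}.
\]
Since $\exp$ is convex, this gives
\[
\exp\Big(\sum_i w_i|X_i|^{\kappa_i}\Big)\le \sum_i w_i\exp\big(|X_i|^{\kappa_i}\big).
\]
Taking expectations and using $\mathbb{E}\exp(|X_i|^{\kappa_i})\le 2$ for each $i$, we obtain
\[
\mathbb{E}\exp\Big(\prod_i|X_i|^\lambda\Big)\le\sum_i w_i\cdot 2=2 .
\]
Hence $\|\prod_i X_i\|_{\zeta_\lambda}\le 1=\prod_i\eta_i$. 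Undoing the normalization by homogeneity, which applies simultaneously to the product and to each factor, yields the claim.

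The only delicate point is the normalization step. One must check that $\mathbb{E}\exp((|X_i|/\eta_i)^{\kappa_i})\le 2$ holds at $\eta=\eta_i$ itself and not merely for $\eta>\eta_i$; this follows from monotone convergence as $\eta\downarrow\eta_i$. Alternatively, we can avoid the issue by running the argument with $\eta_i+\epsilon$ and letting $\epsilon\to0$. If the paper's $\|\cdot\|_{\zeta_\kappa}$ is instead defined through moments ($\sup_q q^{-1/\kappa}\|X\|_q$), the same AM--GM step still gives the result, but only up to a constant. In that case we would convert to the Orlicz form, on which the sharp constant-free inequality above rests.
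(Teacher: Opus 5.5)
Your proposal is correct and is essentially the argument behind the paper's proof, which only cites Proposition D.2 of \cite{kuchi22}. That result is proved the same way: normalize each factor, apply Young's (weighted AM--GM) inequality with weights $\lambda/\kappa_i$, and use monotonicity and convexity of $\exp$. Your handling of the degenerate case $\|X_i\|_{\zeta_{\kappa_i}}=0$ and of attainment of the infimum is sound.
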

\begin{proof}
This lemma is proved in Proposition D.2 in \cite{kuchi22}.
\end{proof}

\begin{lem} \label{vscnew}
(VSC of Logistic Lasso) Define $\tilde{\bm{u}}_n = \sqrt{n} (\tilde{\bm{\beta}}_n - \bm{\beta}_n)$ where $\tilde{\bm{\beta}}_n$ is the Logistic Lasso estimator as defined in Section~2 of the main manuscript. Also assume that $\mathcal{A}_n= \{j:\beta_{j}\neq 0\}= \{0, \dots, p_0\}$ is the set of relevant covariates and $\tilde{\mathcal{A}}_n= \{j:\tilde{\beta}_{j}\neq 0\}$ is the estimator of it based on $\tilde{\bm{\beta}}_n$. Then under the assumptions (A.1)--(A.4), we have as $n\rightarrow \infty$, $$\mathbb{P}\Big(\tilde{\mathcal{A}}_n = \{0,\dots, p_0\}\; \text{and}\; \tilde{\bm{u}}_n = \big(\big(\tilde{\bm{u}}_n^{\tilde{\mathcal{A}}_n}\big)^\top, \bm{0}^\top\big)^\top\; \text{with}\; \|\tilde{\bm{u}}_n^{(1)\prime}\|_\infty \leq C\dfrac{\lambda_{n}}{\sqrt{n}} n^{a}\Big) \rightarrow 1,$$ for some constant $C\geq 1$.
\end{lem}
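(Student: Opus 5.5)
We would use the primal--dual witness construction of \cite{Wainwright2009}, adapted to the logistic loss as in \cite{lahiri2021}. Write the scaled objective in $\bm{u}=\sqrt{n}(\bm t-\bm\beta)$:
\[
V_n(\bm u)=\sum_{i=1}^n\Big[-z_i\dot W_i^\top\bm u/\sqrt n+\log\Big(n_1+n_2e^{\dot W_i^\top(\bm\beta+\bm u/\sqrt n)}\Big)-\log\Big(n_1+n_2e^{\dot W_i^\top\bm\beta}\Big)\Big]+\lambda_n\sum_j\big(|\beta_j+u_j/\sqrt n|-|\beta_j|\big),
\]
so that $\tilde{\bm u}_n$ minimizes $V_n$. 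First, we solve the restricted problem on $\mathcal{A}_n=\{0,\dots,p_0\}$, setting $\bm u^{(2)}=\bm 0$. Second, we verify that this restricted solution $\check{\bm u}$ has no sign change on $\mathcal A_n$. Third, we check the strict dual feasibility (KKT) condition on $\mathcal A_n^c$. Convexity of $V_n$, together with strict dual feasibility, then gives $\tilde{\bm u}_n=(\check{\bm u}^\top,\bm 0^\top)^\top$ and $\tilde{\mathcal A}_n=\mathcal A_n$ with probability tending to one.

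\textbf{Gradient and Hessian.} The gradient at $\bm u=0$ is $-n^{-1/2}\sum_i(z_i-\hat p_i)\dot W_i$, where $\hat p_i$ uses $n_k$ in place of $n\pi_k$. We split it into $-n^{-1/2}\sum_i(z_i-p(\bm\beta|W_i))\dot W_i$ plus a prior-mismatch term. The first summand has mean zero by \eqref{eqn:pdef}, and its components are sub-Weibull because $|z_i-p|\le1$. Lemma~\ref{kcvl2} with $t=\log n$ therefore bounds its sup norm by $O(\sqrt{\log(np)})$, plus a smaller-order term controlled by (A.4)(iii). The mismatch term is $O(n^{-1/2}|n_k-n\pi_k|\cdot\max_j n^{-1}\sum_i|W_{ij}|)$, which is $o(n^{-a_1})$ by (A.3). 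For the Hessian, a Taylor expansion gives $\bm L_n$ plus a remainder bounded through third derivatives. We control that remainder using (A.2)(iii) and the bound $\|\bm u\|_\infty\lesssim \lambda_n n^{a_1}/\sqrt n$, which makes $\|\bm u\|/\sqrt n\lesssim p_0^{1/2}\lambda_nn^{a_1}/n=o(1)$ by (A.4)(ii). We replace $\bm L_{11,n}$ by $\mathbb E\bm L_{11,n}$ using Lemmas~\ref{kcvl2}--\ref{kcvl3}: products of two sub-Weibull variables are sub-Weibull with parameter $\kappa/2$, and entrywise concentration summed over the $p_0^2$ block gives $\|\bm L_{11,n}-\mathbb E\bm L_{11,n}\|_\infty=O_P(p_0\sqrt{\log(np_0)/n})$. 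Combined with (A.2)(i), a Neumann-series argument then gives $\|\bm L_{11,n}^{-1}\|_\infty\le 2\delta_1^{-1}n^{a_1}$ with probability tending to one, since $p_0n^{a_1}\sqrt{\log(np_0)/n}=o(1)$ by (A.4).

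\textbf{Restricted solution and sign consistency.} On $\mathcal A_n$ the stationarity equation reads
\[
\check{\bm u}=-\bm L_{11,n}^{-1}\Big(\bm g^{(1)}+\tfrac{\lambda_n}{\sqrt n}\,sgn(\bm\beta^{(1)})\Big)+\text{remainder}.
\]
Existence of a solution follows from a Brouwer fixed-point argument on the $\ell_\infty$ ball of radius $C\lambda_nn^{a_1}/\sqrt n$, using the Hessian control above. The resulting bound $\|\check{\bm u}\|_\infty\le C\lambda_nn^{a_1}/\sqrt n$ is exactly the claimed rate. Because $\lambda_n/\sqrt n\gtrsim\sqrt{\log(np)}$ dominates $\|\bm g\|_\infty$, the penalty term is the leading term. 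The beta-min condition $\min_j|\beta_j|\ge C_2\lambda_nn^{a_1}/n$ (with the constants matched) then gives $|\check u_j|/\sqrt n<|\beta_j|$, so no sign changes.

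\textbf{Dual feasibility: the main obstacle.} We must show
\[
\Big\|\bm g^{(2)}+\bm L_{21,n}\check{\bm u}+\text{rem}\Big\|_\infty<\frac{\lambda_n}{\sqrt n}.
\]
Substituting $\check{\bm u}$ gives the term $\frac{\lambda_n}{\sqrt n}\bm L_{21,n}\bm L_{11,n}^{-1}sgn$. After replacing $\bm L$ by its expectation, this term is at most $(1-\tau_1)\lambda_n/\sqrt n$ by (A.1). The remaining terms must be $o(\lambda_n/\sqrt n)$ uniformly over $p$ coordinates, and there are three of them. The first is $\bm g^{(2)}$, handled by Lemma~\ref{kcvl2}. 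The second is the replacement error $(\bm L_{21,n}-\mathbb E\bm L_{21,n})(\mathbb E\bm L_{11,n})^{-1}$, whose entrywise deviations are $O_P(\sqrt{\log(np)/n})$; multiplied by $p_0 n^{a_1}$ this is $o(1)$ under (A.4). The third is the Taylor remainder, of order $\sqrt n\,\|\check{\bm u}/\sqrt n\|^2$ times third moments, which is $O(p_0^2\lambda_n^2n^{2a_1}/n^{3/2})$. Its ratio to $\lambda_n/\sqrt n$ is $p_0^2n^{2a_1}\lambda_n/n=o(1)$ by (A.4)(ii). Getting this uniform-in-$j$ remainder bound right with only sub-Weibull tails, using (A.4)(iii) to control maxima of products, is the step we expect to need the most care.
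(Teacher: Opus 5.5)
Your plan follows the paper's proof of Lemma~\ref{vscnew} step for step. The paper also splits the KKT system into the active-set equation \eqref{5} and the inactive-set inequalities \eqref{6}, Taylor-expands \eqref{5} into the fixed-point form \eqref{7}, and applies Brouwer on an $\ell_\infty$ ball of radius $C\lambda_n n^{a_1}/\sqrt n$. It then recovers signs from the beta-min condition and checks dual feasibility in the form \eqref{17}, with (A.1) controlling the sign term. Fixing $sgn(\bm\beta^{(1)})$ inside the Brouwer map and closing with convexity plus strict dual feasibility is, if anything, a little tidier than the paper, which only exhibits a KKT point of the right form.

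The one real omission is in your dual-feasibility accounting. Substituting $\check{\bm u}=-\bm L_{11,n}^{-1}\big(\bm g^{(1)}+\frac{\lambda_n}{\sqrt n}sgn(\bm\beta^{(1)})\big)+\text{rem}$ into $\bm L_{21,n}\check{\bm u}$ also produces the noise-propagation term $\bm L_{21,n}\bm L_{11,n}^{-1}\bm g^{(1)}$ (the paper's term $II$), which is not among your three remaining terms. Since the rows of $\bm L_{21,n}$ have $\ell_1$-norm $O(p_0)$, this term is $O(p_0 n^{a_1}\sqrt{\log(np_0)})$. Controlling it is precisely the role of the second entry $p_0 n^{a_1}\sqrt{\log(np_0)}$ in (A.4)(i); your restricted-solution step only needs $\lambda_n/\sqrt n\gtrsim\sqrt{\log(np_0)}$.

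Moreover, this term and $\|\bm g^{(2)}\|_\infty\lesssim\sqrt{\log(np)}$ are only $O(\lambda_n/\sqrt n)$ under (A.4)(i), not $o(\lambda_n/\sqrt n)$ as you state. They have to be absorbed into the slack $\tau_1\lambda_n/\sqrt n$ from (A.1), which requires the multiplier in (A.4)(i) to be large relative to $\tau_1^{-1}$. The paper also leaves this constant matching implicit, as it does for the beta-min constant.

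Two smaller points:
\begin{itemize}
\item Replacing $\bm L$ by its expectation in the sign term also leaves $\mathbb E\bm L_{21,n}\big(\bm L_{11,n}^{-1}-(\mathbb E\bm L_{11,n})^{-1}\big)sgn(\bm\beta^{(1)})$. This is $O\big(p_0^2 n^{2a_1}\sqrt{\log(np_0)/n}\big)=o(1)$ by (A.4)(i)--(ii).
\item The propagated remainder $\bm L_{21,n}\bm L_{11,n}^{-1}\bm\zeta_n^{(1)}$ carries an extra factor $n^{a_1}$. Its ratio to $\lambda_n/\sqrt n$ is therefore $p_0^2 n^{3a_1}\lambda_n/n$, which is exactly what (A.4)(ii) controls.
\end{itemize}
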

\begin{proof} The Logistic Lasso estimator is defined as 
\begin{multline*}
    \tilde{\bm{\beta}}_n = \underset{(t_0, \bm{t}^\top)^\top \in \mathcal{R}^{(p+1)}}{\arg\min} \Big[-\sum_{i=1}^n z_i (t_0 + W_i^\top \bm{t})
    + \sum_{i=1}^{n} \log (n_1+n_2\exp(t_0 + W_i^\top\bm{t})) +\lambda_n \sum_{j=0}^{p} |t_j|\Big]. 
\end{multline*} 
Define, $\tilde{\bm{u}}_n = \sqrt{n} (\tilde{\bm{\beta}}_n - \bm{\beta}_n)$. Then, we get (see, e.g., \citet[Section~4.1.3]{boyd2004} for change of variables)
\begin{multline}  \label{2}
\tilde{\bm{u}}_n 
= \underset{\bm{u}\in \mathcal{R}^{(p+1)}}{\arg\min} \left[-\sum_{i=1}^{n}z_i\dot{W}_i^\top\frac{\bm{u}}{\sqrt{n}}+ \sum_{i=1}^n \log \left(n_1+n_2\exp\left(\dot{W}_i^\top\left(\frac{\bm{u}}{\sqrt{n}} + \bm{\beta_n}\right)\right)\right)\right. \\+\left.\lambda_n\sum_{j=0}^{p}\left(\left|\frac{u_j}{\sqrt{n}} + \beta_{nj}\right|-|\beta_{nj}|\right)\right].
\end{multline}
For $j=1,\dots,p,$ \eqref{2} is equivalent to the KKT conditions
\begin{multline} \label{3}
    -\dfrac{1}{\sqrt{n}}\sum_{i=1}^{n}{z_{i}\dot{W}_{ij}+ \sum_{i=1}^{n} \dfrac{\dot{W}_{ij}}{\sqrt{n}}\dfrac{n_2\exp\left(\dot{W}_i^\top\left(\frac{\bm{u}}{\sqrt{n}} + \bm{\beta_n}\right)\right)}{n_1+n_2\exp\left(\dot{W}_i^\top\left(\frac{\bm{u}}{\sqrt{n}} + \bm{\beta_n}\right)\right)}}\\=-\dfrac{\lambda_{n}}{\sqrt{n}} sgn\left(\frac{u_j}{\sqrt{n}} + \beta_{nj}\right), \text{if } \frac{u_j}{\sqrt{n}} + \beta_{nj} \neq 0,
    \end{multline}
    \text{and  }
    \begin{multline} \label{4}
        -\frac{\lambda_{n}}{\sqrt{n}}\leq -\dfrac{1}{\sqrt{n}}\sum_{i=1}^{n}{z_{i}\dot{W}_{ij}+ \sum_{i=1}^{n} \dfrac{\dot{W}_{ij}}{\sqrt{n}}\dfrac{n_2\exp\left(\dot{W}_i^\top\left(\frac{\bm{u}}{\sqrt{n}} + \bm{\beta_n}\right)\right)}{n_1+n_2\exp\left(\dot{W}_i^\top\left(\frac{\bm{u}}{\sqrt{n}} + \bm{\beta_n}\right)\right)}}\\ \leq \dfrac{\lambda_{n}}{\sqrt{n}},\text{if } \frac{u_j}{\sqrt{n}} + \beta_{nj} = 0.
    \end{multline}

Note that, VSC holds if $\tilde{\mathcal{A}}_n = \mathcal{A}_n.$ Without loss of generality, let $\mathcal{A}_n = \{0,1,\dots,p_0\}$. Then VSC holds if $\tilde{\mathcal{A}}_n = \{0,1,\dots,p_0\}$. This is equivalent to $\Big(\frac{\tilde{u}_j}{\sqrt{n}} + \beta_{nj}\Big) \neq 0, \text{for } j=0,1,\dots,p_0$ and $\Big(\frac{\tilde{u}_j}{\sqrt{n}} + \beta_{nj}\Big) = 0, \text{ for } j=(p_0+1),\dots,p,$ where $\tilde{\bm{u}}_n = \sqrt{n} (\tilde{\bm{\beta}}_n - \bm{\beta}_n)$ is the solution of \eqref{3}. Thus, due to equation \eqref{3} and \eqref{4},
\begin{equation} \label{5}
    -\dfrac{1}{\sqrt{n}}\sum_{i=1}^{n}{z_{i}\dot{W}_{i}^{(1)}+ \sum_{i=1}^{n} \dfrac{\dot{W}_{i}^{(1)}}{\sqrt{n}}\dfrac{n_2\exp\left(\dot{W}_i^{(1)\prime}\left(\frac{\bm{u}_n^{(1)}}{\sqrt{n}} + \bm{\beta_n}^{(1)}\right)\right)}{n_1+n_2\exp\left(\dot{W}_i^{(1)\prime}\left(\frac{\bm{u}_n^{(1)}}{\sqrt{n}} + \bm{\beta_n}^{(1)}\right)\right)}}=-\dfrac{\lambda_{n}}{\sqrt{n}} \hat{\bm{s}}_n^{(1)},
    \end{equation}
    \text{and  }
    \begin{multline} \label{6}
        -\frac{\lambda_{n}}{\sqrt{n}}\leq -\dfrac{1}{\sqrt{n}}\sum_{i=1}^{n}{z_{i}\dot{W}_{ij}+ \sum_{i=1}^{n} \dfrac{\dot{W}_{ij}}{\sqrt{n}}\dfrac{n_2\exp\left(\dot{W}_i^{(1)\prime}\left(\frac{\bm{u}_n^{(1)}}{\sqrt{n}} + \bm{\beta_n}^{(1)}\right)\right)}{n_1+n_2\exp\left(\dot{W}_i^{(1)\prime}\left(\frac{\bm{u}_n^{(1)}}{\sqrt{n}} + \bm{\beta_n}^{(1)}\right)\right)}}\\ \leq \dfrac{\lambda_{n}}{\sqrt{n}},\;j=(p_0+1),\dots,p,
    \end{multline}
    where $\hat{s}_{nj} = sgn\left(\frac{u_j}{\sqrt{n}} + \beta_{nj}\right),\;j=0,\dots,p_0.$ Now, applying Taylor's theorem on \eqref{5} we have,
    \begin{multline*}
        -\sum_{i=1}^{n} \left( z_{i} - \dfrac{n_2\exp\left(\dot{W}_i^{(1)\prime}\bm{\beta_n}^{(1)}\right)}{n_1+n_2\exp\left(\dot{W}_i^{(1)\prime} \bm{\beta_n}^{(1)}\right)}\right) \dfrac{\dot{W}_{i}^{(1)}}{\sqrt{n}}\\
        + \left[ \sn \dot{W}_i^{(1)}\dot{W}_i^{(1)\prime} \dfrac{n_1 n_2\exp\left(\dot{W}_i^{(1)\prime}\bm{\beta_n}^{(1)}\right)}{\left(n_1+n_2\exp\left(\dot{W}_i^{(1)\prime} \bm{\beta_n}^{(1)}\right)\right)^2} \right] \frac{\bm{u}_n^{(1)}}{\sqrt{n}}\\ + \frac{1}{2} \sn \frac{n_1 n_2\exp(\xi_i)(n_1-n_2\exp(\xi_i))}{(n_1+n_2\exp(\xi_i))^3} \left(\dot{W}_i^{(1)\prime}\frac{\bm{u}_n^{(1)}}{\sqrt{n}}\right)^2  \dfrac{\dot{W}_{i}^{(1)}}{\sqrt{n}} = -\frac{\lambda_n}{\sqrt{n}} \hat{\bm{s}}_n^{(1)},
    \end{multline*}
where $\xi_i$ is on the line joining $\dot{W}_i^{(1)\prime}\bm{\beta_n}^{(1)}$ and $\dot{W}_i^{(1)\prime}\left(\frac{\bm{u}_n^{(1)}}{\sqrt{n}} + \bm{\beta_n}^{(1)}\right),\; i=1,\dots n.$
Solving for $\bm{u}_{n}^\on,$ we get
\begin{equation} \label{7}
    \bm{u}_{n}^\on = \tilde{\bm{L}}_{11,n}^{-1} \left[ \bm\Lambda_n^\on - \bm\zeta_n^\on -\frac{\lambda_n}{\sqrt{n}} \hat{\bm{s}}_n^{(1)} \right] = f(\bm{u}_{n}^\on),
\end{equation}
where,
\begin{align*}
&\tilde{\bm{L}}_{11,n} = n^{-1}\sn \dot{W}_i^{(1)}\dot{W}_i^{(1)\prime} \hat{p}(\bm\beta|\dot{W}_i^\on) (1-\hat{p}(\bm\beta|\dot{W}_i^\on)),\\
   &\bm\Lambda_n^\on = \frac{1}{\sqrt{n}} \sum_{i=1}^{n} \left( z_{i} - \hat{p}(\bm\beta|\dot{W}_i^\on) \right) \dot{W}_{i}^{(1)},\\
    &\bm\zeta_n^\on = \frac{1}{2n^{3/2}} \sn h(\xi_i) \left(\dot{W}_i^{(1)\prime}\bm{u_n}^{(1)}\right)^2 \dot{W}_{i}^{(1)},\\
    \text{and } &h(\xi_i) = \frac{n_1 n_2\exp(\xi_i)(n_1-n_2\exp(\xi_i))}{(n_1+n_2\exp(\xi_i))^3}.
\end{align*}
Next, We will determine stochastic bounds for each of the quantities separately. First, we determine the stochastic bound for the operator norm of the random matrix $\tilde{\bm{L}}_{11,n}^{-1}$. Using triangle inequality, we can write
\begin{equation} \label{7.1}
    \begin{split}
        \|\tilde{\bm{L}}_{11,n}^{-1}\|_\infty & \leq \|\tilde{\bm{L}}_{11,n}^{-1}  - (\mathbb{E}\tilde{\bm{L}}_{11,n})^{-1} \|_\infty + \| (\mathbb{E}\bm{L}_{11,n})^{-1} \|_\infty \\& \hspace{5cm}+ \|(\mathbb{E}\tilde{\bm{L}}_{11,n})^{-1} - (\mathbb{E}\bm{L}_{11,n})^{-1} \|_\infty  \\
        & \leq \| \tilde{\bm{L}}_{11,n}^{-1} \| _\infty\|\tilde{\bm{L}}_{11,n}  - (\mathbb{E}\tilde{\bm{L}}_{11,n}) \|_\infty \|(\mathbb{E}\tilde{\bm{L}}_{11,n})^{-1}\|_\infty + \| (\mathbb{E}\bm{L}_{11,n})^{-1} \|_\infty \\& \qquad \qquad+ \| (\mathbb{E}\tilde{\bm{L}}_{11,n})^{-1} \|_\infty \|(\mathbb{E}\tilde{\bm{L}}_{11,n}) - (\mathbb{E}\bm{L}_{11,n}) \|_\infty \| (\mathbb{E}\bm{L}_{11,n})^{-1} \|_\infty
    \end{split}
\end{equation}
From (A.2)(i), we have $\| (\mathbb{E}\bm{L}_{11,n})^{-1} \|_\infty \leq C n^{a_1}.$ To get the bound for \eqref{7.1}, we get with probability tending to one,
\begin{equation} \label{8.1}
\begin{split}
    \|\mathbb{E}\tilde{\bm{L}}_{11,n} - \mathbb{E}\bm{L}_{11,n}\|_\infty &\leq C p_0 \;\bigg|\frac{n_1}{n_2} - \frac{\pi_1}{\pi_2}\bigg|\; \max_{0\leq l,l'\leq p_0} \frac{1}{n} \sn \mathbb{E} \Big| \dot{W}_{il} \dot{W}_{il'}\Big|\\
    & = o(p_0 \;n^{-1/2-{a_1}}),
\end{split}
\end{equation}
where the second inequality follows from Taylor's theorem and the final bound follows from (A.3). 
From \eqref{8.1}, using Lemma 4.2 in \cite{lahiri2003} and assumption (A.2)(i), we can say $(\mathbb{E}\tilde{\bm{L}}_{11,n})^{-1}$ exists and with probability tending to one,
\begin{equation}\label{8.2}
    \begin{split}
    \|(\mathbb{E}\tilde{\bm{L}}_{11,n})^{-1}\|_\infty \leq Cn^{a_1}, \text{  and  }
    \|(\mathbb{E}\tilde{\bm{L}}_{11,n})^{-1} - (\mathbb{E}{\bm{L}}_{11,n})^{-1}\|_\infty \leq C \sqrt{p_0}\;n^{-1/2+{a_1}}.
\end{split}
\end{equation}
Furthermore, we can write
\begin{align*}
    \| \tilde{\bm{L}}_{11,n} - \mathbb{E}(\tilde{\bm{L}}_{11,n}) \|_\infty
    \leq p_0\; \underset{0\leq l, \leq p_0}{\max} |V_n^{(l,l')}| .
\end{align*}
where for $0\leq l, \leq p_0,$
\begin{multline*}
    V_n^{(l,l')} = \frac{1}{n} \sn \left\{ \dot{W}_{il} \dot{W}_{il'}\;  \hat{p}(\bm\beta|\dot{W}_i^\on) (1- \hat{p}(\bm\beta|\dot{W}_i^\on)) \right. \\ \left. - \mathbb{E} (\dot{W}_{il} \dot{W}_{il'}\;  \hat{p}(\bm\beta|\dot{W}_i^\on) (1- \hat{p}(\bm\beta|\dot{W}_i^\on))) \right\}.
\end{multline*}
Applying Lemma \ref{kcvl2} and Lemma \ref{kcvl3} together with assumption (A.4) we get with probability tending to one, 
\begin{align*}
    \underset{0\leq l,l' \leq p_0}{\max} |V_n^{(l,l')}| \leq C \sqrt{{\frac{\log(np_0)}{n}}}.
\end{align*}
Therefore, with probability tending to one,
\begin{equation} \label{8}
    \| \tilde{\bm{L}}_{11,n} - \mathbb{E}(\tilde{\bm{L}}_{11,n}) \|_\infty \leq C p_0 \sqrt{\frac{\log(np_0)}{n}}.
\end{equation}

\noindent By similar argument as in \eqref{8.2}, using assumption (A.2)(i) together with \eqref{8.2} and \eqref{8}, we get with probability tending to one,
\begin{equation} \label{9}
\begin{split}
    \| \tilde{\bm{L}}_{11,n}^{-1} \|_\infty \leq C n^{{a_1}}, \text{  and  }
    \| \tilde{\bm{L}}_{11,n}^{-1} -  (\mathbb{E}\tilde{\bm{L}}_{11,n})^{-1}\|_\infty \leq C n^{2a} p_0 \sqrt{\frac{\log(np_0)}{n}}.
\end{split}
\end{equation}
Towards obtaining a stochastic bound for $\bm\Lambda_n^{(1)}$, define 
\begin{align*}
    \bm\Lambda_n^{(2)} &= \frac{1}{\sqrt{n}}\sn\left( z_{i} - p(\bm\beta|\dot{W}_i^\on) \right) \dot{W}_{i}^\on,\\
    \label{9.5} \bm\Lambda_n^{(3)} &= \frac{1}{\sqrt{n}}\sn\left( \hat{p}(\bm\beta|\dot{W}_i^\on) - p(\bm\beta|\dot{W}_i^\on) \right) \dot{W}_{i}^\on.
\end{align*}
Applying Lemma \ref{kcvl2} together with the assumptions (A.4) and (A.3), with probability tending to one, we get
\begin{equation} \label{9.2}
    \| \bm\Lambda_n^{(2)} \|_\infty \leq C \sqrt{ \log{(np_0)}},
\end{equation}
and
\begin{equation} \label{9.25} 
\begin{split}
    \|\bm\Lambda_n^{(3)}\|_\infty &\leq C \bigg|\frac{n_1}{n_2} - \frac{\pi_1}{\pi_2}\bigg| \underset{0\leq l \leq p_0}{\max} \Bigg\{ \Big|\frac{1}{\sqrt{n}} \sn \Big(| \dot{W}_{il} | - \mathbb{E}|\dot{W}_{il} | \Big)\Big| \\& \hspace{7.5cm} + \frac{1}{\sqrt{n}} \sn \mathbb{E}|\dot{W}_{il} | \Big)\Bigg\}\\& = o(n^{-a_1}).
\end{split}
\end{equation}
Therefore, from \eqref{9.2} and \eqref{9.25}, we get with probability tending to one,
\begin{equation} \label{10}
   \| \bm\Lambda_n^{(1)} \|_\infty \leq \| \bm\Lambda_n^{(2)} \|_\infty + \| \bm\Lambda_n^{(3)} \|_\infty \leq C \sqrt{\log{(np_0)}}.
\end{equation}
As $\| \tilde{\bm{L}}_{11,n}^{-1}\bm\Lambda_n^\on \|_\infty \leq \|\tilde{\bm{L}}_{11,n}^{-1} \|_\infty \| \bm\Lambda_n^\on \|_\infty,$ we have from \eqref{9} and \eqref{10}
\begin{equation} \label{11}
    \| \tilde{\bm{L}}_{11,n}^{-1}\bm\Lambda_n^\on \|_\infty \leq C n^{a_1} \sqrt{\log{(np_0)}},
\end{equation}
with probability tending to one. Next, we bound the term $\bm\zeta_n^\on.$ Note that, 
\begin{align}
    \|\bm\zeta_n^\on\|_\infty &\leq \frac{1}{2n^{3/2}}  \max_{0\leq j\leq p_0} \Bigg| \sn h(\xi_i) \left(\dot{W}_i^{(1)\prime}\bm{u_n}^{(1)}\right)^2 \dot{W}_{ij}\Bigg| \nonumber\\
    &\leq \frac{1}{n^{3/2}} \max_{0\leq j\leq p_0} \sn \Big|\dot{W}_{ij}\Big| \Big(\dot{W}_i^{(1)\prime}\bm{u_n}^{(1)}\Big)^2 \nonumber\\
    & \leq \frac{\|\bm{u_n}^{(1)}\|^2 }{n^{3/2}} \max_{0\leq j\leq p_0} \bigg\| \sn \Big|\dot{W}_{ij}\Big| \dot{W}_i^{(1)} \dot{W}_i^{(1)\prime}\bigg\| \nonumber\\
    & \leq \frac{\|\bm{u_n}^{(1)}\|^2 }{n^{3/2}} \max_{0\leq j\leq p_0} \bigg\| \sn\bigg\{ \Big|\dot{W}_{ij}\Big| \dot{W}_i^{(1)} \dot{W}_i^{(1)\prime} - \E \Big( \Big|\dot{W}_{ij}\Big| \dot{W}_i^{(1)} \dot{W}_i^{(1)\prime}\Big) \bigg\}\bigg\|_F \nonumber\\
    & \hspace{3cm} + \frac{\|\bm{u_n}^{(1)}\|^2 }{n^{3/2}} \max_{0\leq j\leq p_0} \bigg\| \sn \E\bigg(\Big|\dot{W}_{ij}\Big| \dot{W}_i^{(1)} \dot{W}_i^{(1)\prime} \bigg)\bigg\| \nonumber\\
    & \leq \frac{p_0 \|\bm{u_n}^{(1)}\|^2 }{n^{3/2}} \max_{0\leq j,l,l'\leq p_0} \bigg| \sn\bigg\{ \Big|\dot{W}_{ij}\Big| \dot{W}_{il} \dot{W}_{il'} - \E \Big( \Big|\dot{W}_{ij}\Big| \dot{W}_{il}^{(1)} \dot{W}_{il}^{(1)}\Big) \bigg\}\bigg| \nonumber\\
    & \hspace{3cm} + \frac{\|\bm{u_n}^{(1)}\|^2 }{n^{3/2}} \max_{0\leq j\leq p} \bigg\| \sn \E\bigg(\Big|\dot{W}_{ij}\Big| \dot{W}_i^{(1)\prime} \dot{W}_i^{(1)\prime} \bigg)\bigg\| \nonumber\\
    &\leq C \;\frac{\|\bm{u_n}^{(1)}\|^2 }{\sqrt{n}} \left( p_0\sqrt{\frac{\log(np_0)}{n}} + 1 \right) \nonumber\\
    & \leq C \;\frac{\|\bm{u_n}^{(1)}\|^2 }{\sqrt{n}}, \label{12.0}
\end{align}
    
where the first and fifth inequality follows from taking maximum over all components, the second and fourth inequality follows from triangle inequality and the fact that $0\leq h(\xi_i) \leq 1,$ almost surely. The third inequality follows from H\"older's inequality. The last two inequalities follows from Lemma \ref{kcvl2}, Lemma \ref{kcvl3}, and assumptions (A.4).

Together with \eqref{9}, we get
\begin{equation} \label{12}
    \| \tilde{\bm{L}}_{11,n}^{-1}\bm\zeta^\on \|_\infty \leq \| \tilde{\bm{L}}_{11,n}^{-1}\|_\infty \|\bm\zeta^\on \|_\infty \leq \frac{C}{\sqrt{n}} n^{a_1} \|\bm{u_n}^{(1)}\|^2, 
\end{equation}
 with probability tending to one. Therefore,
\begin{equation} \label{13}
    \| \tilde{\bm{L}}_{11,n}^{-1}\bm\zeta^\on \|_\infty \leq \|\bm{u_n}^{(1)}\|_\infty,
\end{equation}
with probability tending to one, provided $\Big\{\frac{C\sqrt{p_0}n^{a_1}}{\sqrt{n}} \|\bm{u_n}^{(1)}\| \leq 1 \Big\}$, with probability tending to one, which we later show is true by assumption (A.4)(ii).

Finally, by \eqref{9}, we have with probability tending to one,
\begin{equation} \label{14}
    \bigg\| \frac{\lambda_n}{\sqrt{n}} \tilde{\bm{L}}_{11,n}^{-1} \hat{\bm{s}}_n^{(1)} \bigg\|_\infty \leq C\frac{\lambda_n}{\sqrt{n}} n^{a_1}.
\end{equation}

Combining \eqref{11}, \eqref{13} and \eqref{14} with the assumption (A.4), we get $\bm{u}_{n}^\on = f(\bm{u}_{n}^\on)$ with \begin{equation*}
     \| f(\bm{u}_{n}^\on) \|_\infty \leq C \frac{\lambda_n}{\sqrt{n}} n^{a_1},
\end{equation*}
with probability tending to one. Therefore by Brouwer's fixed point theorem, \eqref{5} has a solution $\tilde{\bm{u}}_n^{(1)}$ such that
\begin{equation} \label{15}
\left\| \tilde{\bm{u}}_n^{(1)} \right\|_\infty \leq C\dfrac{\lambda_{n}}{\sqrt{n}} n^{a_1},
\end{equation}
with probability tending to one. It follows that $\Big\{\frac{C\sqrt{p_0}n^a_1}{\sqrt{n}} \|\tilde{\bm{u}}_n^{(1)}\| \leq 1 \Big\}$, with probability tending to one, by assumption (A.4)(ii) and we get that  $\{\hat{s}_{nj}= s_{nj},$ $\;j=0,\dots,p_0\}$ with probability tending to one due to \eqref{15} and the beta-min condition. 

Now, let us consider \eqref{6}. Note that, for $j=(p_0+1),\dots,p,$ by Taylor's theorem
\begin{align*}
    &-\dfrac{1}{\sqrt{n}}\sum_{i=1}^{n}{z_{i}\dot{W}_{ij}+ \dfrac{1}{\sqrt{n}}\sum_{i=1}^{n} \dot{W}_{ij}\dfrac{n_2\exp\left(\dot{W}_i^{(1)\prime}\left(\frac{\bm{u}_n^{(1)}}{\sqrt{n}} + \bm{\beta_n}^{(1)}\right)\right)}{n_1+n_2\exp\left(\dot{W}_i^{(1)\prime}\left(\frac{\bm{u}_n^{(1)}}{\sqrt{n}} + \bm{\beta_n}^{(1)}\right)\right)}}\\
    &= - \dfrac{1}{\sqrt{n}} \sn (z_i - \hat{p}(\bm\beta|\dot{W}_i^\on)) \dot{W}_{ij} \\& \qquad \qquad+ \left[ \frac{1}{n} \sn \hat{p}(\bm\beta|\dot{W}_i^\on) (1 - \hat{p}(\bm\beta|\dot{W}_i^\on)) \dot{W}_i^\on \dot{W}_{ij} \right]^\top \bm{u}_n^\on
    \\& \qquad \qquad + \frac{1}{2n^{3/2}} \sn \frac{n_1 n_2\exp(\xi_i)(n_1-n_2\exp(\xi_i))}{(n_1+n_2\exp(\xi_i))^3} \left(\dot{W}_i^{(1)\prime}\bm{u_n}^{(1)}\right)^2 \dot{W}_{ij}.
\end{align*}
where $\xi_i$ is between $\dot{W}_i^{(1)\prime}\bm{\beta_n}^{(1)}$ and $\dot{W}_i^{(1)\prime}\left(\frac{\bm{u}_n^{(1)}}{\sqrt{n}} + \bm{\beta_n}^{(1)}\right),\; i=1,\dots n.$ Hence, \eqref{6} is equivalent to 
\begin{multline*}
    -\frac{\lambda}{\sqrt{n}} \bm{1} \leq - \dfrac{1}{\sqrt{n}} \sn (z_i - \hat{p}(\bm\beta|\dot{W}_i^\on)) \dot{W}_{i}^{(2)}+ \tilde{\bm{L}}_{21,n} \bm{u_n}^{(1)} \\ + \frac{1}{2n^{3/2}} \sn \frac{n_1 n_2\exp(\xi_i)(n_1-n_2\exp(\xi_i))}{(n_1+n_2\exp(\xi_i))^3} \left(\dot{W}_i^{(1)\prime}\bm{u_n}^{(1)}\right)^2 \dot{W}_{i}^{(2)} \leq \frac{\lambda}{\sqrt{n}} \bm{1}.
    \end{multline*}
Therefore, it is enough to show that for all $j=(p_0+1),\dots,p,$
\begin{multline} \label{16}
    \left|\dfrac{1}{\sqrt{n}} \sum_{i=1}^{n} {(z_{i}-\hat{p}(\bm\beta|\dot{W}_i^\on))} \dot{W}_{ij}\right| + \left|{(\tilde{\bm{L}}_{21,n})}_{j.}^\top \tilde{\bm{L}}_{11,n}^{-1} \left( \bm\Lambda_n^{(1)} - \bm\zeta_n^{(1)}\right)\right| \\+
     \left|\frac{1}{2n^{3/2}} \sn \frac{n_1 n_2\exp(\xi_i)(n_1-n_2\exp(\xi_i))}{(n_1+n_2\exp(\xi_i))^3} \left(\dot{W}_i^{(1)\prime}\bm{u_n}^{(1)}\right)^2 \dot{W}_{ij}\right|\\ + \left|\dfrac{\lambda_{n}}{\sqrt{n}} {(\tilde{\bm{L}}_{21,n})}_{j.}^\top \tilde{\bm{L}}_{11,n}^{-1} {\hat{\bm{s}}}_n^{(1)}\right| \leq \dfrac{\lambda_{n}}{\sqrt{n}}.
\end{multline}
To determine the bounds for each of the quantities in the left-hand side of \eqref{16}, we define
\begin{align*}
    I & = \underset{(p_0+1)\leq j \leq p}{\max}\left|\dfrac{1}{\sqrt{n}} \sum_{i=1}^{n} {(z_{i}-\hat{p}(\bm\beta|\dot{W}_i^\on))} \dot{W}_{ij}\right|,\\
    II & = \underset{(p_0+1)\leq j \leq p}{\max}\left|{(\tilde{\bm{L}}_{21,n})}_{j.}^\top \tilde{\bm{L}}_{11,n}^{-1} \bm\Lambda_n^{(1)}\right|,\\
    III & = \underset{(p_0+1)\leq j \leq p}{\max}\left|{(\tilde{\bm{L}}_{21,n})}_{j.}^\top \tilde{\bm{L}}_{11,n}^{-1} \bm\zeta_n^{(1)}\right|,\\
    IV &= \underset{(p_0+1)\leq j \leq p}{\max}\left|\frac{1}{2n^{3/2}} \sn \frac{n_1 n_2\exp(\xi_i)(n_1-n_2\exp(\xi_i))}{(n_1+n_2\exp(\xi_i))^3} \left(\dot{W}_i^{(1)\prime}\bm{u_n}^{(1)}\right)^2 \dot{W}_{ij}\right|,\\
    \text{and } V &= \underset{(p_0+1)\leq j \leq p}{\max}\left| {(\tilde{\bm{L}}_{21,n})}_{j.}^\top \tilde{\bm{L}}_{11,n}^{-1} {{\bm{s}}}_n^{(1)}\right|.
\end{align*}
Note that by \eqref{15}, we had with probability tending to one, $\{ {\hat{\bm{s}}}_n^{(1)} = {{\bm{s}}}_n^{(1)} \}$. Therefore, we can replace ${\hat{\bm{s}}}_n^{(1)}$ with ${\bm{s}}_n^{(1)}$ in \eqref{16} and rewrite the equation as
\begin{equation} \label{17}
    \dfrac{\lambda_{n}}{\sqrt{n}} (1-V) \geq I + II + III + IV.
\end{equation}
For the first term, following similar calculations as in \eqref{9.2} and \eqref{9.25}, we get
\begin{equation} \label{18}
\begin{split}
    I &\leq \underset{(p_0+1)\leq j \leq p}{\max}\left|\dfrac{1}{\sqrt{n}} \sum_{i=1}^{n} {(z_{i}-p(\bm\beta|\dot{W}_i^\on))} \dot{W}_{ij}\right|\\ & \hspace{3cm}+ \underset{(p_0+1)\leq j \leq p}{\max}\left|\dfrac{1}{\sqrt{n}} \sum_{i=1}^{n} {(\hat{p}(\bm\beta|\dot{W}_i^\on)-p(\bm\beta|\dot{W}_i^\on))}\dot{W}_{ij}\right|\\
    & \leq C \left\{\sqrt{\log(np)} + \frac{(\log(n))^{1/\kappa}(\log(np))^{1/\min\{1,\kappa\}}}{\sqrt{n}} \right\}.
\end{split}
\end{equation}
To determine the bounds for the quantities in $II, III$, and $V$, we need to determine a bound for the quantity $\underset{(p_0+1)\leq j \leq p}{\max} \| {(\tilde{\bm{L}}_{21,n})}_j \|_1.$ Note that,
\begin{multline} \label{19.3}
        \underset{(p_0+1)\leq j \leq p}{\max} \Big\| {(\tilde{\bm{L}}_{21,n})}_j \Big\|_1 \leq \underset{(p_0+1)\leq j \leq p}{\max} \Big\| {(\tilde{\bm{L}}_{21,n} - \mathbb{E}(\tilde{\bm{L}}_{21,n}))}_j \Big\|_1 \\ \hspace{4cm}
        + \underset{(p_0+1)\leq j\leq p}{\max} \Big\|\Big(\mathbb{E}(\tilde{\bm{L}}_{21,n}) - \mathbb{E}(\bm{L}_{21,n})\Big)_j \Big\|_1 \\+ \underset{(p_0+1)\leq j\leq p}{\max} \Big\|\mathbb{E}(\bm{L}_{21,n})_j \Big\|_1
\end{multline}
For the first term in the right-hand side of \eqref{19.3}, using Lemma \ref{kcvl2}, Lemma \ref{kcvl3}, we have with probability tending to one,
\begin{equation} \label{19.1}
\begin{split}
    &\underset{(p_0+1)\leq j \leq p}{\max} \Big\| {(\tilde{\bm{L}}_{21,n} - \mathbb{E}(\tilde{\bm{L}}_{21,n}))}_j \Big\|_1 \\
    &\qquad \leq {p_0}\; \underset{(p_0+1)\leq j \leq p}{\max}\;\underset{0\leq l \leq p_0}{\max} \Bigg| \frac{1}{n} \sn \bigg( \hat{p}(\bm\beta|\dot{W}_i^\on)(1-\hat{p}(\bm\beta|\dot{W}_i^\on))\dot{W}_{il} \dot{W}_{ij} \\&\hspace{5cm} - \mathbb{E}\Big(\hat{p}(\bm\beta|\dot{W}_i^\on) (1-\hat{p}(\bm\beta|\dot{W}_i^\on))\dot{W}_{il} \dot{W}_{ij}\Big) \bigg)\Bigg| \\
    &\qquad \leq C {p_0} \left[\sqrt{\frac{\log(np)}{n}} + \frac{(\log(n))^{2/\kappa}(\log(np))^{1/\min\{1,\kappa/2\}}}{n}\right],
\end{split}
\end{equation}
and using assumption (A.3), we have with probability tending to one,
\begin{equation} \label{19.2}
\begin{split}
    & \underset{(p_0+1)\leq j\leq p}{\max} \Big\|\Big(\mathbb{E}(\tilde{\bm{L}}_{21,n}) - \mathbb{E}(\bm{L}_{21,n})\Big)_{j.} \Big\|_1\\
    &\qquad \leq C {p_0}\bigg|\frac{n_1}{n_2} - \frac{\pi_1}{\pi_2}\bigg|  \underset{(p_0+1)\leq j\leq p}{\max} \; \underset{0\leq l\leq p_0}{\max} \frac{1}{n} \sum_{i=1}^{n} \mathbb{E} |\dot{W}_{ij}\dot{W}_{il}|\\
    & \qquad = o(p_0 n^{-1/2-{a_1}}).
\end{split}
\end{equation}
We have $\underset{(p_0+1) \leq j\leq p}{\max}\|\mathbb{E}({\bm{L}}_{21,n})_{j.}\|_1 = C p_0,$ due to the sub-Weibull assumption. Therefore, using assumption (A.4) combining with \eqref{19.1}, \eqref{19.2} and \eqref{19.3}, we get with probability tending to one
\begin{multline}  \label{19}
    \underset{(p_0+1)\leq j\leq p}{\max} \Big\|\Big(\tilde{\bm{L}}_{21,n} - \mathbb{E}(\bm{L}_{21,n})\Big)_j \Big\|_1 \\ \leq C {p_0} \left[\sqrt{\frac{\log(np)}{n}} + \frac{(\log(n))^{2/\kappa}(\log(np))^{1/\min\{1,\kappa/2\}}}{n}\right],
\end{multline}
and 
\begin{equation} \label{19.4}
    \underset{(p_0+1)\leq j\leq p}{\max} \Big\|(\tilde{\bm{L}}_{21,n})_j \Big\|_1 \leq C{p_0}.
\end{equation}
Note that, we will require the bound in \eqref{19} later to bound $V$. Using H\"older's inequality and \eqref{11}, \eqref{12}, \eqref{15} and \eqref{19.4}, we get
\begin{align}
    II & \leq C p_0 n^{a_1} \sqrt{\log(np_0)}, \label{20}\\
    III & \leq C \frac{\lambda_n^2}{n^{3/2}}p_0^2 n^{3a_1}. \label{21}
\end{align}
Next, following similar arguments as in \eqref{12.0}, we get from assumptions (A.2), with probability tending to one
\begin{equation} \label{22}
    \begin{split}
        IV &= \frac{1}{2n^{3/2}}\underset{(p_0+1)\leq j \leq p}{\max}\left| \sn h(\xi_i) \left(\dot{W}_i^{(1)\prime}\bm{u_n}^{(1)}\right)^2 \dot{W}_{ij}\right|\\
    & \leq C \;\frac{\|\bm{u_n}^{(1)}\|^2 }{\sqrt{n}}\\
    & \leq C\frac{\lambda_n^2}{n^{3/2}} p_0 n^{2a_1}.
    \end{split}
\end{equation}
Finally, with probability tending to one, we can get
\begin{equation} \label{25}
\begin{split} 
    V &= \underset{(p_0+1)\leq j\leq p}{\max}\left| {(\tilde{\bm{L}}_{21,n})}_{j.}^\top \tilde{\bm{L}}_{11,n}^{-1} {{\bm{s}}}_n^{(1)}\right|\\
    & \leq \underset{(p_0+1)\leq j\leq p}{\max}\left| {(\tilde{\bm{L}}_{21,n} - {\mathbb{E}(\bm{L}_{21,n}))}_{j.}^\top} \tilde{\bm{L}}_{11,n}^{-1} {{\bm{s}}}_n^{(1)}\right|\\&\qquad + \underset{(p_0+1)\leq j\leq p}{\max}\left| {(\mathbb{E}(\bm{L}_{21,n}))_{j.}^\top} (\tilde{\bm{L}}_{11,n}^{-1} - {(\mathbb{E}(\bm{L}_{11,n}))^{-1}}){{\bm{s}}}_n^{(1)}\right|\\&\qquad + \underset{(p_0+1)\leq j\leq p}{\max}\left| ({\mathbb{E}(\bm{L}_{21,n}))_{j.}^\top} ({\mathbb{E}(\bm{L}_{11,n}))^{-1}}{{\bm{s}}}_n^{(1)}\right|\\
    & \leq 1-\tau_1/2,
\end{split}
\end{equation}
where the final inequality follows from assumption (A.1), (A.4) and the bounds obtained in \eqref{8.2}, \eqref{9} and \eqref{19} for large enough $n$.
Combining \eqref{18}, \eqref{20}, \eqref{21}, \eqref{22}, \eqref{25} and assumptions (A.1) and (A.4), we can conclude that the solution $\bm{{\tilde{u}}}_n^{(1)}$ to \eqref{2} satisfies \eqref{7}.

Hence, under assumptions (A.1)--(A.4), we can conclude that there exists a solution $\tilde{\bm{u}}_n$ of \eqref{2} which has the form $\tilde{\bm{u}}_n = {( { \bm{{\tilde{u}}}_n^{(1)}}^{\prime} ,\bm{0}^\top)}^\top$, where $\tilde{\bm{u}}_n^{(1)}$ has the property that $$\mathbbm{P}\left( \left\| \tilde{\bm{u}}_n^{(1)} \right\|_\infty \leq C\dfrac{\lambda_{n}}{\sqrt{n}} n^{a} \right)\to 1\; \text{ as } n\to\infty.$$ \end{proof}


\begin{lem} \label{pl}
    (Post LASSO LR) Recall the post-Lasso Logistic estimator $\hat{\bm{\beta}}_n = \Big(\big(\hat{\bm{\beta}}^{\tilde{\mathcal{A}}_n}_n\big)^\top, \bm{0}^\top\Big)^\top$ defined in Section 2 of main manuscript. Define $\hat{u}_n = \sqrt{n}(\hat{\bm{\beta}}_n - \bm\beta_n)$. Then under the assumptions (A.1)--(A.3), as $n\rightarrow \infty$, 
    $$\mathbb{P}\Big(\big\|\bm{\hat{u}}_n^{\tilde{\mathcal{A}}_n}\big\|_\infty \leq C n^{a_1} \sqrt{\log{(np_0)}}\Big),$$
for some constant $C\geq 1$.    
\end{lem}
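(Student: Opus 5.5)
The plan is to reduce to the event of Lemma~\ref{vscnew} and then rerun its fixed-point argument with the penalty term removed. By Lemma~\ref{vscnew}, under (A.1)--(A.4) we have $\mathbb{P}(\tilde{\mathcal{A}}_n=\mathcal{A}_n)\to 1$. On this event the post-Lasso estimator $\hat{\bm{\beta}}^{\tilde{\mathcal{A}}_n}_n$ is the unpenalized logistic MLE computed on the fixed covariate block $\dot{W}_i^{(1)}$ (indices $0,\dots,p_0$). It therefore suffices to bound $\hat{\bm u}_n^{(1)}=\sqrt{n}(\hat{\bm\beta}^{(1)}_n-\bm\beta^{(1)}_n)$ on the event $\{\tilde{\mathcal{A}}_n=\mathcal{A}_n\}$.

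\textbf{Step 1: fixed-point equation.} On the active block, the first-order condition is equation~\eqref{5} with the right-hand side $-\frac{\lambda_n}{\sqrt n}\hat{\bm s}_n^{(1)}$ replaced by $0$. The same second-order Taylor expansion as in the proof of Lemma~\ref{vscnew} then gives
\[
\bm u_n^{(1)}=\tilde{\bm L}_{11,n}^{-1}\big[\bm\Lambda_n^{(1)}-\bm\zeta_n^{(1)}\big]=:g(\bm u_n^{(1)}),
\]
where $\tilde{\bm L}_{11,n}$, $\bm\Lambda_n^{(1)}$ and $\bm\zeta_n^{(1)}$ are exactly as defined there. None of these bounds used the penalty, so we can reuse them:
\begin{itemize}
\item by \eqref{9}, $\|\tilde{\bm L}_{11,n}^{-1}\|_\infty\le Cn^{a_1}$ with probability tending to one;
\item by \eqref{11}, $\|\tilde{\bm L}_{11,n}^{-1}\bm\Lambda_n^{(1)}\|_\infty\le Cn^{a_1}\sqrt{\log(np_0)}$;
\item by \eqref{12}, $\|\tilde{\bm L}_{11,n}^{-1}\bm\zeta_n^{(1)}\|_\infty\le Cn^{a_1-1/2}\|\bm u\|^2\le Cn^{a_1-1/2}p_0\|\bm u\|_\infty^2$.
\end{itemize}

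\textbf{Step 2: Brouwer on a box.} Let $B=\{\bm u\in\mathbb{R}^{p_0+1}:\|\bm u\|_\infty\le 2Cn^{a_1}\sqrt{\log(np_0)}\}$. For $\bm u\in B$ the quadratic term is at most
\[
C' n^{3a_1-1/2}\,p_0\log(np_0).
\]
This is $o(n^{a_1}\sqrt{\log(np_0)})$ provided $n^{2a_1}p_0\sqrt{\log(np_0)}=o(\sqrt n)$. That follows from (A.4): (A.4)(i) gives $\lambda_n/\sqrt n\ge p_0n^{a_1}\sqrt{\log(np_0)}$, and (A.4)(ii) gives $\lambda_n n^{3a_1}p_0^2/n\to0$; combining the two yields $p_0^3n^{4a_1}\sqrt{\log(np_0)}=o(\sqrt n)$, which is stronger than needed. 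Hence $g$ maps $B$ into itself with probability tending to one. The map $g$ is continuous, since the $\xi_i$ can be replaced by the integral form of the remainder. Brouwer's fixed point theorem then gives a zero $\hat{\bm u}^{(1)}_n\in B$ of the score.

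\textbf{Step 3: uniqueness.} The logistic log-likelihood restricted to the active block is strictly concave whenever $\tilde{\bm L}_{11,n}$ is positive definite, which holds with probability tending to one by \eqref{9} and (A.2)(i). So the zero found in Step~2 is the unique minimizer, namely the post-Lasso estimator. This gives $\|\hat{\bm u}_n^{\tilde{\mathcal{A}}_n}\|_\infty\le Cn^{a_1}\sqrt{\log(np_0)}$ with probability tending to one.

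\textbf{Main obstacle.} The delicate step is the self-mapping bound in Step~2. It requires the quadratic remainder $\bm\zeta_n^{(1)}$ to be negligible at the scale $n^{a_1}\sqrt{\log(np_0)}$, which is smaller than the Lasso scale $\lambda_nn^{a_1}/\sqrt n$ used in Lemma~\ref{vscnew}. So I would check that the bound \eqref{12.0} on $\max_j\|\sum_i|\dot W_{ij}|\dot W_i^{(1)}\dot W_i^{(1)\top}\|$ (via (A.2)(iii) and Lemma~\ref{kcvl2}) does not depend on the size of $\bm u$. I would also state explicitly the rate consequence of (A.4) derived above.
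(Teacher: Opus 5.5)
Your proposal is correct and takes the same route as the paper. You restrict to the VSC event of Lemma~\ref{vscnew}, note that on it the post-Lasso estimator solves the penalty-free version of the fixed-point equation \eqref{7}, namely $\hat{\bm u}_n^{(1)}=\tilde{\bm L}_{11,n}^{-1}[\bm\Lambda_n^{(1)}-\bm\zeta_n^{(1)}]$, and reuse the bounds \eqref{9}, \eqref{11} and \eqref{12}.

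Several steps you add are left implicit in the paper's ``follows directly'':
\begin{itemize}
\item the self-mapping check at the smaller radius $n^{a_1}\sqrt{\log(np_0)}$, using $p_0^3n^{4a_1}\sqrt{\log(np_0)}=o(\sqrt n)$, which follows from (A.4)(i)--(ii);
\item continuity of the fixed-point map via the integral form of the remainder;
\item uniqueness of the zero via strict convexity of the restricted logistic loss.
\end{itemize}
As in the paper, the argument uses (A.4) and the beta-min condition through Lemma~\ref{vscnew}, not only (A.1)--(A.3).
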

\begin{proof} WLOG, assume $\mathbf{A}_n = \{0,\dots,p_0\}$ and define $\mathbf{A}_n = \{\tilde{\mathcal{A}}_n = \mathcal{A}_n \}$ i.e. the set where VSC holds. Then $\mathbbm{P}(\mathbf{A}_n) = 1-o(1)$ and from \eqref{7}, it follows that on this set, 
\begin{equation} \label{26}
    \hat{\bm{u}}_{n}^\on = \tilde{\bm{L}}_{11,n}^{-1} \left[ \bm\Lambda_n^\on - \bm\zeta_n^\on\right].
\end{equation}
The rest of the proof follows directly from the proof of Lemma \ref{vscnew}. \end{proof}

\begin{lem}\label{lem:asymnor}
  (Normal approximation of the components of the centered version of $\bm{T}^{(1)}_{n}$)\\ Under assumptions (A.1)--(A.3), we have for any $j=0,\dots, p_0$
   \begin{equation*}
    \underset{x\in\mathbb{R}}{\sup} \;| \mathbbm{P}(T^{(1)}_{1nj} \leq  x) - {\Phi}(\sigma^{-1}_{nj} x)| = o(1),
\end{equation*} 
where $\bm{T}^{(1)}_{1nj} = \sqrt{n}\big(\hat{\bm{\beta}}_{nj}^\on - {\bm{\beta}}_{nj}^\on) + b_1(p,n) Z_{1j}$ and $\sigma^2_{nj} = \Big((\mathbb{E}\bm{L}_{11, n})^{-1}\Big)_{j j}$.
\end{lem}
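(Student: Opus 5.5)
We work on the event $\mathbf{A}_n=\{\tilde{\mathcal{A}}_n=\mathcal{A}_n\}$, which has probability $1-o(1)$ by Lemma~\ref{vscnew}. Off this event the probabilities change by at most $o(1)$, uniformly in $x$. On $\mathbf{A}_n$, equation \eqref{26} in the proof of Lemma~\ref{pl} gives the representation
\begin{equation*}
\sqrt{n}\big(\hat{\bm{\beta}}_n^{(1)}-\bm{\beta}_n^{(1)}\big)=\tilde{\bm{L}}_{11,n}^{-1}\big[\bm\Lambda_n^{(1)}-\bm\zeta_n^{(1)}\big].
\end{equation*}
We then split $\bm\Lambda_n^{(1)}=\bm\Lambda_n^{(2)}+\bm\Lambda_n^{(3)}$, with $\bm\Lambda_n^{(2)}$ and $\bm\Lambda_n^{(3)}$ as defined in the proof of Lemma~\ref{vscnew}. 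This yields the decomposition
\begin{equation*}
T^{(1)}_{1nj}=\big((\mathbb{E}\bm{L}_{11,n})^{-1}\bm\Lambda_n^{(2)}\big)_j+R_{nj}+b_1(p,n)Z_{1j},
\end{equation*}
where $R_{nj}$ collects three remainders:
\[
\big(\tilde{\bm{L}}_{11,n}^{-1}-(\mathbb{E}\bm{L}_{11,n})^{-1}\big)\bm\Lambda_n^{(1)},\qquad (\mathbb{E}\bm{L}_{11,n})^{-1}\bm\Lambda_n^{(3)},\qquad -\tilde{\bm{L}}_{11,n}^{-1}\bm\zeta_n^{(1)}.
\]
The plan has two parts. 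First, show that the leading term satisfies a CLT with variance $\sigma^2_{nj}$. Second, show that $R_{nj}+b_1(p,n)Z_{1j}=o_P(1)$. Slutsky's lemma together with Pólya's theorem then upgrades pointwise convergence to uniform convergence in $x$, using that $\sigma_{nj}\ge\delta_1^{1/2}$ by (A.2)(ii) and that $\Phi(\sigma^{-1}_{nj}\cdot)$ is Lipschitz with a uniformly bounded constant.

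For the remainders we reuse the bounds from Lemma~\ref{vscnew}:
\begin{itemize}
\item By \eqref{8.2} and \eqref{9}, $\|\tilde{\bm{L}}_{11,n}^{-1}-(\mathbb{E}\bm{L}_{11,n})^{-1}\|_\infty\le Cn^{2a_1}p_0\sqrt{\log(np_0)/n}$.
\item By \eqref{10}, $\|\bm\Lambda_n^{(1)}\|_\infty\le C\sqrt{\log(np_0)}$, so the first remainder is $O(n^{2a_1}p_0\log(np_0)/\sqrt n)$.
\item By \eqref{9.25} and (A.3), $\|\bm\Lambda_n^{(3)}\|_\infty=o(n^{-a_1})$; with (A.2)(i) the second remainder is $o(1)$.
\item By \eqref{12} and Lemma~\ref{pl}, $\|\tilde{\bm{L}}_{11,n}^{-1}\bm\zeta_n^{(1)}\|_\infty\le Cn^{a_1}\|\hat{\bm u}_n^{(1)}\|^2/\sqrt n\le Cn^{3a_1}p_0\log(np_0)/\sqrt n$.
\end{itemize}
The first and third remainders vanish by the rate restrictions in (A.4)(ii), since $\lambda_n/\sqrt n\gtrsim p_0n^{a_1}\sqrt{\log(np_0)}$ and $\lambda_n n^{3a_1}p_0^2/n=o(1)$. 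The Gaussian perturbation is $b_1(p,n)Z_{1j}=o_P(1)$ because $b_1\to0$.

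The leading term is $n^{-1/2}\sum_{i=1}^n\xi_{ij}$ with
\begin{equation*}
\xi_{ij}=\bm e_j^\top(\mathbb{E}\bm{L}_{11,n})^{-1}\big(z_i-p(\bm\beta|\dot W_i^{(1)})\big)\dot W_i^{(1)}.
\end{equation*}
These summands are independent across $i$. They have mean zero because $\bm\beta$ solves the population score equation \eqref{eqn:pdef}, which is where the pseudo-true definition of $\bm\beta$ enters. Their average variance is
\begin{equation*}
\bm e_j^\top(\mathbb{E}\bm{L}_{11,n})^{-1}\,\mathbb{E}\big[(z-p)^2\dot W^{(1)}\dot W^{(1)\top}\big]\,(\mathbb{E}\bm{L}_{11,n})^{-1}\bm e_j .
\end{equation*}
The middle matrix is the score covariance. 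We identify it with $\mathbb{E}\bm{L}_{11,n}$, the information identity, giving variance $\sigma^2_{nj}$ up to $o(1)$ (the prior/sample-proportion mismatch is again controlled by (A.3)).

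We then apply the Lindeberg--Feller CLT, or Berry--Esseen for independent non-identical summands. This requires a third-moment bound on the summands. We get it from $|z-p|\le1$, the sub-Weibull moments of $\dot W$, and $\|(\mathbb{E}\bm{L}_{11,n})^{-1}\|_\infty\le\delta_1^{-1}n^{a_1}$. Together these give a Lyapunov ratio of order $n^{3a_1}p_0^{3}(\log n)^{c}/\sqrt n$, and by (A.4)(ii) this ratio is $o(1)$.

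\textbf{Main obstacle.} I expect the variance identification to be the hardest step. Under misspecification (no correct linear logistic model), $\mathbb{E}[(z-p)^2\dot W\dot W^\top]$ need not equal $\mathbb{E}\bm{L}_{11,n}$, because the latter uses $p(1-p)$ weights. My first attempt is to write
\[
\mathbb{E}[(z-p)^2\mid W]=p(1-p)+(P(z=1\mid W)-p)(1-2p).
\]
I would then try to show the correction term is negligible, or else interpret $\sigma^2_{nj}$ via the sandwich form and check that it is at least $\delta_1$. Failing that, the proof will record that the stated $\sigma^2_{nj}$ is exact under correct specification (in particular under $H_0$ and in the Gaussian model). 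In that case the uniform-in-$x$ normal approximation is established with the sandwich variance, which is what the power argument in Theorem~\ref{theo:power} actually needs.
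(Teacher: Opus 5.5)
Your plan is the paper's proof. On $\{\tilde{\mathcal{A}}_n=\mathcal{A}_n\}$ the paper uses \eqref{26} to write $T^{(1)}_{1nj}=S_{nj}+R_{nj}$ with $S_{nj}=\big((\mathbb{E}\bm{L}_{11,n})^{-1}\big)_{j\cdot}^\top\bm\Lambda_n^{(2)}$. It bounds $|R_{nj}-b_1(p,n)Z_{1j}|$ by $\Delta_n=o(1)$ using \eqref{8.2}, \eqref{9}, \eqref{9.25}, \eqref{10}, \eqref{12}, \eqref{15}, applies Berry--Esseen to $S_{nj}$, and absorbs the remainder with a $\pm\Delta_n$ smoothing inequality where you use Slutsky--P\'olya. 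Your remainder and Lyapunov rates are fine. Like the paper's, they need (A.4)(i)--(ii) jointly, although the lemma cites only (A.1)--(A.3).

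The step you flag is a genuine gap, and the paper does not fill it: it simply asserts $\sigma^2_{nj}=\mathrm{Var}(S_{nj})=\big((\mathbb{E}\bm{L}_{11,n})^{-1}\big)_{jj}$. Your diagnosis is right. The equality needs the information identity on the active block, which the pseudo-true definition of $\bm\beta$ does not supply. So the lemma as stated does not follow from the assumptions, and restating it with a sandwich variance is the correct repair.

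Your sandwich is not quite the right one, however. (A.3) rules out i.i.d.\ labels, since then $n^{-1/2}(n_2-n\pi_2)$ has a non-degenerate normal limit. So the labels are deterministic, and the $\xi_{ij}$ are independent but not individually centered. Write $p=p(\bm\beta|W)$, $\bm c=(\mathbb{E}\bm{L}_{11,n})^{-1}\bm e_j$, $\bm m_1=\mathbb{E}_{F_1}[p\,\dot W^{(1)}]$ and $\bm m_2=\mathbb{E}_{F_2}[(1-p)\dot W^{(1)}]$. Then $\mathbb{E}\xi_{ij}=-\bm c^\top\bm m_1$ for $i\le n_1$ and $\mathbb{E}\xi_{ij}=\bm c^\top\bm m_2$ for $i>n_1$. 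The score equation gives only $\pi_1\bm m_1=\pi_2\bm m_2=:\bm m$, which centers the sum up to $o(1)$ under (A.3).

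Hence, up to terms that vanish under (A.3), $\mathrm{Var}(S_{nj})$ equals your $\bm c^\top\mathbb{E}[(z-p)^2\dot W^{(1)}\dot W^{(1)\top}]\bm c$ minus $(\pi_1^{-1}+\pi_2^{-1})(\bm c^\top\bm m)^2$. Under correct specification the middle matrix becomes $\mathbb{E}\bm{L}_{11,n}$ and $\bm m=(\mathbb{E}\bm{L}_{11,n})\bm e_0$, so $\bm c^\top\bm m=\mathbbm{1}\{j=0\}$. The stated $\sigma^2_{nj}$ is therefore right for the slopes but overstates the intercept variance by $\pi_1^{-1}+\pi_2^{-1}$ (the Prentice--Pyke case--control effect). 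So your ``exact under correct specification'' holds only for $j\ge1$.

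Two further points. First, $H_0$ is not a correctly specified case here. The paper lets $F_1,F_2$ differ beyond their means, so $P(z=1\mid W)$ need not be constant when $\bm\beta=\bm 0$; this is harmless only because the lemma is vacuous under $H_0$. Second, (A.2)(ii) no longer bounds the sandwich variance from below, so the lower bound that Berry--Esseen and P\'olya require must be added as an assumption.

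For Theorem~\ref{theo:power} you can avoid that assumption altogether. On one side, $\mathrm{Var}(S_{nj})\le n^{-1}\sum_i\mathbb{E}\xi_{ij}^2=O\big(n^{2a_1}(\log n)^{2/\kappa}\big)$. On the other, the beta-min condition and (A.4)(i) give $\sqrt n|\beta_j|\gtrsim n^{2a_1}p_0\sqrt{\log(np_0)}$. Chebyshev alone then yields $|T^{(1)}_{1nj}|=o_P\big(\sqrt n|\beta_j|-m^{(1)}_{n,p,1-\alpha}\big)$, which is all the power argument uses.
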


\begin{proof} For any $j \in \{0,\dots,p_0\}$, the $(j+1)$th component of $\bm{T}_{1n}$ is given by (see equation \eqref{26})
\begin{align*}
    T^{(1)}_{1nj} &= \sqrt{n}\big(\hat{\bm{\beta}}_{nj}^\on - {\bm{\beta}}_{nj}^\on) + b_1(p,n) Z_{1j}\\
    &= (\tilde{\bm{L}}_{11, n}^{-1})^\prime_{j.} \Big(\bm\Lambda_n^\on - \bm\zeta_n^\on \Big) + b_1(p,n) Z_{1j} \\
    &= \Big((\mathbb{E}\bm{L}_{11, n})^{-1}\Big)_{j.}^\top \bm\Lambda_n^{(2)} + b_1(p,n) Z_{1j}\\&\hspace{.75cm} - \Big((\mathbb{E}\bm{L}_{11, n})^{-1}\Big)_{j.}^\top \bm\Lambda_n^{(3)}  + \Big(\tilde{\bm{L}}_{11, n}^{-1} - (\mathbb{E}\tilde{\bm{L}}_{11, n})^{-1}\Big)_{j.}^\top \bm\Lambda_n^\on \\&\hspace{.75cm}  + \Big((\mathbb{E}\tilde{\bm{L}}_{11, n})^{-1} - (\mathbb{E}\bm{L}_{11, n})^{-1}\Big)_{j.}^\top \bm\Lambda_n^\on  - (\tilde{\bm{L}}_{11, n}^{-1})_{j.}^\top \bm\zeta_n^\on \\
    & = S_{nj} + R_{nj},
\end{align*}
where
\begin{align*}
    & S_{nj} = \Big((\mathbb{E}\bm{L}_{11, n})^{-1}\Big)_{j.}^\top \bm\Lambda_n^{(2)},\\
    & R_{nj} = -\Big((\mathbb{E}\bm{L}_{11, n})^{-1}\Big)_{j.}^\top \bm\Lambda_n^{(3)} + \Big(\tilde{\bm{L}}_{11, n}^{-1} - (\mathbb{E}\tilde{\bm{L}}_{11, n})^{-1}\Big)_{j.}^\top \bm\Lambda_n^\on \\&\hspace{1.5cm}  + \Big((\mathbb{E}\tilde{\bm{L}}_{11, n})^{-1} - (\mathbb{E}\bm{L}_{11, n})^{-1}\Big)_{j.}^\top \bm\Lambda_n^\on  - (\tilde{\bm{L}}_{11, n}^{-1})_{j.}^\top \bm\zeta_n^\on  + b_1(p,n) Z_{1j},
\end{align*}
and all the other notations are defined as in the proof of Lemma~\ref{vscnew}.

We will show that the distribution of $T_{1nj}$ can be approximated by $S_{nj}$ and $R_{nj}$ is a remainder term. From assumptions (A.2)(i), (A.3), (A.4) and equations \eqref{8.2}, \eqref{9}, \eqref{9.25}, \eqref{10}, \eqref{12} and \eqref{15}, with probability tending to one,
\begin{align} \label{30}
    |R_{nj}-b_1(p,n) Z_{1j}| \leq \Delta_{n},
\end{align}
where 
\begin{align*}
    \Delta_{n} 
     &= C n^{1/2+a_1} \Big\{p_0 \;{\log(np_0)}\; n^{-1+2a_1} + \Big| n_1/n_2 - \pi_1/\pi_2 \Big|\Big\} = o(1)
\end{align*}
and $b_1(p,n) Z_{1j} = o_p(1)$ as $b_1(p,n)=o(1)$. For the other term, we can write
\begin{align*}
    S_{nj}& = \frac{1}{\sqrt{n}} \sn  \left( z_{i} - p(\bm\beta|\dot{W}_i^\on) \right) 
    (\mathbb{E}{\bm{L}}_{11, n})^{-1})_{j}^\top \dot{W}_{i}^{(1)} \\
    & = \frac{1}{\sqrt{n}} \sn V_{nij},
\end{align*}
where
\begin{equation*}
    V_{nij} = \left( z_{i} - p(\bm\beta|\dot{W}_i^\on) \right) 
    (\mathbb{E}\bm{L}_{11, n})^{-1})_{j.}^\top \dot{W}_{i}^{(1)},\;i=1,\dots,n.
\end{equation*}
Note that
\begin{align*}
    \mathbb{E}(S_{nj}) = 0 \text{ and } \sigma^2_{nj} = Var(S_{nj}) = ((\mathbb{E}\bm{L}_{11, n})^{-1})_{j j},
\end{align*}
for all $j = 0,\dots, p_0$. Using assumption (A.2)(ii), there exists a constant $c>0$ such that $c \leq \sigma_{nj}$ for large enough $n$.

Also, using assumption (A.2)(i) together with H\"older's inequality, we get
\begin{align*}
    \mathbb{E}|V_{nij}|^3 &\leq C \mathbb{E}\Big| ((\mathbb{E} \bm{L}_{11, n})^{-1})_{j.}^\top \dot{W}_{i}^{(1)} \Big|^3\\
    & \leq C \Big(\Big\| (\mathbb{E} \bm{L}_{11, n})^{-1}\Big\|_\infty^3 \mathbb{E}\Big\|\dot{W}_{i}^\on \Big\|_\infty^3 \Big)\\
    &\leq Cn^{3a_1} (\log p_0)^{3/2}.
\end{align*}
Then, using the Berry-Esseen theorem (see \citet[Theorem~12.4]{BR86}) and assumption (A.4), we get
\begin{equation} \label{31}
    \underset{x\in\mathbb{R}}{\sup} |\mathbbm{P}(S_{nj} \leq x) - {\Phi}(\sigma^{-1}_{nj} x)| \leq C \left(\frac{Cn^{3a_1}(\log p_0)^{3/2}}{ c^3\;\sqrt{n}}\right) = o(1).
\end{equation}
Now, using \eqref{30} and \eqref{31},
\begin{align*}
        &|\mathbbm{P}(T_{1nj} \leq  x) - {\Phi}(\sigma^{-1}_{nj} x)|
        \\& \leq 2\;\mathbbm{P}\Big( |R_{nj}| > \Delta_n \Big) + 2\;\mathbbm{P}\Big(x-\Delta_n\leq S_{nj} \leq x+\Delta_n\Big)+ |\mathbbm{P}(S_{nj} \leq  x) - \Phi(\sigma^{-1}_{nj}x)|\\
        &= o(1).
    \end{align*} 
This completes the proof. \end{proof}

\begin{lem} \label{vscm}
(VSC of Multi-class Logistic LASSO) Define $\tilde{\bm{u}}_n = \sqrt{n} (\tilde{\bm{\beta}}_n - \bm{\beta}_n)$ where $\tilde{\bm{\beta}}_n = \big(\tilde{\bm{\beta}}_{n,1}^\top, \dots, \tilde{\bm{\beta}}_{n, (K-1)}^\top\big)^\top$ is the multi-sample Logistic Lasso estimator as defined in Section 4 of the main manuscript. Also assume that $\mathcal{A}_{n} = \cup_{j=1}^{K-1} \mathcal{A}_{n,j}$ is the set of relevant covariates, where $\mathcal{A}_{n,j}= \{k:\beta_{jk}\neq 0\}= \{0, \dots, p_{0j}\},\; j=1,\dots,(K-1)$ and $\tilde{\mathcal{A}}_{n} = \cup_{j=1}^{K-1} \tilde{\mathcal{A}}_{n,j}$ is the estimator of it based on $\tilde{\bm{\beta}}_n$, where $\tilde{\mathcal{A}}_{n,j} = \{k:\tilde{\beta}_{jk}\neq 0\},\; j=1,\dots,(K-1)$. Then under the assumptions (C.1)--(C.4), we have as $n\rightarrow \infty$, 
\begin{equation*}
    \mathbb{P}\Big(\tilde{\mathcal{A}}_{n} = \cup_{j=1}^{K-1} \{0,\dots, p_{0j}\},\\\; \text{and}\; \tilde{\bm{u}}_n = \big(\big(\tilde{\bm{u}}_n^{\tilde{\mathcal{A}}_n}\big)^\top, \bm{0}^\top\big)^\top\; \text{with}\; \|\tilde{\bm{u}}_n^{(1)\prime}\|_\infty\leq C\dfrac{\lambda_{n}}{\sqrt{n}} n^{a}\Big) \rightarrow 1,
\end{equation*}
for some constant $C\geq 1$.
\end{lem}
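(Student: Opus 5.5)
The plan is to transcribe the primal--dual witness argument of Lemma~\ref{vscnew} to the stacked $(p+1)(K-1)$-dimensional parameter, treating the multi-class problem as one large logistic-type M-estimation with a block Hessian. First, change variables to $\bm{u}=\sqrt{n}(\bm{t}-\bm{\beta}_n)$ and write the KKT conditions coordinatewise. For each class $k$ and coordinate $l$, the score component is
\[
n^{-1/2}\sum_{i=1}^n\big(y_{ik}-\hat p_k(\bm\beta+\bm u/\sqrt n\,|\,U_i)\big)\dot U_{il},
\]
where $\hat p_k$ uses $n_k/n$ in place of $\pi_k$. This component must equal $\frac{\lambda_n}{\sqrt n}\,sgn(\cdot)$ on $\mathcal{A}_{nk}$ and lie in $[-\lambda_n/\sqrt n,\lambda_n/\sqrt n]$ off $\mathcal{A}_{nk}$. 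We then restrict to candidate solutions supported on $\mathcal{A}_n$. A Taylor expansion of the multinomial mean map around $\bm\beta$ gives the Hessian with blocks $n^{-1}\sum_i w_i^{(k,l)}\dot U_i^{(k)}\dot U_i^{(l)\top}$, which is exactly $\tilde{\bm L}_{11,n}$ (the sample-proportion version of $\bm L_{11,n}$). This yields the fixed-point equation
\[
\bm u^{(1)}=\tilde{\bm L}_{11,n}^{-1}\big[\bm\Lambda_n^{(1)}-\bm\zeta_n^{(1)}-\tfrac{\lambda_n}{\sqrt n}\hat{\bm s}_n^{(1)}\big],
\]
which is the analogue of \eqref{7}.

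Second, we bound each term exactly as in Lemma~\ref{vscnew}, now using (C.2)--(C.4) instead of (A.2)--(A.4).
\begin{itemize}[label={}, leftmargin=0pt]
\item \emph{Inverse Hessian.} We compare $\|\tilde{\bm L}_{11,n}^{-1}\|_\infty$ with $(\mathbb E\bm L_{11,n})^{-1}$ through the perturbation lemma of \cite{lahiri2003}. The weight perturbation from replacing $\pi_k$ by $n_k/n$ is $o(n^{-1/2-a_2})$ by (C.3). The sampling fluctuation is $O(p_0\sqrt{\log(np_0)/n})$ by Lemmas~\ref{kcvl2}--\ref{kcvl3}, since the weights $w_i^{(k,l)}$ are bounded by $1$.
\item \emph{Score term.} $\|\bm\Lambda_n^{(1)}\|_\infty=O_p(\sqrt{\log(np_0)})$ by Lemma~\ref{kcvl2}, using that $y_{ik}-p_k$ is bounded and has mean zero by the definition of $\bm\beta$.
\item \emph{Remainder.} $\bm\zeta_n^{(1)}$ is handled through the third derivative of the log-partition function $\log(\pi_K+\sum_l\pi_l e^{\eta_l})$. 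Its partial derivatives are uniformly bounded, so we obtain $\|\bm\zeta_n^{(1)}\|_\infty\le C\|\bm u^{(1)}\|^2/\sqrt n$ using (C.2)(iii).
\end{itemize}
Brouwer's fixed point theorem on the $\ell_\infty$-ball of radius $C\lambda_n n^{a_2}/\sqrt n$ then produces a restricted solution with the stated bound. The beta-min condition of Theorem~\ref{theo:multi}(b), whose $C_4>1$ must dominate the Brouwer constant $C$, together with (C.4)(ii), forces $\hat{\bm s}_n^{(1)}=\bm s_n^{(1)}$.

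Third, we verify strict dual feasibility on $\mathcal{A}_{nk}^c$ for every $k$. We expand the off-support scores and substitute the fixed-point representation, obtaining terms $I$--$V$ as in \eqref{16}, with $\tilde{\bm L}_{21,n}$ now block-structured.
\begin{itemize}[label={}, leftmargin=0pt]
\item Term $I$ is controlled by Lemma~\ref{kcvl2} at level $\sqrt{\log(np)}$ plus the sub-Weibull correction, using (C.4)(iii).
\item Terms $II$--$IV$ are controlled via $\max_j\|(\tilde{\bm L}_{21,n})_{j\cdot}\|_1\le Cp_0$ and (C.4)(i),(ii).
\item Term $V$ is at most $1-\tau_2/2$ by (C.1) plus the perturbation bounds.
\end{itemize}
Strict inequality then shows the restricted solution is a global Lasso minimizer with support exactly $\mathcal A_n$. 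Uniqueness of the support follows as usual from strict dual feasibility and convexity.

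The main obstacle is the cross-class coupling. Off-support coordinates of class $k$ interact with on-support coordinates of all other classes through the $w_i^{(k,l)}$, $k\ne l$, so the Taylor expansion, the $\ell_1$ bounds on rows of $\tilde{\bm L}_{21,n}$, and the condition (C.1) must all be stated with respect to the full stacked index set. Sloppy indexing here is where the argument could break. I would first set up notation that vectorizes the $K-1$ classes into one index set with a single sign vector, so that every step of Lemma~\ref{vscnew} carries over verbatim with constants depending only on $K$.
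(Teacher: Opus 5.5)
Your proposal is correct and follows the paper's proof essentially step for step: the KKT conditions, the Taylor expansion to the fixed-point equation \eqref{m6} with the block matrix $\tilde{\bm L}_{11,n}$, the stochastic bounds on $\tilde{\bm L}_{11,n}^{-1}$, $\bm\Lambda_n^{(1)}$ and $\bm\zeta_n^{(1)}$, Brouwer's theorem together with beta-min for sign consistency, and strict dual feasibility through the terms $I$--$V$ in \eqref{m15}. One detail to make explicit: $\bm\Lambda_n^{(1)}$ is built with $n_k/n$ rather than $\pi_k$, so you need (C.3) to absorb the resulting $o(n^{-a_2})$ correction, as the paper does with $\bm\Lambda_n^{(3)}$. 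You are also right that the beta-min condition of Theorem~\ref{theo:multi}(b), with $C_4$ exceeding the Brouwer constant, has to be imported even though the lemma does not list it; the paper's proof relies on it tacitly as well.
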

\begin{proof} This proof closely follows the proof of Lemma \ref{vscnew}. Let the logistic lasso estimator $\tilde{\bm{\beta}}_n$ be defined as in Section~3 in the main manuscript. Define, $\tilde{\bm{u}}_n = \sqrt{n} (\tilde{\bm{\beta}}_n - \bm{\beta}_n)$. Then, we get (see \citet[Section~4.1.3]{boyd2004} for change of variables)
\begin{multline*}
    \tilde{\bm{u}}_n = \underset{\Big(\bm{u}_1^\top, \dots, \bm{u}_{K-1}^\top\Big)^\top \in \mathcal{R}^{(K-1)(p+1)}}{\arg\min} \left[-\sn\scj y_{ij}\dot{U}_i^\top\frac{\bm{u}_j}{\sqrt{n}} \right.\\\left. + \sn\scj \log \left(n_K+\scj n_j\exp\left(\dot{U}_i^\top\left(\frac{\bm{u}_j}{\sqrt{n}} + \bm{\beta}_{j}\right)\right)\right)\right.\\ +\left.\lambda_n\scj\left({\left\|\frac{\bm{u}_j}{\sqrt{n}} + \bm\beta_{j}\right\|}_1-{\|\bm\beta_{j}\|}_1\right)\right]
\end{multline*}
which is equivalent to the KKT conditions
\begin{equation} \label{m2}
    \begin{split}
    -\dfrac{1}{\sqrt{n}}\sn {y_{ij}\dot{U}_{ik}+ \sn \dfrac{\dot{U}_{ik}}{\sqrt{n}}\dfrac{n_j \exp\left(\dot{U}_i^\top\left(\frac{\bm{u_j}}{\sqrt{n}} + \bm{\beta}_j\right)\right)}{n_K+\scl n_l \exp\left(\dot{U}_i^\top\left(\frac{\bm{u}_l}{\sqrt{n}} + \bm{\beta}_l\right)\right)}}\\=-\dfrac{\lambda_{n}}{\sqrt{n}} sgn\left(\frac{u_{jk}}{\sqrt{n}} + \beta_{jk}\right), \text{if } \frac{u_{jk}}{\sqrt{n}} + \beta_{jk} \neq 0,
    \end{split}
\end{equation}
    \text{and  }
    \begin{equation} \label{m3}
        \begin{split}
            -\frac{\lambda_{n}}{\sqrt{n}}\leq -\dfrac{1}{\sqrt{n}}\sn{y_{ij}\dot{U}_{ik}+ \sn \dfrac{\dot{U}_{ik}}{\sqrt{n}}\dfrac{n_j\exp\left(\dot{U}_i^\top\left(\frac{\bm{u}_j}{\sqrt{n}} + \bm{\beta}_j\right)\right)}{n_K+\scl n_l \exp\left(\dot{U}_i^\top\left(\frac{\bm{u}_l}{\sqrt{n}} + \bm{\beta}_l\right)\right)}}\\ \leq \dfrac{\lambda_{n}}{\sqrt{n}},\text{if } \frac{u_{jk}}{\sqrt{n}} + \beta_{jk} = 0.
        \end{split}
    \end{equation} 
Now, VSC holds if $\tilde{\mathcal{A}}_{n,j} = \mathcal{A}_{n,j}, j=1,\dots,(K-1).$ WLOG, let $\mathcal{A}_{n,j} = \{0,1,\dots,(p_{0j}-1)\},j=1,\dots,(K-1).$ Then, VSC holds if for all $j=1,\dots,(K-1)$, $\tilde{\mathcal{A}}_{n,j} = \{0,1,\dots,p_{0j}\}$. This is equivalent to $\frac{\tilde{u}_{jk}}{\sqrt{n}} + \beta_{jk} \neq 0, \text{for } k = k_j = 0,1,\dots,p_{0j}, j=1,\dots,(K-1)$ and $\frac{\tilde{u}_{jk}}{\sqrt{n}} + \beta_{jk} = 0, \text{for } k = k_j = p_{0j}\dots,p, j=1,\dots,(K-1),$ where $\tilde{\bm{u}}_n = \sqrt{n} (\tilde{\bm{\beta}}_n - \bm{\beta}_n)$ is the solution of \eqref{m2}. Thus, due to equation \eqref{m2} and \eqref{m3},
\begin{multline} \label{m4}
    {\left(-\dfrac{1}{\sqrt{n}}\sn {y_{ij}\dot{U}_{i}^\onj + \sn \dfrac{\dot{U}_i^\onj}{\sqrt{n}}\dfrac{n_j \exp\left(\dot{U}_i^{(j)\prime}\left(\frac{\bm{u}_j^\on}{\sqrt{n}} + \bm{\beta}_j^\on\right)\right)}{n_K+\scl n_l \exp\left(\dot{U}_i^{(l)\prime}\left(\frac{\bm{u}^\on_l}{\sqrt{n}} + \bm{\beta}^\on_l\right)\right)}}\right)}_{j=1}^{K-1}\\= {\left(-\dfrac{\lambda_{n}}{\sqrt{n}} \bm{s}_{j}^\on\right)}_{j=1}^{K-1}
\end{multline}
    \text{and  }
    \begin{multline} \label{m5}
        -\frac{\lambda_{n}}{\sqrt{n}} \bm{1} \\ \leq {\left(-\dfrac{1}{\sqrt{n}}\sn {y_{ij}\dot{U}_{i}^\onjc + \sn \dfrac{\dot{U}_i^\onjc}{\sqrt{n}}\dfrac{n_j\exp\left(\dot{U}_i^{(j)\prime}\left(\frac{\bm{u}_j^\on}{\sqrt{n}} + \bm{\beta}_j^\on\right)\right)}{n_K+\scl n_l \exp\left(\dot{U}_i^{(l)\prime}\left(\frac{\bm{u}^\on_l}{\sqrt{n}} + \bm{\beta}^\on_l\right)\right)}}\right)}_{j=1}^{K-1}\\ \leq \dfrac{\lambda_{n}}{\sqrt{n}} \bm{1},
    \end{multline}
    where $\bm{s}_{j}^\on = sgn\left(\frac{u_{jk}}{\sqrt{n}} + \beta_{jk}\right).$ Applying Taylor's theorem on \eqref{m4} and by similar calculation as \eqref{7}, we get
\begin{equation} \label{m6}
    \bm{u}_{n}^\on = \tilde{\bm{L}}_{11,n}^{-1} \left[ \bm\Lambda_n^\on - \bm\zeta_n^\on -\frac{\lambda_n}{\sqrt{n}} \hat{\bm{s}}_{n}^{(1)}  \right] = f\left(\bm{u}_{n}^\on \right)
\end{equation}
with
\begin{equation*}
    \bm{u}_{n}^\on = {\left(\bm{u}_{nj}^\on\right)}_{j=1}^{K-1} \equiv {\left(\bm{u}_{j}^\on\right)}_{j=1}^{K-1}
    \text{ and } \hat{\bm{s}}_{n}^{(1)} = {\left(\hat{\bm{s}}_{nj}^{(1)}\right)}_{j=1}^{K-1} \equiv {\left(\hat{\bm{s}}_j^{(1)}\right)}_{j=1}^{K-1},
\end{equation*}
and $\tilde{\bm{L}}_{11,n}$ is a $(p_0+K-1)\times (p_0+K-1)$ matrix with block structure. Let $\tilde{\bm{L}}_{11,n}^{(j,j')}$ be the $(j,j')$th block of $\tilde{\bm{L}}_{11,n}$ which is a $(p_{0j}+1)\times (p_{0j'}+1)$ matrix. Then
\begin{align*}
& \tilde{\bm{L}}_{11,n}^{(j,j')} = \begin{cases} \frac{1}{n} \sn \dot{U}_i^{(j)} \dot{U}_i^{(j){\prime}}  \dfrac{n_j(n_K+\tilde{S}_{i,-j})\exp\left(\dot{U}_i^{(j)\prime}\bm{\beta}_j^\on\right)}{(n_K+\tilde{S}_i)^2}\text{ if } j=j',\\
-\frac{1}{n} \sn \dot{U}_i^{(j)} \dot{U}_i^{(j^{\prime})\prime} \dfrac{n_j n_{j'}\exp\left(\dot{U}_i^{(j)\prime}\bm{\beta}_j^\on\right)\exp\left(\dot{U}_i^{(j^{\prime})\prime}\bm{\beta}_{j'}^\on\right)}{(n_K+\tilde{S}_i)^2} \text{ if } j\neq j',   \end{cases}
\end{align*}
where 
\begin{equation*}
\tilde{S}_i= \sum_{l=1}^{K-1} n_l\exp\left(\dot{U}_i^{(l)\prime}\bm{\beta}^\on_l\right) \text{ and }
\tilde{S}_{i,-j} = \underset{l\neq j}{\sum_{l=1 }^{K-1}} n_l\exp\left(\dot{U}_i^{(l)\prime}\bm{\beta}^\on_l\right).    
\end{equation*}
 Furthermore,
\begin{align*}
    &\bm\Lambda_n^\on = (\bm\Lambda_{nj}^\on)_{j=1}^{K-1} = {\left(\frac{1}{\sqrt{n}} \sum_{i=1}^{n} \left( y_{ij} - \dfrac{n_j\exp\left(\dot{U}_i^{(j)\prime}\bm{\beta}_j^\on\right)}{n_K+\tilde{S}_i} \right) \dot{U}_{i}^{(j)}\right)}_{j=1}^{K-1},\\
    &\bm\zeta_n^\on = (\bm\zeta_{nj}^\on)_{j=1}^{K-1} = {\left(\frac{1}{2n^{3/2}}\sn\sum_{l=1}^{K-1}\sum_{l^{\prime}=1}^{K-1} h^{(l,l')}_{ij} \left(\dot{U}_i^{(l)\prime}\bm{u}_l^\on\right) \left(\dot{U}_i^{(l')\prime}\bm{u}_{l'}^\on\right) \dot{U}_i^{(j)} \right)}_{j=1}^{K-1},
\end{align*}
where we denote $\xi^e_{ij} := \exp({\xi_{ij}}) $ and 
\begin{align*}
    &h^{(l,l')}_{ij} = \begin{cases}
        \dfrac{n_j \Big(n_K+\sum_{k=1}^{K-1} n_k \xi^e_{ik}\Big) \Big(n_K+\sum_{k=1}^{K-1} n_k \xi^e_{ik} - 2n_j\xi^e_{ij}\Big) \xi^e_{ij}}{\Big(n_K+\sum_{k=1}^{K-1} n_k \xi^e_{ik}\Big)^3 },\\\hspace{8.5cm} \text{ if } j=l=l',\\
        -\dfrac{n_j n_l \Big(n_K+\sum_{k=1}^{K-1} n_k \xi^e_{ik} - 2n_j\xi^e_{ij}\Big) \xi^e_{ij} \xi^e_{il} }{\Big(n_K+\sum_{k=1}^{K-1} n_k \xi^e_{ik}\Big)^3 },\\\hspace{8.5cm} \text{ if }  j\neq l=l',\\
        -\dfrac{ n_j n_l \Big(n_K+\sum_{k=1}^{K-1} n_k \xi^e_{ik} - 2n_l\xi^e_{il}\Big) \xi^e_{ij} \xi^e_{il} }{\Big(n_K+\sum_{k=1}^{K-1} n_k \xi^e_{ik}\Big)^3 },\\\hspace{8.5cm} \text{ if }  j\neq l=l',\\
        \dfrac{2 n_j n_l n_{l'} \xi^e_{ij} \xi^e_{il} \xi^e_{il'} }{\Big(n_K+\sum_{k=1}^{K-1} n_k \xi^e_{ik}\Big)^3 },\hspace{3.9cm} \text{ if }  j\neq l\neq l',\\
    \end{cases}
\end{align*}
for $j,l,l' \in \{1,\dots,(K-1)\}$ and $(\xi_{ij})_{j=1}^{K-1}$ is on the line segment joining $\Big(\dot{U}_i^{(j)\prime}\bm{\beta}_j^{(1)}\Big)_{j=1}^{K-1}$ and $\Big(\dot{U}_i^{(j)\prime}\big(\frac{\bm{u}_j^{(1)}}{\sqrt{n}} + \bm{\beta}_j^{(1)}\big)\Big)_{j=1}^{K-1},$ $i=1,\dots n.$

Next, we will determine stochastic bounds for every quantity in \eqref{m6}. Toward that, we proceed by finding a bound for the operator norm of the random matrix $\tilde{\bm{L}}_{11,n}^{-1}.$ Following similar calculations as in \eqref{8.1} and \eqref{8} and the fact that $K$ is fixed, we get with probability tending to one
\begin{equation*}
    \begin{split}
        \| \tilde{\bm{L}}_{11,n} - \mathbb{E}(\tilde{\bm{L}}_{11,n}) \|_\infty &\leq  K \max_{1\leq l,l'\leq (K-1)} \|\tilde{\bm{L}}_{11,n}^{(l,l')} - \mathbb{E}(\tilde{\bm{L}}_{11,n}^{(l,l')}) \|_\infty\\& \leq C p_0 \sqrt{\frac{\log(np_0)}{n}},
    \end{split}
\end{equation*}
and 
\begin{equation}  \label{m7.1}
\begin{split}
    \| \mathbb{E}(\tilde{\bm{L}}_{11,n}) - \mathbb{E}({\bm{L}}_{11,n}) \|_\infty &\leq  K \max_{1\leq l,l'\leq (K-1)} \|\mathbb{E}(\tilde{\bm{L}}_{11,n}^{(l,l')}) - \mathbb{E}({\bm{L}}_{11,n}^{(l,l')}) \|_\infty\\
    &= o(p_0 \;n^{-1/2-a_2}).
\end{split}
\end{equation}
Then, by similar arguments as in \eqref{9}, it follows that, with probability tending to one
\begin{equation} \label{m8}
\begin{split}
   \| \tilde{\bm{L}}_{11,n}^{-1} \|_\infty &\leq C n^{a_2}.
\end{split}
\end{equation}
Next, to bound the quantity $\bm\Lambda_n^{(1)},$ define
\begin{equation*}
\begin{split}
    &\bm\Lambda_n^{(2)} = (\bm\Lambda_{nj}^{(2)})_{j=1}^{K-1} = {\left(\frac{1}{\sqrt{n}} \sum_{i=1}^{n} \left( y_{ij} - \dfrac{\pi_j\exp\left(\dot{U}_i^{(j)\prime}\bm{\beta}_j^\on\right)}{\pi_K+\tilde{\Sigma}_i} \right) \dot{U}_{i}^{(j)}\right)}_{j=1}^{K-1},\\
    &\bm\Lambda_n^{(3)} = (\bm\Lambda_{nj}^{(3)})_{j=1}^{K-1} \\&\qquad = \Bigg(\frac{1}{\sqrt{n}} \sum_{i=1}^{n} \Bigg( \dfrac{\pi_j\exp\Bigg(\dot{U}_i^{(j)\prime}\bm{\beta}_j^\on\Bigg)}{\pi_K+\tilde{\Sigma}_i} - \dfrac{n_j\exp\Bigg(\dot{U}_i^{(j)\prime}\bm{\beta}_j^\on\Bigg)}{n_K+\tilde{S}_i} \Bigg) \dot{U}_{i}^{(j)}\Bigg)_{j=1}^{K-1}\\
\end{split}
\end{equation*}
with
\begin{equation*}
    \tilde{\Sigma}_i = \sum_{l=1}^{K-1} \pi_l\exp\left(\dot{U}_i^{(l)\prime}\bm{\beta}^\on_l\right).
\end{equation*}
Therefore, following similar arguments as in \eqref{9.2}, \eqref{9.25} and \eqref{10}, we get
\begin{equation} \label{m8.1}
\begin{split}
    \|\bm\Lambda_n^\on\|_\infty &\leq K \max_{1\leq j\leq (K-1)} \|\bm\Lambda_{nj}\|_\infty\\ &\leq K \max_{1\leq j\leq (K-1)} \Big\{\|\bm\Lambda_{nj}^{(2)}\|_\infty + \|\bm\Lambda_{nj}^{(3)}\|_\infty \Big\} \\&\leq C \sqrt{\log{(np_0)}},
\end{split}
\end{equation}
with probability tending to one.
By \eqref{m8} and \eqref{m8.1}, with probability tending to one,
\begin{equation} \label{m9}
    \| \tilde{\bm{L}}_{11,n}^{-1}\bm\Lambda_n^\on \|_\infty \leq C n^{a_2} \sqrt{\log{(np_0)}}.
\end{equation}
Similarly, following similar calculations as in \eqref{12.0}, with probability tending to one, we have $\|\bm\xi_n^\on\|_\infty \leq K \max_{1 \leq j\leq (K-1)}\|\bm\xi_{nj}^\on\|_\infty$ and
\begin{align*}
   \|\bm\xi_n^\on\|_\infty \leq K \max_{1 \leq j\leq (K-1)}\|\bm\xi_{nj_1}^\on\| \leq C \frac{\|\bm{u_n}^{(1)}\|^2}{\sqrt{n}},
\end{align*}
and by similar arguments as in \eqref{12}, together with \eqref{m8}, we have,
\begin{equation} \label{m10}
    \| \tilde{\bm{L}}_{11,n}^{-1}\bm\zeta^\on \|_\infty \leq \frac{C}{\sqrt{n}}  n^{a_2} \|\bm{u_n}^{(1)}\|^2. 
\end{equation}
 Therefore, with probability tending to one,
\begin{equation} \label{m11}
    \| \tilde{\bm{L}}_{11,n}^{-1}\bm\zeta^\on \|_\infty \leq \|\bm{u_n}^{(1)}\|_\infty,
\end{equation}
provided $\Big\{\frac{C\sqrt{p_0}n^{a_2}}{\sqrt{n}} \|\bm{u_n}^{(1)}\| \leq 1 \Big\}$, with probability tending to one, which we later show is true.
Finally, by \eqref{m8}, we have with probability tending to one,
\begin{equation} \label{m12}
    \bigg\| \frac{\lambda_n}{\sqrt{n}} \tilde{\bm{L}}_{11,n}^{-1} \hat{\bm{s}}_n^{(1)} \bigg\|_\infty \leq C\frac{\lambda_n}{\sqrt{n}} n^{a_2}.
\end{equation}
Combining \eqref{m9}, \eqref{m11} and \eqref{m12}, we get $\bm{u}_{n}^\on = f(\bm{u}_{n}^\on)$ with \begin{equation*}
     \| f(\bm{u}_{n}^\on) \|_\infty \leq C \frac{\lambda_n}{\sqrt{n}}n^{a_2},
\end{equation*}
with probability tending to one. Therefore by Brouwer's fixed point theorem, \eqref{m4} has a solution $\tilde{\bm{u}}_n^{(1)}$ such that,
\begin{equation} \label{m13}
\left\| \tilde{\bm{u}}_n^{(1)} \right\|_\infty \leq C\dfrac{\lambda_{n}}{\sqrt{n}} n^{a_2},
\end{equation}
with probability tending to one. It follows that $\Big\{\frac{C\sqrt{p_0}n^{a_2}}{\sqrt{n}} \|\tilde{\bm{u}}_n^{(1)}\| \leq 1 \Big\}$, with probability tending to one, from assumption (A.4)(ii) and $\{\hat{s}_{jl} = s_{jl},\;l=0,\dots,p_{0j},\;j=1,\dots,(K-1)\},$ with probability tending to one, where $s_{jl} = sgn(\beta_{jl}),\;l=0,\dots,p_{0j},\;j=1,\dots,(K-1)$ due to \eqref{m13} and the beta-min assumption.

Next, to show \eqref{m5} holds, it is enough to show that for all $k_j = k = (p_{0j}+1),\dots,p \text{ and }j=1\dots,(K-1),$
\begin{multline} \label{m14}
    \left|\dfrac{1}{\sqrt{n}} \sum_{i=1}^{n} {\left(y_{ij}- \dfrac{n_j\exp\left(\dot{U}_i^{(j)\prime}\bm{\beta}_j^\on\right)}{n_K+\tilde{S}_i}\right)} \dot{U}_{ik}\right| + \left|{(\tilde{\bm{L}}_{21,n})}_{k.}^\top \tilde{\bm{L}}_{11,n}^{-1} \left(\bm\Lambda_n^{(1)} - \bm\zeta_n^{(1)}\right)\right| \\+
     \left|\frac{1}{2n^{3/2}}\sn\sum_{l=1}^{K-1}\sum_{l^{\prime}=1}^{K-1} h^{(l,l')}_{ij} \left(\dot{U}_i^{(l)\prime}\bm{u}_l^\on\right) \left(\dot{U}_i^{(l')\prime}\bm{u}_{l'}^\on\right) \dot{U}_{ik}\right|\\ + \left|\dfrac{\lambda_{n}}{\sqrt{n}} {(\tilde{\bm{L}}_{21,n})}_{k.}^\top \tilde{\bm{L}}_{11,n}^{-1} {{\hat{\bm{s}}}}_n^{(1)}\right| \leq \dfrac{\lambda_{n}}{\sqrt{n}},
\end{multline}
where, $\tilde{\bm{L}}_{21,n}$ is a $(p(K-1)-p_0)\times (p_0+K-1)$ block matrix. For $j,j' \in \{1,\dots,(K-1)\},$ let $\tilde{\bm{L}}_{21,n}^{(j,j')}$ be the $(j,j')$th block of $\tilde{\bm{L}}_{21,n}$ which is a $(p-p_{0l})\times (p_{0l^\top}+1)$ matrix defined as

\begin{align*}
& \tilde{\bm{L}}_{21,n}^{(j,j')} =\frac{1}{n} \sn g_i^{(j,j')} \dot{U}_i^{(j^c)} \dot{U}_i^{(j'){\prime}},
\end{align*}
where
\begin{align*}
    g_i^{(j,j')}=\begin{cases}
    \dfrac{n_j(n_K+\tilde{S}_{i,-j}) \exp\left(\dot{U}_i^{(j)\prime} \bm{\beta}_j^\on\right)}{(n_K+\tilde{S}_i)^2} \text{ if } j=j',\\
    -\dfrac{n_jn_{j'}\exp\left(\dot{U}_i^{(j)\prime} \bm{\beta}_j^\on\right)\exp\left(\dot{U}_i^{(j^{\prime})\prime}\bm{\beta}_{j'}^\on\right)}{(n_K+\tilde{S}_i)^2} \text{ if } j\neq j',
    \end{cases}
\end{align*}
and ${(\tilde{\bm{L}}_{21,n})}_{k_j.}^\top \equiv {(\tilde{\bm{L}}_{21,n})}_{k.}^\top$ are the rows of $\tilde{\bm{L}}_{21,n}.$
Define
\begin{align*}
    I & = \underset{1\leq j\leq (K-1)}{\max}\;\underset{(p_{0j}+1)\leq k\leq p}{\max}\left|\dfrac{1}{\sqrt{n}} \sum_{i=1}^{n} {\left(y_{ij}- \dfrac{n_j\exp\left(\dot{U}_i^{(j)\prime}\bm{\beta}_j^\on\right)}{n_K+\tilde{S}_i}\right)} \dot{U}_{ik}\right|,\\
    II & = \underset{1\leq j\leq (K-1)}{\max}\;\underset{(p_{0j}+1)\leq k\leq p}{\max}\left|{(\tilde{\bm{L}}_{21,n})}_{k.}^\top \tilde{\bm{L}}_{11,n}^{-1} \bm\Lambda_n^{(1)}\right|,\\
    III & = \underset{1\leq j\leq (K-1)}{\max}\;\underset{(p_{0j}+1)\leq k\leq p}{\max}\left|{(\tilde{\bm{L}}_{21,n})}_{k.}^\top \tilde{\bm{L}}_{11,n}^{-1} \bm\zeta_n^{(1)}\right|,\\
    IV &= \underset{1\leq j\leq (K-1)}{\max}\;\underset{(p_{0j}+1)\leq k\leq p}{\max}\left|\frac{1}{2n^{3/2}}\sn\sum_{l=1}^{K-1}\sum_{l^{\prime}=1}^{K-1} h^{(l,l')}_{ij} \left(\dot{U}_i^{(l)\prime}\bm{u}_l^\on\right) \left(\dot{U}_i^{(l')\prime}\bm{u}_{l'}^\on\right) \dot{U}_{ik}\right|,\\
    \text{and } V &= \underset{1\leq j\leq (K-1)}{\max}\;\underset{(p_{0j}+1)\leq k\leq p}{\max}\left| {(\tilde{\bm{L}}_{21,n})}_{k.}^\top \tilde{\bm{L}}_{11,n}^{-1} {{{\bm{s}}}}_n^{(1)}\right|.
\end{align*}
Therefore, following \eqref{m13} and similar arguments in \eqref{17}, we can rewrite \eqref{m14} as
\begin{equation} \label{m15}
    \dfrac{\lambda_{n}}{\sqrt{n}} (1-V) \geq I + II + III + IV.
\end{equation}
Before, proceeding to show \eqref{m15}, note that using similar arguments as in \eqref{19.1}, with probability going to 1, we get
\begin{align} \label{m16}
    &\underset{1\leq j\leq (K-1)}{\max}\;\underset{(p_{0j}+1)\leq k\leq p}{\max} \| {(\tilde{\bm{L}}_{21,n} - \mathbb{E}(\tilde{\bm{L}}_{21,n}))}_{k.} \|_1 \nonumber\\
    & \leq C p_0\; \underset{1\leq j,j' \leq (K-1)}{\max}\;\underset{(p_{0j'}+1)\leq k'\leq p}{\underset{(p_{0j}+1)\leq k\leq p}{\max}} \left| \frac{1}{n} \sn \left(g_i^{(j,j')}\dot{U}_{ik} \dot{U}_{ik^\top} - \mathbb{E} \left(g_i^{(j,j')}\dot{U}_{il} \dot{U}_{ik^\top}\right) \right)\right| \nonumber\\
    & \leq C p_0 \left[\sqrt{\frac{\log(np)}{n}} + \frac{(\log(n))^{2/\kappa}(\log(np))^{1/\min\{1,\kappa/2\}}}{n}\right].
\end{align}
Again, due to the calculations similar to as in \eqref{19.2} and \eqref{m7.1}, we have (with probability tending to one)
\begin{equation} \label{m16.1}
   \underset{1\leq j\leq (K-1)}{\max}\;\underset{(p_{0j}+1)\leq k\leq p}{\max} \Big\|\Big(\mathbb{E}(\tilde{\bm{L}}_{21,n}) - \mathbb{E}(\bm{L}_{21,n})\Big)_{k.} \Big\|_1 = o(n^{-1/2-a_2}).
\end{equation}
We have $\underset{1\leq j\leq (K-1)}{\max}\;\underset{(p_{0j}+1)\leq k\leq p}{\max}\|\mathbb{E}(\bm{L}_{21,n})_{k.}\| = C p_0,$ due to the sub-Weibull assumption.
Combining \eqref{m16} and \eqref{m16.1}, we get with probability tending to one
\begin{multline} \label{m16.2}
    \underset{1\leq j\leq (K-1)}{\max}\;\underset{(p_{0j}+1)\leq k\leq p}{\max} \Big\|\Big(\tilde{\bm{L}}_{21,n} - \mathbb{E}(\bm{L}_{21,n})\Big)_{k.} \Big\|_1 \\ \leq C p_0 \left[\sqrt{\frac{\log(np)}{n}} + \frac{(\log(n))^{2/\kappa}(\log(np))^{1/\min\{1,\kappa/2\}}}{n}\right]
\end{multline} 
and
\begin{equation*}
    \underset{1\leq j\leq (K-1)}{\max}\;\underset{(p_{0j}+1)\leq k\leq p}{\max}\|(\bm{L}_{21,n})_{k.}\| = Cp_0.
\end{equation*}
Then, \eqref{m15} follows using bounds in \eqref{m9}, \eqref{m10}, \eqref{m12}, \eqref{m13} and \eqref{m16.2} and following similar calculations as in \eqref{18}, \eqref{20}, \eqref{21}, \eqref{22}. Finally, the statement of the theorem is immediate. \end{proof}

\begin{lem} \label{pl:multi}
    (Multi-class Post LASSO Logistic) Recall the post-LASSO Logistic estimator $\hat{\bm{\beta}}_n =  \Big(\big(\hat{\bm{\beta}}_n^{\tilde{\mathcal{A}}_{n}}\big)^\top, \bm{0}^\top\Big) =\Big(\big(\hat{\bm{\beta}}_n^{\tilde{\mathcal{A}}_{n1}}\big)^\top, \dots, \big(\hat{\bm{\beta}}_n^{\tilde{\mathcal{A}}_{n(K-1)}}\big)^\top, \bm{0}^\top\Big)^\top$ defined in Section~3 of the main manuscript. Define $\hat{u}_n = \sqrt{n}(\hat{\bm{\beta}}_n - \bm\beta_n)$. Under assumptions (C.1)--(C.3), as $n\rightarrow \infty$, we have
    $$\mathbb{P}\Big(\big\|\bm{\hat{u}}_n^{\tilde{\mathcal{A}}_n}\big\|_\infty \leq C n^{a_2} \sqrt{\log{(np_0)}}\Big),$$
for some constant $C\geq 1$.    
\end{lem}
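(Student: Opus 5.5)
Following the proof of Lemma~\ref{pl}, we will reduce the multi-class post-Lasso statement to the fixed-point representation already derived in the proof of Lemma~\ref{vscm}. Let $\mathbf{A}_n=\{\tilde{\mathcal{A}}_{nj}=\mathcal{A}_{nj},\ j=1,\dots,K-1\}$ be the event on which variable selection is exact. By Lemma~\ref{vscm}, $\mathbbm{P}(\mathbf{A}_n)=1-o(1)$. On $\mathbf{A}_n$, $\bar{\bm{\beta}}_n^{\tilde{\mathcal{A}}_n}$ is the unpenalized multi-class logistic MLE restricted to the true blocks $\dot U_i^{(j)}$, $j=1,\dots,K-1$. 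Its score equations are exactly \eqref{m4} with the penalty term $-\frac{\lambda_n}{\sqrt n}\bm s^{(1)}_j$ removed. The same second-order Taylor expansion that produced \eqref{m6} then gives, on $\mathbf{A}_n$,
\begin{equation*}
\hat{\bm u}_n^{(1)}=\tilde{\bm L}_{11,n}^{-1}\big[\bm\Lambda_n^{(1)}-\bm\zeta_n^{(1)}\big]=:g(\hat{\bm u}_n^{(1)}),
\end{equation*}
with $\tilde{\bm L}_{11,n}$, $\bm\Lambda_n^{(1)}$ and $\bm\zeta_n^{(1)}$ as in the proof of Lemma~\ref{vscm}. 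The $\xi_{ij}$ now lie on segments determined by $\hat{\bm u}_n^{(1)}$.

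The stochastic bounds carry over unchanged. By \eqref{m8}, $\|\tilde{\bm L}_{11,n}^{-1}\|_\infty\le Cn^{a_2}$. By \eqref{m9}, $\|\tilde{\bm L}_{11,n}^{-1}\bm\Lambda_n^{(1)}\|_\infty\le Cn^{a_2}\sqrt{\log(np_0)}$. By \eqref{m10}, $\|\tilde{\bm L}_{11,n}^{-1}\bm\zeta_n^{(1)}\|_\infty\le Cn^{a_2-1/2}\|\bm u\|^2$, all with probability tending to one. None of these bounds depends on the penalty, and \eqref{m10} holds uniformly in $\bm u$, because the weights $h^{(l,l')}_{ij}$ are bounded by a constant depending only on $K$ whatever the intermediate points $\xi_{ij}$ are.

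Next we apply Brouwer's fixed point theorem to $g$ on the cube $B=\{\bm u:\|\bm u\|_\infty\le 2Cn^{a_2}\sqrt{\log(np_0)}\}$. For $\bm u\in B$, $\|\bm u\|^2\le p_0\|\bm u\|_\infty^2$, so
\begin{equation*}
\|g(\bm u)\|_\infty\le Cn^{a_2}\sqrt{\log(np_0)}+Cn^{a_2-1/2}p_0\,n^{2a_2}\log(np_0).
\end{equation*}
The second term is $o\big(n^{a_2}\sqrt{\log(np_0)}\big)$ provided $p_0n^{2a_2}\sqrt{\log(np_0)}/\sqrt n\to0$. This follows from (C.4)(i)--(ii): (C.4)(i) gives $\lambda_n/\sqrt n\ge p_0n^{a_2}\sqrt{\log(np_0)}$, and substituting into (C.4)(ii) yields $p_0^3n^{4a_2}\sqrt{\log(np_0)}/\sqrt n=o(1)$. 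Hence $g(B)\subset B$, and Brouwer gives a root in $B$ of the restricted score equation.

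It remains to identify this root with $\hat{\bm u}_n^{(1)}$. The restricted multi-class negative log-likelihood is convex, and it is strictly convex on $B$ with probability tending to one because its Hessian there is close to $\mathbb{E}\bm L_{11,n}$, which is invertible by (C.2)(i). Any stationary point is therefore the unique minimizer, so $\hat{\bm u}_n^{(1)}\in B$. This yields the claimed bound $\|\hat{\bm u}_n^{\tilde{\mathcal A}_n}\|_\infty\le Cn^{a_2}\sqrt{\log(np_0)}$ on $\mathbf{A}_n$, and so with probability tending to one.

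The main obstacle is the self-referential nature of the representation: $\bm\zeta_n^{(1)}$ depends on $\hat{\bm u}_n^{(1)}$ through the $\xi_{ij}$. The bound must therefore come from the fixed-point argument on $B$ together with uniqueness, not from plugging in a priori bounds. Related to this, the uniform control of the matrix $\max_j\|\sum_i|\dot U_{ij}|\dot U_i^{(k)}\dot U_i^{(k)\top}\|$ appearing in \eqref{m10} uses (C.2)(iii) with Lemmas~\ref{kcvl2} and~\ref{kcvl3}, exactly as in \eqref{12.0}.
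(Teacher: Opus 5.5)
Your proposal is correct and takes essentially the paper's route. Like the paper, you restrict to the exact-selection event from Lemma~\ref{vscm}, write the unpenalized restricted score equation in the fixed-point form \eqref{m6} without the sign term, and reuse the bounds \eqref{m8}--\eqref{m10} with Brouwer's theorem; the paper states this explicitly only in the two-sample Lemma~\ref{pl}. Your explicit cube, your check via (C.4)(i)--(ii) that the quadratic remainder is negligible, and your convexity argument identifying the Brouwer root with the post-Lasso estimator make explicit steps the paper leaves tacit; like the paper, you also rely on (C.4) and the beta-min condition behind Lemma~\ref{vscm}, although the statement lists only (C.1)--(C.3).
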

\begin{proof} The proof follows from the proof of Lemma~\ref{vscm}. \end{proof}

\begin{lem}\label{lem:asymnor:multi}
  (Normal approximation of the components of the centered version of $\bm{T}^{(2)}_{n}$)\\ Under assumptions (C.1)--(C.4), we have for any $k=k_j=0,1,\dots, p_{0j},\; j=1,\dots,(K-1)$
   \begin{equation*}
    \underset{x\in\mathbb{R}}{\sup} \;| \mathbbm{P}(T^{(2)}_{1nk} \leq  x) - {\Phi}(\sigma^{-1}_{nk} x)| = o(1),
\end{equation*} 
where $\bm{T}^{(2)}_{1nk} = \sqrt{n}\big(\hat{\bm{\beta}}_{nk}^\on - {\bm{\beta}}_{nk}^\on) + b_2(p,n) Z_{2k}$ and  $\sigma^2_{nk} = \Big((\mathbb{E}\bm{L}^{(j,j)}_{11, n})^{-1}\Big)_{kk}$.
\end{lem}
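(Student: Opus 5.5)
The plan is to mirror the proof of Lemma~\ref{lem:asymnor}, but to linearize the $j$th block of the multi-class score equation \eqref{m4} \emph{block by block}. This way the leading stochastic term is driven by $\big(\mathbb{E}\bm L^{(j,j)}_{11,n}\big)^{-1}$ rather than by the full inverse. We work on the event $\mathbf A_n=\{\tilde{\mathcal A}_{nl}=\mathcal A_{nl},\ l=1,\dots,K-1\}$, which has probability $1-o(1)$ by Lemma~\ref{vscm}. On $\mathbf A_n$ the post-Lasso estimator solves \eqref{m4} with the penalty term removed. Expanding the $j$th block of that equation by Taylor's theorem, exactly as in \eqref{m6}, gives
\begin{equation*}
\tilde{\bm L}^{(j,j)}_{11,n}\hat{\bm u}^{(1)}_{nj}=\bm\Lambda^{(1)}_{nj}-\sum_{l\ne j}\tilde{\bm L}^{(j,l)}_{11,n}\hat{\bm u}^{(1)}_{nl}-\bm\zeta^{(1)}_{nj}.
\end{equation*}
Inverting the diagonal block only, the $k$th coordinate of $\hat{\bm u}^{(1)}_{nj}$ becomes
\begin{equation*}
T^{(2)}_{1nk}=S_{nk}+R_{nk},\qquad S_{nk}=\Big(\big(\mathbb{E}\bm L^{(j,j)}_{11,n}\big)^{-1}\Big)_{k\cdot}^\top\bm\Lambda^{(2)}_{nj}.
\end{equation*}
The remainder $R_{nk}$ collects four pieces: the $\bm\Lambda^{(3)}_{nj}$ prior-mismatch term, the matrix-perturbation terms $\tilde{\bm L}^{(j,j)-1}_{11,n}-(\mathbb{E}\bm L^{(j,j)}_{11,n})^{-1}$ applied to $\bm\Lambda^{(1)}_{nj}$, the quadratic term $\bm\zeta^{(1)}_{nj}$, the cross-block term $\sum_{l\neq j}(\tilde{\bm L}^{(j,j)}_{11,n})^{-1}\tilde{\bm L}^{(j,l)}_{11,n}\hat{\bm u}^{(1)}_{nl}$, and $b_2(p,n)Z_{2k}$.

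\emph{Variance of the leading term.} Conditionally on $U_i$, the score $y_{ij}-p_j(\bm\beta|U_i)$ has variance $p_j(1-p_j)$. With $\pi$-weights this equals exactly $w_i^{(j,j)}$. Hence
\begin{equation*}
\mathrm{Var}\big(\bm\Lambda^{(2)}_{nj}\big)=\mathbb{E}\bm L^{(j,j)}_{11,n},
\end{equation*}
and therefore $\mathrm{Var}(S_{nk})=\big((\mathbb{E}\bm L^{(j,j)}_{11,n})^{-1}\big)_{kk}=\sigma^2_{nk}$, while $\mathbb{E}S_{nk}=0$. This identity is what produces the diagonal-block variance in the statement. $S_{nk}$ is a normalized sum of independent summands. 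Their third moments are bounded by $Cn^{3a_2}(\log p_0)^{3/2}$ using (C.2)(i), restricted to the diagonal block, together with Lemma~\ref{kcvl3}. A lower bound $\sigma_{nk}\ge c>0$ follows from (C.2)(ii) applied blockwise. The Berry--Esseen bound then gives $\sup_x|\mathbb{P}(S_{nk}\le x)-\Phi(\sigma_{nk}^{-1}x)|=o(1)$ under (C.4), exactly as in \eqref{31}.

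\emph{Negligible pieces of the remainder.} For the $\bm\Lambda^{(3)}$, perturbation and $\bm\zeta$ pieces I will reuse \eqref{m7.1}, \eqref{m8}, \eqref{m8.1}, \eqref{m10} and \eqref{m13}. Each of these holds verbatim for the $(j,j)$ block, since a block of a matrix inherits the $\|\cdot\|_\infty$ bounds. Together they give a deterministic $\Delta_n=o(1)$ bounding these terms with probability tending to one, as in \eqref{30}. The final step is the sandwich argument from the proof of Lemma~\ref{lem:asymnor}, which transfers the Berry--Esseen bound from $S_{nk}$ to $T^{(2)}_{1nk}$.

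\emph{Main obstacle: the cross-block term.} The term $\sum_{l\ne j}(\tilde{\bm L}^{(j,j)}_{11,n})^{-1}\tilde{\bm L}^{(j,l)}_{11,n}\hat{\bm u}^{(1)}_{nl}$ is not small a priori. Lemma~\ref{pl:multi} only gives $\|\hat{\bm u}^{(1)}_{nl}\|_\infty=O_p(n^{a_2}\sqrt{\log(np_0)})$. My first attempt is to show that $\big(\mathbb{E}\bm L^{(j,j)}_{11,n}\big)^{-1}_{k\cdot}{}^{\!\top}\mathbb{E}\bm L^{(j,l)}_{11,n}$ is $o(n^{-a_2}/\sqrt{\log(np_0)})$ in $\ell_1$ norm. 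The off-diagonal weights $w^{(j,l)}$ carry the factor $\pi_j\pi_l e^{\cdot}e^{\cdot}/(\pi_K+\tilde\Sigma)^2$, and I would try to bound these blocks via (C.2)(iii) and sub-Weibull moments, then add this rate as a standing condition where needed. If that fails, the fallback is to keep the cross term in the linear part and show it is asymptotically uncorrelated with $S_{nk}$. That route would only identify $\sigma^2_{nk}$ up to the Schur-complement correction, so I would record explicitly that the stated diagonal-block form requires the cross-block contribution to vanish. That is the step I expect to be delicate.
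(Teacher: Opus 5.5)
Your plan has a genuine gap at the step you flag yourself, and neither of your remedies closes it. Inverting only the diagonal block leaves $\sum_{l\neq j}(\tilde{\bm L}^{(j,j)}_{11,n})^{-1}\tilde{\bm L}^{(j,l)}_{11,n}\hat{\bm u}^{(1)}_{nl}$ in $R_{nk}$, and this term is of the same order as $S_{nk}$.

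\begin{itemize}
\item \emph{Why the cross blocks are not small.} The weights are $w_i^{(j,l)}=-p_j(\bm\beta\mid U_i)\,p_l(\bm\beta\mid U_i)<0$. With the intercepts active (the paper's $\mathcal A_{nj}=\{0,\dots,p_{0j}\}$), the $(0,0)$ entry of $\mathbb E\bm L^{(j,l)}_{11,n}$ is $-\mathbb E[p_jp_l]\neq 0$, and the cross blocks are generically as large as the diagonal ones.
\item \emph{Your first attempt.} The smallness you want generically fails, so it cannot be added as a harmless standing condition.
\item \emph{Your fallback.} Here $\hat{\bm u}^{(1)}_{nl}$ is driven by $\bm\Lambda^{(2)}_{nl}$. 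Since $\mathrm{Cov}(y_{ij},y_{il}\mid U_i)=-p_jp_l$, the cross-covariance of $\bm\Lambda^{(2)}_{nj}$ and $\bm\Lambda^{(2)}_{nl}$ is exactly $\mathbb E\bm L^{(j,l)}_{11,n}\neq 0$. So the cross term is correlated with $S_{nk}$ and changes the limiting variance.
\end{itemize}

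Consequently, under the literal reading $\sigma^2_{nk}=((\mathbb E\bm L^{(j,j)}_{11,n})^{-1})_{kk}$ the conclusion does not hold in general. The limiting variance is the $(k,k)$ entry of the $(j,j)$ block of $(\mathbb E\bm L_{11,n})^{-1}$, i.e.\ of the inverse Schur complement. That matrix dominates $(\mathbb E\bm L^{(j,j)}_{11,n})^{-1}$ in the Loewner order. For example, with $K=3$, $\pi_k\equiv 1/3$, $\bm\beta$ near $\bm 0$ and a common index set with second-moment matrix $M$, the two are approximately $6M^{-1}$ and $\tfrac92 M^{-1}$. The full-inverse diagonal is what the paper actually uses, $\sigma^2_{nl}=((\mathbb E\bm L_{11,n})^{-1})_{ll}$ in \eqref{m22}, and it is evidently the intended meaning of $\sigma^2_{nk}$.

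The missing step is to invert jointly, exactly as in the proof of Lemma~\ref{lem:asymnor}, which is what the paper does.
\begin{itemize}
\item On the VSC event, \eqref{m6} without the penalty gives $\hat{\bm u}^{(1)}_n=\tilde{\bm L}_{11,n}^{-1}(\bm\Lambda^{(1)}_n-\bm\zeta^{(1)}_n)$ for the whole stacked vector, the analogue of \eqref{26}.
\item Take $S_{nl}=((\mathbb E\bm L_{11,n})^{-1})_{l\cdot}^\top\bm\Lambda^{(2)}_n$ with the full stacked score. Its covariance is $\mathbb E\bm L_{11,n}$ including the off-diagonal blocks, so $\mathrm{Var}(S_{nl})=((\mathbb E\bm L_{11,n})^{-1})_{ll}$.
\item No cross-block remainder arises. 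The $\bm\Lambda^{(3)}$, matrix-perturbation and $\bm\zeta$ pieces are handled by \eqref{m7.1}, \eqref{m8}, \eqref{m8.1}, \eqref{m10} and \eqref{m13}, as in \eqref{30}. Your Berry--Esseen and sandwich steps then go through unchanged.
\end{itemize}

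This route also matches the hypotheses. (C.2)(i)--(ii) are stated for the full inverse $(\mathbb E\bm L_{11,n})^{-1}$ and do not transfer blockwise to $(\mathbb E\bm L^{(j,j)}_{11,n})^{-1}$. In particular, by the Loewner ordering above, the lower bound in (C.2)(ii) runs the wrong way to give your claimed $\sigma_{nk}\ge c$ for the diagonal-block variance.
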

\begin{proof} The proof follows similar arguments as in the proof of Lemma~\ref{lem:asymnor}. \end{proof}

\begin{lem} \label{lem:minimax}
Let $\theta$ be the $p$-dimensional parameter of interest and $P_\theta$ denote the law of the data. Consider testing $H^{(0)}:  \theta = 0$ vs $H^{(1)}: \theta \in \mathcal{H}^{(1)} = \{ \bm b \in \mathbb{R}^p: \|\bm b\|_0 = p_0\; \text{and}\; \min_{j:b_{j} \neq 0}|b_{j}|\geq \rho\}$, for some $0< p_0 \leq p$ and $\rho > 0$. Let $\mu_\rho$ ($\equiv \mu_{p,p_0, \rho}$) be some probability measure on $\mathcal{H}^{(1)}$ and $\mathbbm{P}_{\mu_\rho}(A) = \int \mathbbm{P}_{\bm b}(A) \mu_\rho(\bm b)$ denote the posterior probability of any Borel set $A$. 
If $\Phi_{\alpha}$ is a level $\alpha$ test for testing $H^{(0)}$ vs $H^{(1)}$ then \[ \inf_{\Phi_\alpha} \sup_{ \theta\in\mathcal{H}^{(1)}} \mathbbm{P}_{\theta}(\Phi_\alpha = 0 ) \geq \inf_{\Phi_\alpha} \mathbbm{P}_{\mu_\rho}(\Phi_\alpha = 0) \geq 1-\alpha - \sqrt{\chi^2(\mathbbm{P}_{\mu_\rho}, \mathbbm{P}_0)},\]
where $\chi^2(\mathbbm{P}_{\mu_{\rho}}, \mathbbm{P}_0)$ denotes the chi squared divergence of $P_{\mu_{\rho}}$ from $P_0$.
\end{lem}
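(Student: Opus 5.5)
This is the standard Le Cam two-point (or, more precisely, point-versus-mixture) argument, and I will prove the two inequalities separately.

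\emph{First inequality.} It holds for every fixed test. Since $\mu_\rho$ is a probability measure supported on $\mathcal{H}^{(1)}$, Fubini gives
\[
\mathbbm{P}_{\mu_\rho}(\Phi_\alpha=0)=\int \mathbbm{P}_{\bm b}(\Phi_\alpha=0)\,\mu_\rho(d\bm b)\le \sup_{\theta\in\mathcal{H}^{(1)}}\mathbbm{P}_\theta(\Phi_\alpha=0).
\]
Taking the infimum over level-$\alpha$ tests on both sides yields the first inequality. The only point to check is measurability of $\bm b\mapsto \mathbbm{P}_{\bm b}(A)$, which I will take as implicit in the definition of $\mathbbm{P}_{\mu_\rho}$.

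\emph{Second inequality.} Fix a level-$\alpha$ test $\Phi_\alpha$ and let $A=\{\Phi_\alpha=1\}$, so that $\mathbbm{P}_0(A)\le\alpha$. Then
\[
\mathbbm{P}_{\mu_\rho}(\Phi_\alpha=0)=1-\mathbbm{P}_{\mu_\rho}(A)\ge 1-\mathbbm{P}_0(A)-|\mathbbm{P}_{\mu_\rho}(A)-\mathbbm{P}_0(A)|\ge 1-\alpha-d_{TV}(\mathbbm{P}_{\mu_\rho},\mathbbm{P}_0).
\]
It remains to bound $d_{TV}\le\sqrt{\chi^2}$. If $\mathbbm{P}_{\mu_\rho}\not\ll\mathbbm{P}_0$, then $\chi^2=\infty$ by convention and the bound is trivial. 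Otherwise, let $L=d\mathbbm{P}_{\mu_\rho}/d\mathbbm{P}_0$. Then
\[
d_{TV}=\sup_A|\mathbbm{P}_{\mu_\rho}(A)-\mathbbm{P}_0(A)|=\tfrac12\,\mathbb{E}_0|L-1|\le \tfrac12\bigl(\mathbb{E}_0(L-1)^2\bigr)^{1/2}=\tfrac12\sqrt{\chi^2(\mathbbm{P}_{\mu_\rho},\mathbbm{P}_0)}\le\sqrt{\chi^2},
\]
where the middle step is Cauchy--Schwarz. This bound is uniform in $\Phi_\alpha$, so taking the infimum over level-$\alpha$ tests completes the proof. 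If randomized tests are allowed, I will run the same argument with $\mathbb{E}(1-\Phi_\alpha)$ in place of probabilities, using $|\mathbb{E}_{\mu_\rho}f-\mathbb{E}_0 f|\le d_{TV}$ for any $0\le f\le1$.

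There is no real obstacle here. The substantive work in applying this lemma to Theorem~\ref{theo:minimax} is the choice of $\mu_\rho$, for example uniform random sparse supports in the Gaussian model, and the computation of $\chi^2$. That computation is outside the present statement.
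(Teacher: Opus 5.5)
Your proof is correct and follows the same route as the paper, which invokes Baraud's averaging argument (the supremum over $\mathcal{H}^{(1)}$ dominates the $\mu_\rho$-mixture, whose type II error is at least $1-\alpha$ minus the total variation distance to $\mathbbm{P}_0$) together with the bound $2\,\mathrm{TV}(\mathbbm{P}_{\mu_\rho},\mathbbm{P}_0)\le\sqrt{\chi^2(\mathbbm{P}_{\mu_\rho},\mathbbm{P}_0)}$. Your Cauchy--Schwarz step is exactly that inequality, so it gives the sharper $\tfrac12\sqrt{\chi^2}$, which you then loosen to match the stated bound.
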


\begin{proof} This lemma essentially follows from the arguments given in Section 7.1 of \cite{Baraud2002minimax} and the fact that $2\text{TV}(\mathbbm{P}_{\mu_\rho}, \mathbbm{P}_0) \leq \sqrt{\chi^2(\mathbbm{P}_{\mu_\rho}, \mathbbm{P}_0)}$, where $\text{TV}(\mathbbm{P}_{\mu_\rho}, \mathbbm{P}_0)$ is the total variation distance between $\mathbbm{P}_{\mu_\rho}$ and $\mathbbm{P}_0$. A version of the lemma is stated as Lemma 5 in \cite{Ma2021}. \end{proof}

\section*{Appendix B: Proof of the main results} \label{sec: theorem}
\subsection{Proof of Proposition~\ref{prop:transformation}}
This statement is a special case of Proposition~\ref{prop:multitransformation}; hence, we omit the proof. \qed

\subsection{Proof of Theorem~\ref{theo:size}}
Recall that the Logistic Lasso estimator is defined as 
  \begin{align*}
    \tilde{\bm{\beta}}_n = \underset{(t_0, \bm{t}^\top)^\top \in \mathcal{R}^{(p+1)}}{\arg\min}& \Big[-\sum_{i=1}^n z_i (t_0 + W_i^\top \bm{t})\nonumber \\
    & + \sum_{i=1}^{n} \log (n_1+n_2\exp(t_0 + W_i^\top\bm{t})) +\lambda_n \sum_{i=0}^{p} |t_j|\Big]. 
\end{align*}
Now define, $\tilde{\bm{u}}_n = \sqrt{n} (\tilde{\bm{\beta}}_n - \bm{\beta}_n)$. Then by the change of variables for convex objective functions (see, e.g., \citet[Section~4.1.3]{boyd2004} for change of variables), we have
\begin{align}
\tilde{\bm{u}}_n 
&= \underset{\bm{u}\in \mathcal{R}^{(p+1)}}{\arg\min} \left[-\sum_{i=1}^{n}z_i\dot{W}_i^\top\frac{\bm{u}}{\sqrt{n}}+ \sum_{i=1}^n \log \left(n_1+n_2\exp\left(\dot{W}_i^\top\left(\frac{\bm{u}}{\sqrt{n}} + \bm{\beta_n}\right)\right)\right)\right.\nonumber\\
& \qquad \qquad +\left.\lambda_n\sum_{j=0}^{p}\left(\left|\frac{u_j}{\sqrt{n}} + \beta_{nj}\right|-|\beta_{nj}|\right)\right]. \label{eq2}
\end{align}
Here, \eqref{eq2} is equivalent to the KKT conditions
\begin{multline} \label{eq3}
    -\dfrac{1}{\sqrt{n}}\sum_{i=1}^{n}{z_{i}\dot{W}_{ij}+ \sum_{i=1}^{n} \dfrac{\dot{W}_{ij}}{\sqrt{n}}\dfrac{n_2\exp\left(\dot{W}_i^\top\left(\frac{\bm{u}}{\sqrt{n}} + \bm{\beta_n}\right)\right)}{n_1+n_2\exp\left(\dot{W}_i^\top\left(\frac{\bm{u}}{\sqrt{n}} + \bm{\beta_n}\right)\right)}}\\=-\dfrac{\lambda_{n}}{\sqrt{n}} sgn\left(\frac{u_j}{\sqrt{n}} + \beta_{nj}\right), \text{if } \frac{u_j}{\sqrt{n}} + \beta_{nj} \neq 0,
    \end{multline}
    \text{and  }
    \begin{multline} \label{eq4}
        -\frac{\lambda_{n}}{\sqrt{n}}\leq -\dfrac{1}{\sqrt{n}}\sum_{i=1}^{n}{z_{i}\dot{W}_{ij}+ \sum_{i=1}^{n} \dfrac{\dot{W}_{ij}}{\sqrt{n}}\dfrac{n_2\exp\left(\dot{W}_i^\top\left(\frac{\bm{u}}{\sqrt{n}} + \bm{\beta_n}\right)\right)}{n_1+n_2\exp\left(\dot{W}_i^\top\left(\frac{\bm{u}}{\sqrt{n}} + \bm{\beta_n}\right)\right)}}\\ \leq \dfrac{\lambda_{n}}{\sqrt{n}},\text{if } \frac{u_j}{\sqrt{n}} + \beta_{nj} = 0.
    \end{multline}
First, we would like to show that with probability goes to 1, $\tilde{\mathcal{A}}_n = \mathcal{A}_n = \emptyset.$ under $H_0^\prime$. This is equivalent to establishing $\frac{\tilde{\bm{u}}_n}{\sqrt{n}} + \bm\beta_{n} = 0$. Thus, due to equation (\ref{eq3}) and (\ref{eq4}), for all $j=0,1,\dots,p$, we need to establish
\begin{equation} \label{eq29}
    -\frac{\lambda_{n}}{\sqrt{n}} \leq \dfrac{1}{\sqrt{n}}\sum_{i=1}^{n}\left(z_{i}-\pi_2\right)\dot{W}_{ij} + \dfrac{1}{\sqrt{n}}\sum_{i=1}^{n}\left(\pi_2-\frac{n_2}{n}\right)\dot{W}_{ij} \leq \dfrac{\lambda_{n}}{\sqrt{n}},
\end{equation} 
Now we have, under $H_0^\prime$, with probability tending to one,
\begin{equation*}
     \underset{0\leq j\leq p}{\max} \left| \dfrac{1}{\sqrt{n}}\sum_{i=1}^{n}\left(z_{i}-\pi_2\right)\dot{W}_{ij} \right| \leq C \left[\sqrt{\log(np)} + \frac{(\log(n))^{1/\kappa}(\log(np))^{1/\min\{1,\kappa\}}}{\sqrt{n}} \right],
\end{equation*}
due to Lemma~\ref{kcvl2}. Again, due to the assumptions (A.3), and Lemma~\ref{kcvl2}, we have with probability tending to one,
\begin{equation*}
\begin{split}
    &\quad \underset{0\leq j \leq p}{\max}\bigg|\dfrac{1}{\sqrt{n}}\sn \left(\pi_2-\frac{n_2}{n}\right)\dot{W}_{ij}\bigg| \\
    & \leq C \bigg|\pi_2-\frac{n_2}{n}\bigg| \underset{0\leq j \leq p}{\max} \bigg\{\bigg|\dfrac{1}{\sqrt{n}} \sn \Big(\dot{W}_{ij} - \mathbb{E}(\dot{W}_{ij}) \Big)\bigg| + \bigg|\dfrac{1}{\sqrt{n}} \sn \mathbb{E}(\dot{W}_{ij})\bigg| \bigg\}\\
    &= o(1).
\end{split}
\end{equation*}
Therefore, equation (\ref{eq29}) is true due to the assumption (A.4)(i). Hence the VSC holds under $H_0'$

Now, we would like to show that the constructed test has asymptotic size $\alpha$. In that regard, note that
\begin{align*}
    &\quad \mathbb{E}_{H_0^\prime}\left(\psi_{n, \alpha}^{(1)}\right)= \mathbbm{P}_{H_0^\prime}\left( \underset{0\leq j \leq p}{\max}\; |T_{nj}| \leq m_{n,p,1-\alpha}^{(1)} \right)\\
    &= \mathbbm{P}_{H_0^\prime}\left( \Big\{\underset{0\leq j \leq p}{\max}\; |T_{nj}| \leq m_{n,p,1-\alpha}^{(1)}\Big\} \cap \Big\{\tilde{\mathcal{A}}_n = \phi\Big\} \right)\\&\hspace{3.5cm} + \mathbbm{P}_{H_0^\prime}\left( \Big\{\underset{0\leq j \leq p}{\max}\; |T_{nj}| \leq m_{n,p,1-\alpha}^{(1)}\Big\} \cap \Big\{\tilde{\mathcal{A}}_n \neq \phi\Big\} \right)\\
    &= \mathbbm{P}_{H_0^\prime}\left( \Big\{\underset{0\leq j \leq p}{\max}\; b_1(p,n)\;|Z_{1j}| \leq m_{n,p,1-\alpha}^{(1)}\Big\} \cap \Big\{\tilde{\mathcal{A}}_n = \phi\Big\} \right)\\&\hspace{1.5cm} + \mathbbm{P}_{H_0^\prime}\left( \Big\{\underset{0\leq j \leq p}{\max}\; b_1(p,n) \;|Z_{1j}| \leq m_{n,p,1-\alpha}^{(1)}\Big\} \cap \Big\{\tilde{\mathcal{A}}_n \neq \phi\Big\} \right) \\& \hspace{2cm} + \mathbbm{P}_{H_0^\prime}\left( \Big\{\underset{0\leq j \leq p}{\max}\; |T_{nj}| \leq m_{n,p,1-\alpha}^{(1)}\Big\} \cap \Big\{\tilde{\mathcal{A}}_n \neq \phi\Big\} \right)\\&\hspace{2.5cm} - \mathbbm{P}_{H_0^\prime}\left( \Big\{\underset{0\leq j \leq p}{\max}\; b_1(p,n) \;|Z_{1j}| \leq m_{n,p,1-\alpha}^{(1)}\Big\} \cap \Big\{\tilde{\mathcal{A}}_n \neq \phi\Big\} \right)\\
    &= \mathbbm{P}_{H_0^\prime}\left( \underset{0\leq j \leq p}{\max}\; b_1(p,n) \;|Z_{1j}| \leq m_{n,p,1-\alpha}^{(1)} \right)\\& \hspace{1.25cm} + \mathbbm{P}_{H_0^\prime}\left( \Big\{\underset{0\leq j \leq p}{\max}\; |T_{nj}| \leq m_{n,p,1-\alpha}^{(1)}\Big\} \cap \Big\{\tilde{\mathcal{A}}_n \neq \phi\Big\} \right)\\&\hspace{2.5cm} - \mathbbm{P}_{H_0^\prime}\left( \Big\{\underset{0\leq j \leq p}{\max}\; b_1(p,n) \;|Z_{1j}| \leq m_{n,p,1-\alpha}^{(1)}\Big\} \cap \Big\{\tilde{\mathcal{A}}_n \neq \phi\Big\} \right)\\
    \end{align*}
Therefore,
    \begin{align*}
    &\quad\; \bigg| \mathbbm{P}_{H_0^\prime}\left( \underset{0\leq j \leq p}{\max}\; |T_{nj}| > m_{n,p,1-\alpha}^{(1)} \right) - \alpha \bigg|\\
    &= \bigg| \mathbbm{P}_{H_0^\prime}\left( \underset{0\leq j \leq p}{\max}\; |T_{nj}| \leq m_{n,p,1-\alpha}^{(1)} \right) - (1-\alpha) \bigg|\\
    &= \bigg| \mathbbm{P}_{H_0^\prime}\left( \Big\{\underset{0\leq j \leq p}{\max}\; |T_{nj}| \leq m_{n,p,1-\alpha}^{(1)}\Big\} \cap \Big\{\tilde{\mathcal{A}}_n \neq \phi\Big\} \right)\\&\hspace{2.5cm} - \mathbbm{P}_{H_0^\prime}\left( \Big\{\underset{0\leq j \leq p}{\max}\; b_1(p,n) \;|Z_{1j}| \leq m_{n,p,1-\alpha}^{(1)}\Big\} \cap \Big\{\tilde{\mathcal{A}}_n \neq \phi\Big\} \right) \bigg|\\
    & \leq 2\mathbbm{P}_{H_0^\prime}\left(\Big\{\tilde{\mathcal{A}}_n \neq \phi\Big\}\right).
\end{align*}
Hence, due to VSC, we get
\begin{equation*}
    \left|\mathbb{E}_{H_0^\prime}\left(\psi_{n, \alpha}^{(1)}\right) - \alpha\right| 
    = o(1).
\end{equation*} 
This completes the proof of Theorem~\ref{theo:size}.
\qed

\subsection{Proof of Theorem~\ref{theo:power}}
The test statistic can be written as $$ \bm{T}^\on_n = \sqrt{n}\hat{\bm{\beta}}_n + b_1(p,n)\bm{Z}_{1} = \bm{T}^\on_{1n} + \sqrt{n} {\bm\beta}_n,$$ where under VSC, we have $\bm{T}^\on_{1n} =  \sqrt{n}\Big((\hat{\bm{\beta}}_n^{(1)} - {\bm{\beta}}_n^{(1)})^\top, \bm{0}^\top \Big) +  b_1(p,n)\bm{Z}_1 $ and ${\bm\beta}_n = \Big({{\bm{\beta}}_n^{(1)}}^\top, \bm{0}^\top\Big)^\top.$ 
Now, by Lemma~\ref{lem:asymnor}, we have for any $j=0,\dots, p_0$,
   \begin{equation*}
    \underset{x\in\mathbb{R}}{\sup} \;| \mathbbm{P}(T_{1nj} \leq  x) - {\Phi}(\sigma^{-1}_{nj} x)| = o(1),
\end{equation*} 
where $\sigma^2_{nj} = Var(V_{nij}) = ((\mathbb{E}\bm{L}_{11, n})^{-1})_{j j}$. Note that, 
\begin{equation*}
    \sigma^2_{nj} \leq \|(\mathbb{E}\bm{L}_{11, n})^{-1}\|_\infty \leq C n^{a_1},
\end{equation*}
and due to Lemma 6 of \cite{tony2014two}, $m_{n,p,1-\alpha}^{(1)} \leq C \sqrt{\log(p) - \log\log(p)},$ as $b_1(p,n) = o(1)$. Therefore, for some $j' = 0,\dots,p,$ we have
\begin{align*}
    & \quad \mathbb{E}_{H_1^\prime}\Big(\psi_{n, \alpha}^{(1)}\Big)  =\mathbbm{P}_{H_1^\prime} \Big( \underset{0\leq j \leq p}{\max} |T_{nj}| > m_{n,p,1-\alpha}^{(1)} \Big)\\
    & = 1 - \mathbbm{P}_{H_1^\prime} \Big( \underset{0\leq j\leq p}{\max} |T_{nj}| \leq m_{n,p,1-\alpha}^{(1)} \Big)\\
    & \geq 1 - \mathbbm{P}_{H_1^\prime} \Big( \Big\{\underset{0\leq j\leq p}{\max} |T_{nj}| \leq m_{n,p,1-\alpha}^{(1)}\Big\} \cap \Big\{\tilde{\mathcal{A}}_n = \{0,\dots,p_0\}\Big\} \Big) \\&\hspace{7cm} - \mathbbm{P}_{H_1^\prime} \Big(\tilde{\mathcal{A}}_n = \{0,\dots,p_0\}^c \Big)\\
    & \geq 1 - \mathbbm{P}_{H_1^\prime} \Big( \Big\{\underset{0\leq j\leq p}{\max} |T_{nj}| \leq m_{n,p,1-\alpha}^{(1)}\Big\} \cap \Big\{\tilde{\mathcal{A}}_n = \{0,\dots,p_0\}\Big\} \Big) - o(1)\\
    & \geq 1 - \mathbbm{P}_{H_1^\prime} \Big( \Big\{ |T_{nj'}| \leq m_{n,p,1-\alpha}^{(1)}\Big\} \cap \Big\{\tilde{\mathcal{A}}_n = \{0,\dots,p_0\}\Big\} \Big) - o(1)\\
    & \geq 1 - \mathbbm{P}_{H_1^\prime} \Big( \Big\{ |T_{1nj'}| \geq \sqrt{n} |\beta_{nj'}| - m_{n,p,1-\alpha}^{(1)}\Big\} \cap \Big\{\tilde{\mathcal{A}}_n = \{0,\dots,p_0\}\Big\} \Big) - o(1)\\
    & \geq 1 - \Big| \mathbbm{P}_{H_1^\prime} \Big( T_{1nj'} \geq \sqrt{n} |\beta_{nj'}| - m_{n,p,1-\alpha}^{(1)} \Big) \\&\hspace{4cm} - (1-\Phi({\sigma_{nj'}^{-1}}[\sqrt{n} |\beta_{nj'}| - m_{n,p,1-\alpha}^{(1)}])) \Big|\\
    & \hspace{2cm} -\Big| \mathbbm{P}_{H_1^\prime} \Big( T_{1nj'} \leq -  \sqrt{n} |\beta_{nj'}| + m_{n,p,1-\alpha}^{(1)} \Big)\\&\hspace{4cm} - \Phi( {\sigma_{nj'}^{-1}}[- \sqrt{n}|\beta_{nj'}| + m_{n,p,1-\alpha}^{(1)}]) \Big| \\&\hspace{2cm} - \Big( 1-\Phi({\sigma_{nj'}^{-1}}[\sqrt{n}  |\beta_{nj'}| - m_{n,p,1-\alpha}^{(1)}]) \Big) \\&\hspace{4cm} - \Phi( {\sigma_{nj'}^{-1}}[- \sqrt{n}|\beta_{nj'}| + m_{n,p,1-\alpha}^{(1)}])- o(1)\\
    &\geq  1- 
     2\Phi( {\sigma_{nj'}^{-1}}[- \sqrt{n}|\beta_{nj'}| + m_{n,p,1-\alpha}^{(1)}])- o(1)\\
    &\geq 1 - o(1).
\end{align*} 
The second inequality is due to VSC. The fourth inequality is due to the triangle inequality. The sixth inequality is due to Lemma 2.5 in the supplementary condition, and the last inequality is due to the assumption of the beta-min condition.
  \qed


\subsection{Proof of Theorem~\ref{theo:minimax}}
Recall that the $(\alpha, \delta)$-minimax separation distance for testing $H_0: \mu_1 = \mu_2$ vs $H_1: \mu = (\mu_1^\top, \mu_2^\top)^\top \in \Theta_1(p_0)$ is defined as $$\rho^* \equiv \rho^*(\alpha, \delta) = \inf_{\phi_\alpha} \rho(\phi_\alpha, \delta),$$ where
\begin{align*}
\rho(\phi_\alpha, \delta) = \inf\Big\{ \rho>0: \sup_{\mu\in \Theta_1(p_0): \min_{j: \beta_j \neq 0} |\beta_j| \ge \rho}  \mathbbm{P}_{\bm \beta} (\phi_\alpha=0) \le \delta  \Big\}.
\end{align*}
Therefore, note that $\rho^*\ge \rho_1$ if for all $\rho < \rho_1,$
\begin{equation} \label{minimax1}
    \inf_{\phi_\alpha}\bigg[\sup_{\mu\in \Theta_1(p_0): \mu_1 = 0\; \& \min_{j: \beta_j \neq 0} |\beta_j| \ge \rho}  \mathbbm{P}_{\bm \beta} (\phi_\alpha=0)\bigg] > \delta. 
\end{equation}
Now note that 
\begin{itemize}
    \item[(a)] Under Gaussianity and $\mu_1 = 0,$ the relation between $\mu$ and $\bm \beta = (\beta_0, {\bm\beta^{(1)}}^\top)^\top$ (stated as equation (2) in the main manuscript) implies that $H_{0}: \mu_1 = \mu_2 \iff H_{0}^\prime: {\bm\beta^{(1)}} = 0.$
    \item[(b)] Under Gaussianity with $\mu_1 = 0,$ $\beta_0 = -\frac{1}{2} \mu_2'\Sigma^{-1}\mu_2  = -\frac{1}{2} {\bm\beta^{(1)}}^\top\Sigma{\bm\beta^{(1)}}$. Hence when $\|{\bm\beta^{(1)}}\|_0 = p_0$ and $\min_{j:\beta_{1j}\neq 0} |\beta_{1j}| \ge \rho$ then $|\beta_0| \ge \Sigma_{\min} \|{\bm\beta^{(1)}}\|_2^2 \ge \Sigma_{\min}\;\rho \; p_0 \ge \rho$, since $\Sigma_{\min}$, the minimum eigen value of $\Sigma$, is more than $p_0^{-1}$. Therefore, $|\beta_0|\geq \rho$ whenever ${\bm\beta^{(1)}}\in\mathcal{H}^{(1)} = \{\bm{b}\in \mathcal{R}^p: \|\bm{b}\|_0 = p_0\; \text{and}\; \min_{j: b_j \neq 0}|b_j|\geq \rho\}.$
\end{itemize}
Due to (a){}--{}(b) and Lemma \ref{lem:minimax}, equation (\ref{minimax1})will be established if we can show that
\begin{equation} \label{minimax10}
 \chi^2(\mathbbm{P}_{\mu_\rho}, \mathbbm{P}_0) < (1-\alpha-\delta)^2,\; \text{ for all } \rho < c\sqrt{\log p/n},
\end{equation} 
 for some measure $\mu_\rho$ on $\mathcal{H}^{(1)}$, for sufficiently large $n$ and $p$. Now note that for any ${\bm\beta^{(1)}} \in \mathcal{H}^{(1)}$, due to Gaussianity with $\mu_1 = 0$, the joint density of two samples is
\begin{align} \label{minimax3}
   \mathbbm{P}_{{\bm\beta^{(1)}}} &= \prod_{i=1}^{n}\mathbbm{P}_{{\bm\beta^{(1)}}}(x_i,y_i) \\& = \prod_{i=1}^{n}(2\pi)^{-p}|\Sigma|^{-1} \exp\Big\{ -\frac{1}{2} x_i^\top\Sigma^{-1}x_i - \frac{1}{2} (y_i-\Sigma{\bm\beta^{(1)}})^\top\Sigma^{-1}(y_i-\Sigma{\bm\beta^{(1)}}) \Big\}.
\end{align}

We now follow the arguments of the proof of Theorem~3 of \cite{tony2014two} and Theorem~2 of \cite{Ma2021} to establish (\ref{minimax10}). As in the proof of Theorem~2 of \cite{Ma2021}, consider $\mu_\rho$ to be the probability measure on $\mathcal{H}_1 = \{ {\bm\beta^{(1)}} \in \mathbb{R}^p: \|{\bm\beta^{(1)}}\|_0 = p_0 \text{ and } \min_{j:\beta_{1j} \neq 0}|\beta_{1j}|\geq \rho\}$ such that $\mu_\rho$ gives mass uniformly on all the $p_0$-dimensional sub-vectors of ${\bm\beta^{(1)}}$ with each component being $\rho$. More precisely if $\ell(\{1,\cdots,p\}, p_0)$ denotes all the subsets of $\{1,\cdots,p\}$ having cardinality $p_0$ and if $\tilde{\mathcal{H}}^{(1)} = \{ {\bm\beta^{(1)}} \in \mathbb{R}^p: \beta_{1j}=\rho\;\mathbb{I}(j\in \text{I}),\;\text{I}\in\ell(\{1,\cdots,p\}, p_0)\},$ then $\mu_\rho$ is uniformly distributed over $\tilde{\mathcal{H}}^{(1)}.$
Now recall that
\begin{equation*}
    \chi^2(\mathbbm{P}_{\mu_\rho}, \mathbbm{P}_0) = \int \left(\frac{\mathbbm{P}_{\mu_\rho}}{\mathbbm{P}_0}\right)^2 \mathbbm{P}_0 d\lambda - 1,
\end{equation*}
where $$\mathbbm{P}_0 = \prod_{i=1}^n (2\pi)^{-p}|\Sigma|^{-1} \exp\Big\{ -\frac{1}{2} x_i^\top\Sigma^{-1}x_i - \frac{1}{2} y_i^\top\Sigma^{-1}y \Big\} = \prod_{i=1}^n \mathbbm{P}_0(x_i,y_i),$$ and $$\mathbbm{P}_{\mu_\rho} = \frac{1}{\binom{p}{p_0}} \sum_{{\bm\beta^{(1)}}\in\tilde{\mathcal{H}}^{(1)}} \prod_{i=1}^n \mathbbm{P}_{{\bm\beta^{(1)}}}(x_i,y_i).$$ 
Therefore, 
\begin{equation*}
\begin{split}
    \chi^2(\mathbbm{P}_{\mu_\rho}, \mathbbm{P}_0) 
    = \frac{1}{\binom{p}{p_0}^2} \sum_{{\bm\beta^{(1)}}\in\tilde{\mathcal{H}}_1} \sum_{\dot{{\bm\beta^{(1)}}}\in\tilde{\mathcal{H}}_1} \prod_{i=1}^n  \int \frac{ \mathbbm{P}_{{\bm\beta^{(1)}}}(x_i,y_i) \mathbbm{P}_{\dot{{\bm\beta^{(1)}}}}(x_i,y_i) }{\mathbbm{P}_0(x_i,y_i)}  d\lambda(x_i,y_i) - 1.
\end{split}
\end{equation*}
Now,
\begin{equation*}
    \frac{ \mathbbm{P}_{{\bm\beta^{(1)}}}(x_i,y_i) \mathbbm{P}_{\dot{{\bm\beta^{(1)}}}}(x_i,y_i) }{(\mathbbm{P}_0(x_i,y_i))^2} = \exp\Big\{-\frac{1}{2} {\bm\beta^{(1)}}^\top \Sigma {\bm\beta^{(1)}} -\frac{1}{2} \dot{{\bm\beta^{(1)}}}^\top \Sigma \dot{{\bm\beta^{(1)}}} + {({\bm\beta^{(1)}}+\dot{{\bm\beta^{(1)}}})}^\top y_i \Big\}.
\end{equation*}
Therefore,
\begin{equation*}
    \begin{split}
        & \int \frac{ \mathbbm{P}_{{\bm\beta^{(1)}}}(x_i,y_i) \mathbbm{P}_{\dot{{\bm\beta^{(1)}}}}(x_i,y_i) }{(\mathbbm{P}_0(x_i,y_i))^2} d\lambda(x,y)\\
        & = \exp\Big\{-\frac{1}{2} \bm\beta^\top \Sigma {\bm\beta^{(1)}} -\frac{1}{2} \dot{\bm\beta}^\top \Sigma \dot{{\bm\beta^{(1)}}}\Big\} \mathbb{E}_{Z\sim N(0,\Sigma)}(\exp\{{({\bm\beta^{(1)}}+\dot{{\bm\beta^{(1)}}})}^\top Z\}) \\
        & = \exp\{{\bm\beta^{(1)}}^\top\Sigma\dot{{\bm\beta^{(1)}}}\}.
    \end{split}
\end{equation*}

Therefore, whenever $\rho < c\sqrt{\log p/n}$,
\begin{equation*}
\begin{split}
    \chi^2(\mathbbm{P}_{\mu_\rho}, \mathbbm{P}_0)
    & = \frac{1}{\binom{p}{p_0}^2} \sum_{{\bm\beta^{(1)}}\in\tilde{\mathcal{H}}_1} \sum_{\dot{{\bm\beta^{(1)}}}\in\tilde{\mathcal{H}}_1} \exp\{n{\bm\beta^{(1)}}^\top\Sigma\dot{{\bm\beta^{(1)}}}\} - 1\\
    & = \frac{1}{\binom{p}{p_0}^2} \sum_{\text{I}\in\ell(\{1,\cdots,p\})} \sum_{\dot{\text{I}}\in\ell(\{1,\cdots,p\}, p_0)} \prod_{k\in\text{I}, l\in\dot{\text{I}}} \exp{(n\rho^2 \sigma_{kl})} - 1\\
    & \le \frac{1}{\binom{p}{p_0}^2} \sum_{\text{I}\in\ell(\{1,\cdots,p\})} \sum_{\dot{\text{I}}\in\ell(\{1,\cdots,p\}, p_0)} \prod_{k\in\text{I}, l\in\dot{\text{I}}} \exp{(c^2\log p |\sigma_{kl}|)} - 1,\\
\end{split}
\end{equation*}
where $\Sigma = ((\sigma_{k,l}))_{p\times p}.$ Now following \cite{tony2014two}, for every $\text{I} \in \ell(\{1,\cdots,p\}, p_0)$, define $B_{\text{I}} = \{l: |\sigma_{kl}|\geq M/d, k \in \text{I}\}$ where $d = (p/p_0^2)^{1-\iota}$ with $1>\iota \geq 2Mc^2$ (since $c$ can be taken sufficiently small) and the positive constant $M$ is defined in the statement of Theorem~\ref{theo:minimax}. To that end, due to assumption $\| \Sigma \|_{L_1} \leq M$, the calculations done at page 369-370 of \cite{tony2014two} imply that
\begin{align}\label{minimax4}
    \chi^2(\mathbbm{P}_{\mu_\rho}, \mathbbm{P}_0) \leq o(1)\exp\Big\{\dfrac{dp_0^2p^{Mc^2}}{p} + \dfrac{Mp_0^2\log p}{d}\Big\},
\end{align}
where $o(1)$ is as $n, p \rightarrow \infty$. Since $p_0 \leq Mp^{1/4}$, (\ref{minimax4}) implies (\ref{minimax10}) and the proof is complete.\qed

\subsection{Proof of Proposition~\ref{prop:multitransformation}}
Recall that $\big\{X^{(j)}_1,\dots,X^{(j)}_{n_j}\big\}_{j=1}^{K}$ \big(with $n = \sum_{j=1}^{K}n_j$\big) are observed samples where for all $j \in \{1,\dots, K\}$, $X^{(j)},X^{(j)}_1,\dots,X^{(j)}_{n_j} \overset{iid}{\sim} F_j$, for some cumulative distribution function $F_j$ with mean $\mu_j$. We would like to test $H_{00}: \mu_1=\mu_2=\dots=\mu_K$ vs $H_{11}: \mu_i \neq \mu_j \;\text{for some}\; i \neq j$. Also recall that for any $j=1,2,\dots,K$, $U$ is defined as a random vector such that $\mathbb{P}(U = X^{(j)}) = \pi_j,$, with $\sum_{j=1}^{K}\pi_j = 1$. Moreover, $\bm{y}=(y_1\dots,y_{K-1})^\top$ is a $(K-1)$ dimensional random vector such that $y_{j}$ is 1 and rests are 0 when $U = X^{(j)}$ (for $j=1,2,\dots,K-1$), and $\bm{y} = \bm{0}$ for $U = X^{(K)}$. Then the regression parameter $\bm\beta = ({\bm\beta^{(1)}}^\top,\bm\beta_2^\top,\dots,\bm\beta_{K-1}^\top)^\top$ is defined as the solution of
\begin{equation}
    \mathbb{E}\left[ \left(y_j - \frac{ \pi_j\exp(\beta_{0j}+\bm\beta_{1j}^\top U) }{\pi_K+\sum_{l=1}^{K-1}\pi_l\exp(\beta_{0l}+\bm\beta_{1l}^\top U)}\right)  \begin{pmatrix} 1 \\ U \end{pmatrix} \right] = 0,\;j=1,\dots,K-1, \label{me1}
\end{equation}
where $\bm\beta_j = (\beta_{0j},\bm\beta_{1j}^\top)^\top$. We would like to establish that ``$\mu_1=\mu_2= \cdots = \mu_K$'' is equivalent to ``$\bm{\beta} = \bm{0}$''. First let us assume that $\bm{\beta} = \bm{0}$. Then from equation (\ref{me1}) we have for any $j=1,2,\dots,K-1$,
\begin{align*}
    &\quad\;\; \mathbb{E}\left[ \left(y_j - \pi_j \right) \begin{pmatrix} 1 \\ U \end{pmatrix} \right] = \bm{0}\\
    &\Rightarrow \sum_{l=1}^K \mathbb{E}\left[ \left.\left(y_j - \pi_j \right) \begin{pmatrix} 1 \\ U \end{pmatrix} \right| U = \bm{X}^{(l)} \right]\;\mathbb{P}(U = \bm{X}^{(l)}) = \bm{0}\\
    &\Rightarrow \pi_j\mathbb{E}\left[ \left(1 - \pi_j \right) \begin{pmatrix} 1 \\ \bm{X}^{(j)} \end{pmatrix}\right] + \sum_{l\neq j} \pi_l\mathbb{E}\left[ \left(- \pi_j \right) \begin{pmatrix} 1 \\ \bm{X}^{(l)} \end{pmatrix}\right] = \bm{0}\\
    &\Rightarrow \mathbb{E}\left[ \left(1 - \pi_j \right)\bm{X}^{(j)}\right] + \sum_{l\neq j} \pi_l\mathbb{E}\left[ - \bm{X}^{(l)}\right] = \bm{0}\\
    &\Rightarrow \mu_j=\pi_1\mu_1+\pi_2\mu_2+\dots+\pi_K\mu_K,
\end{align*}
which implies $\mu_1 = \mu_2 = \dots = \mu_K$.
Now assume that $\mu_1 = \mu_2 = \dots = \mu_K$. 
Then from equation (\ref{me1}) we have for any $j=1,2,\dots,K-1$,
\begin{align}
    & \sum_{i=1}^K \pi_i\mathbb{E}\left[ \left.\left(y_j - \frac{ \pi_j\exp(\beta_{0j}+\bm\beta_{1j}^\top U) }{\pi_K+\sum_{l=1}^{K-1}\pi_l\exp(\beta_{0l}+\bm\beta_{1l}^\top U)} \right) \begin{pmatrix} 1 \\ U \end{pmatrix} \right| U = \bm{X}^{(i)} \right] = \bm{0} \nonumber\\
    &\Rightarrow \pi_j\mathbb{E}\left[ \left(1-\frac{ \pi_j\exp(\beta_{0j}+\bm\beta_{1j}^\top\bm{X}^{(j)}) }{\pi_K+\sum_{l=1}^{K-1}\pi_l\exp(\beta_{0l}+\bm\beta_{1l}^\top\bm{X}^{(j)})}\right) \begin{pmatrix} 1 \\ \bm{X}^{(j)} \end{pmatrix}\right] \nonumber\\&\qquad + \sum_{i\neq j}\pi_i\mathbb{E} \left[ \left(-\frac{ \pi_j\exp(\beta_{0j}+\bm\beta_{1j}^\top\bm{X}^{(i)}) }{\pi_K+\sum_{l=1}^{K-1}\pi_l\exp(\beta_{0l}+\bm\beta_{1l}^\top\bm{X}^{(i)})}\right) \begin{pmatrix} 1 \\ \bm{X}^{(i)} \end{pmatrix}\right] = \bm{0} \nonumber\\
    &\Rightarrow \mathbb{E}\left[\begin{pmatrix} 1 \\ \bm{X}^{(1)} \end{pmatrix}\right] = \sum_{i=1}^K \pi_i \mathbb{E} \left[ \left(\frac{\exp(\beta_{0j}+\bm\beta_{1j}^\top\bm{X}^{(i)}) }{\pi_K+\sum_{l=1}^{K-1}\pi_l\exp(\beta_{0l}+\bm\beta_{1l}^\top\bm{X}^{(i)})}\right) \begin{pmatrix} 1 \\ \bm{X}^{(i)} \end{pmatrix}\right] \label{me2}
\end{align}
Now, using the identity $\mu_1 = \mu_2 = \dots = \mu_K$ together with multiplying both side of the equation (\ref{me2}) by $\pi_j$ and then summing over $j=1,\dots, (K-1)$ we have
\begin{align}
    &(1-\pi_K)\mathbb{E}\left[\begin{pmatrix} 1 \\ \bm{X}^{(1)} \end{pmatrix}\right] = \sum_{i=1}^K \pi_i \mathbb{E} \left[ \left(\frac{\sum_{j=1}^{K-1}\pi_j \exp(\beta_{0j}+\bm\beta_{1j}^\top\bm{X}^{(i)}) }{\pi_K+\sum_{l=1}^{K-1}\pi_l\exp(\beta_{0l}+\bm\beta_{1l}^\top\bm{X}^{(i)})}\right) \begin{pmatrix} 1 \\ \bm{X}^{(i)} \end{pmatrix}\right] \nonumber \\
    &\Rightarrow (1-\pi_K) \mathbb{E}\left[\begin{pmatrix} 1 \\ \bm{X}^{(1)} \end{pmatrix}\right] = \sum_{i=1}^K \pi_i \mathbb{E} \left[ \left( 1 - \frac{\pi_K}{\pi_K+\sum_{l=1}^{K-1}\pi_l\exp(\beta_{0l}+\bm\beta_{1l}^\top\bm{X}^{(i)})}\right) \begin{pmatrix} 1 \\ \bm{X}^{(i)} \end{pmatrix}\right] \nonumber \\
    &\Rightarrow \mathbb{E}\left[\begin{pmatrix} 1 \\ \bm{X}^{(1)} \end{pmatrix}\right] = \sum_{i=1}^K \pi_i \mathbb{E} \left[ \left(\frac{1}{\pi_K+\sum_{l=1}^{K-1}\pi_l\exp(\beta_{0l}+\bm\beta_{1l}^\top\bm{X}^{(i)})}\right) \begin{pmatrix} 1 \\ \bm{X}^{(i)} \end{pmatrix}\right] \label{me3}
\end{align}
Subtracting \eqref{me2} from \eqref{me3}, we get
\begin{align}
    & \sum_{i=1}^{K}\pi_i\mathbb{E} \left[ \left(\frac{1-\exp(\beta_{0j}+\bm\beta_{1j}^\top\bm{X}^{(i)}) }{\pi_K+\sum_{l=1}^{K-1}\pi_l\exp(\beta_{0l}+\bm\beta_{1l}^\top\bm{X}^{(i)})}\right) \begin{pmatrix} 1 \\ \bm{X}^{(i)} \end{pmatrix}\right] = \bm{0} \nonumber \\
    & \Rightarrow \mathbb{E} \left[ \left(\frac{1-\exp(\beta_{0j}+\bm\beta_{1j}^\top U) }{\pi_K+\sum_{l=1}^{K-1}\pi_l\exp(\beta_{0l}+\bm\beta_{1l}^\top U)}\right) \begin{pmatrix} 1 \\  U \end{pmatrix}\right] = \bm{0}. \label{me4}
\end{align}
Clearly if $\bm\beta_{1j}=\bm{0}$ then $\beta_{0j}=0$ from (\ref{me4}) for any $j\in \{1,2,\dots,(K-1)\}$ and then we are done. Hence  let us fix $j \in \{1,2,\dots,K-1\}$ and assume that $\bm\beta_{1j}\neq\bm{0}$, if possible. Then there are two possibilities:  $\beta_{0j}\geq 0$ and $\beta_{0j}<0$. First, Let us assume $\beta_{0j} \geq 0$. Then from equation (\ref{me4}) we~have
\begin{align}
    &\mathbb{E} \left[ \frac{1-\exp(\beta_{0j}+\bm\beta_{1j}^\top U) }{\pi_K+\sum_{l=1}^{K-1}\pi_l\exp(\beta_{0l}+\bm\beta_{1l}^\top U)} \right] = 0\nonumber\\
    \Rightarrow\;& \mathbb{E} \left\{\left[ \frac{\exp(\beta_{0j}+\bm\beta_{1j}^\top U) -1 }{\pi_K+\sum_{l=1}^{K-1}\pi_l\exp(\beta_{0l}+\bm\beta_{1l}^\top U)} \right]\mathbb{I}\Big(\bm\beta_{1j}^\top U > -\beta_{0j}\Big)\right\}\nonumber\\& = \mathbb{E} \left\{\left[ \frac{1-\exp(\beta_{0j}+\bm\beta_{1j}^\top U) }{\pi_K+\sum_{l=1}^{K-1}\pi_l\exp(\beta_{0l}+\bm\beta_{1l}^\top U)} \right]\mathbb{I}\Big(\bm\beta_{1j}^\top U< -\beta_{0j}\Big)\right\}, \label{me5}
\end{align}
and
\begin{align} \label{me6}
    & \mathbb{E} \left[ \left(\frac{1-\exp(\beta_{0j}+\bm\beta_{1j}^\top U) }{\pi_K+\sum_{l=1}^{K-1}\pi_l\exp(\beta_{0l}+\bm\beta_{1l}^\top U)}\right) \bm\beta_{1j}^\top U  \right] = 0\nonumber\\
    \Rightarrow\; & -\mathbb{E} \left[ \left(\frac{1-\exp(\beta_{0j}+\bm\beta_{1j}^\top U) }{\pi_K+\sum_{l=1}^{K-1}\pi_l\exp(\beta_{0l}+\bm\beta_{1l}^\top U)}\right) \bm\beta_{1j}^\top U \mathbb{I}\Big(\bm\beta_{1j}^\top U< -\beta_{0j}\Big) \right]\nonumber\\& - \mathbb{E} \left[ \left(\frac{1-\exp(\beta_{0j}+\bm\beta_{1j}^\top U) }{\pi_K+\sum_{l=1}^{K-1}\pi_l\exp(\beta_{0l}+\bm\beta_{1l}^\top U)}\right) \bm\beta_{1j}^\top U \mathbb{I}\Big(\bm\beta_{1j}^\top U> 0\Big) \right]\nonumber\\  & =\mathbb{E} \left[ \left(\frac{1-\exp(\beta_{0j}+\bm\beta_{1j}^\top U) }{\pi_K+\sum_{l=1}^{K-1}\pi_l\exp(\beta_{0l}+\bm\beta_{1l}^\top U)}\right) \bm\beta_{1j}^\top U \mathbb{I}\Big(-\beta_{0j} < \bm\beta_{1j}^\top U < 0\Big) \right].
    \end{align}
Now since the distribution of $ U$ is not degenerate, i.e. does not sit on the lower dimensional space, $\mathbb{P}(\bm\beta_{1j}^\top U = -\beta_{0j})< 1$. This along with (\ref{me5}) imply that $\mathbb{P}(\bm\beta_{1j}^\top U < -\beta_{0j}), \mathbb{P}(\bm\beta_{1j}^\top U > -\beta_{0j}) > 0$. This in turn implies that 
\begin{align}\label{me7}
   &\mathbb{E} \left[ \left(\frac{1-\exp(\beta_{0j}+\bm\beta_{1j}^\top U) }{\pi_K+\sum_{l=1}^{K-1}\pi_l\exp(\beta_{0l}+\bm\beta_{1l}^\top U)}\right) \bm\beta_{1j}^\top U \mathbb{I}\Big(\bm\beta_{1j}^\top U< -\beta_{0j}\Big) \right]\nonumber\\
    & < -\beta_{0j}\mathbb{E} \left[ \left(\frac{1-\exp(\beta_{0j}+\bm\beta_{1j}^\top U) }{\pi_K+\sum_{l=1}^{K-1}\pi_l\exp(\beta_{0l}+\bm\beta_{1l}^\top U)}\right)  \mathbb{I}\Big(\bm\beta_{1j}^\top U< -\beta_{0j}\Big) \right].
\end{align}
Now, it is clear from (\ref{me4})-(\ref{me7}) that
    \begin{align}
    &\beta_{0j}\;\mathbb{E} \left[ \left(\frac{\exp(\beta_{0j}+\bm\beta_{1j}^\top U) - 1 }{\pi_K+\sum_{l=1}^{K-1}\pi_l\exp(\beta_{0l}+\bm\beta_{1l}^\top U)}\right) \mathbb{I}\Big(\bm\beta_{1j}^\top U > -\beta_{0j}\Big) \right]\nonumber\\&+ \mathbb{E} \left[ \left(\frac{\exp(\beta_{0j}+\bm\beta_{1j}^\top U) - 1 }{\pi_K+\sum_{l=1}^{K-1}\pi_l\exp(\beta_{0l}+\bm\beta_{1l}^\top U)}\right) \bm\beta_{1j}^\top U \mathbb{I}\Big(\bm\beta_{1j}^\top U> 0\Big) \right]\nonumber\\ &\ < \;\beta_{0j}\;\mathbb{E} \left[ \left(\frac{\exp(\beta_{0j}+\bm\beta_{1j}^\top U) - 1 }{\pi_K+\sum_{l=1}^{K-1}\pi_l\exp(\beta_{0l}+\bm\beta_{1l}^\top U)}\right) \mathbb{I}\Big(-\beta_{0j} < \bm\beta_{1j}^\top U < 0\Big) \right]\nonumber\\
    \Rightarrow \; & \beta_{0j}\;\mathbb{E} \left[ \left(\frac{\exp(\beta_{0j}+\bm\beta_{1j}^\top U) - 1 }{\pi_K+\sum_{l=1}^{K-1}\pi_l\exp(\beta_{0l}+\bm\beta_{1l}^\top U)}\right) \mathbb{I}\Big(\bm\beta_{1j}^\top U \geq 0\Big) \right]\nonumber\\& + \mathbb{E} \left[ \left(\frac{\exp(\beta_{0j}+\bm\beta_{1j}^\top U) - 1 }{\pi_K+\sum_{l=1}^{K-1}\pi_l\exp(\beta_{0l}+\bm\beta_{1l}^\top U)}\right) \bm\beta_{1j}^\top U \mathbb{I}\Big(\bm\beta_{1j}^\top U> 0\Big) \right] < \; 0.\label{me8}
\end{align}
Clearly the left-hand side of (\ref{me8}) is non-negative, implying that (\ref{me8}) can not be true. Hence $\bm\beta_{1j}\neq \bm{0}$ is not possible when $\beta_{0j}\geq 0$. Through the same line of arguments, it can be shown that $\bm\beta_{1j}\neq \bm{0}$ is not possible when $\beta_{0j} < 0$. Therefore, we must have $\bm\beta_{1j}=\bm{0}$ for all $j\in \{1,2,\dots,(K-1)\}$. Since $j$ is chosen arbitrarily, the proof is now complete.  \qed

\subsection{Proof of Theorem~\ref{theo:multi}}

\noindent Proof of Theorem~\ref{theo:multi} (a): Under $H_{0K},$ the VSC holds if we have with probability tending to one, $\tilde{\mathcal{A}}_{n,j} = \mathcal{A}_{n,j} = \emptyset,\text{ for all } j=1,\dots,(K-1).$ For brevity of notation define the set $\tilde{\mathcal{A}}_n:=\Big\{\tilde{\mathcal{A}}_{n,j} = \phi, j=1\dots,(K-1)\Big\}$. This is equivalent to $\frac{\tilde{u}_{jk}}{\sqrt{n}} + \beta_{jk} = 0,\text{ for all } j=1,\dots,(K-1),\text{ and }k=k_j=0,\dots,p,$ where $\tilde{\bm{u}}_n = \sqrt{n} (\tilde{\bm{\beta}}_n - \bm{\beta}_n)$ is the solution of \eqref{3} in proposition \ref{vscm}. Thus, due to equation \eqref{m2} and \eqref{m3}, it is enough to show that for all $j=1,\dots,(K-1),\text{ and }k=k_j=0,\dots,p$,
\begin{equation} \label{m19}
    -\frac{\lambda_{n}}{\sqrt{n}}\leq  \dfrac{1}{\sqrt{n}}\sum_{i=1}^{n}\left(y_{ik}-\pi_j\right)\dot{U}_{ik} + \dfrac{1}{\sqrt{n}}\sum_{i=1}^{n}\left(\pi_j - \frac{n_j}{n}\right)\dot{U}_{ik} \leq \dfrac{\lambda_{n}}{\sqrt{n}}.
\end{equation} 
Now, due to Lemma \ref{kcvl2}, under $H_{0K},$ with probability tending to one, 
\begin{multline*}
    \underset{1\leq j \leq (K-1)}{\max}\;\underset{0\leq k \leq p}{\max} \left| \dfrac{1}{\sqrt{n}}\sum_{i=1}^{n}\left(y_{ij}-\pi_j\right)\dot{U}_{ij} \right|\\ \leq C \left[\sqrt{\log(np)} + \frac{(\log(n))^{1/\kappa}(\log(np))^{1/\min\{1,\kappa\}}}{\sqrt{n}} \right].
\end{multline*}
Again, due to the assumption (C.3) and Lemma \ref{kcvl2}, we have with probability tending to one,
\begin{equation*} 
\begin{split}
    & \underset{1\leq j \leq (K-1)}{\max}\;\underset{0\leq k \leq p}{\max} \bigg|\dfrac{1}{\sqrt{n}}\sn \left(\pi_j-\frac{n_j}{n}\right)\dot{U}_{ik}\bigg| \\
    & \qquad \leq C \underset{1\leq j \leq (K-1)}{\max} \bigg\{\bigg|\pi_j-\frac{n_j}{n}\bigg|\bigg\} \\&\hspace{3cm} \underset{1\leq j \leq (K-1)}{\max}\;\underset{0\leq k \leq p}{\max} \bigg\{\bigg|\dfrac{1}{\sqrt{n}} \sn \Big(\dot{U}_{ik} - \mathbb{E}(\dot{U}_{ik}) \Big)\bigg| \\&\hspace{8cm} + \bigg|\dfrac{1}{\sqrt{n}} \sn \mathbb{E}(\dot{U}_{ik})\bigg| \bigg\}\\
    & \qquad = o(1).
\end{split}
\end{equation*}
Therefore, equation \eqref{m19} follows due to the assumption (A.4)(i) and we get
\begingroup
\begin{align*}
    &\mathbb{E}_{H_{0K}'}\left(\psi_{n, \alpha}^{(2)}\right)\\
    & = \mathbbm{P}_{H_{0K}'}\left( \underset{k}{\max}\; |T^{(2)}_{nk}| \leq m^{(2)}_{n,p,1-\alpha} \right)\\
    &= \mathbbm{P}_{H_{0K}'}\left( \Big\{\underset{k}{\max}\; |T^{(2)}_{nk}| \leq m^{(2)}_{n,p,1-\alpha}\Big\} \cap \tilde{\mathcal{A}}_n \right)\\&\hspace{1.5cm} + \mathbbm{P}_{H_{0K}'}\left( \Big\{\underset{k}{\max}\; |T^{(2)}_{nk}| \leq m^{(2)}_{n,p,1-\alpha}\Big\} \cap \tilde{\mathcal{A}}_n^c \right)\\
    &= \mathbbm{P}_{H_{0K}'}\left( \Big\{\underset{k}{\max}\; b_2(p, n)\;|Z_{2k}| \leq m^{(2)}_{n,p,1-\alpha}\Big\} \cap \tilde{\mathcal{A}}_n \right)\\&\hspace{0.25cm} + \mathbbm{P}_{H_{0K}'}\left( \Big\{\underset{k}{\max}\; b_2(p, n) \;|Z_{2k}| \leq m^{(2)}_{n,p,1-\alpha}\Big\} \cap \tilde{\mathcal{A}}_n^c \right) \\& \hspace{0.5cm} + \mathbbm{P}_{H_{0K}'}\left( \Big\{\underset{k}{\max}\; |T^{(2)}_{nk}| \leq m^{(2)}_{n,p,1-\alpha}\Big\} \cap \tilde{\mathcal{A}}_n^c \right)\\&\hspace{0.75cm} - \mathbbm{P}_{H_{0K}'}\left( \Big\{\underset{k}{\max}\; b_2(p, n) \;|Z_{2k}| \leq m^{(2)}_{n,p,1-\alpha}\Big\} \cap \tilde{\mathcal{A}}_n^c \right)\\
    &= \mathbbm{P}_{H_{0K}'}\left( \underset{k}{\max}\; b_2(p, n) \;|Z_{2k}| \leq m^{(2)}_{n,p,1-\alpha} \right)\\& \hspace{0.25cm} + \mathbbm{P}_{H_{0K}'}\left( \Big\{\underset{k}{\max}\; |T^{(2)}_{nk}| \leq m^{(2)}_{n,p,1-\alpha}\Big\} \cap \tilde{\mathcal{A}}_n^c \right)\\&\hspace{0.5cm} - \mathbbm{P}_{H_{0K}'}\left( \Big\{\underset{k}{\max}\; b_2(p, n) \;|Z_{2k}| \leq m^{(2)}_{n,p,1-\alpha}\Big\} \cap \tilde{\mathcal{A}}_n^c \right)\\
    \end{align*}
Therefore,
    \begin{align*}
    & \;\quad \bigg| \mathbbm{P}_{H_{0K}'}\left( \underset{k}{\max}\; |T^{(2)}_{nk}| > m^{(2)}_{n,p,1-\alpha} \right) - \alpha \bigg|\\
    &= \bigg|\mathbbm{P}_{H_{0K}'}\left( \Big\{\underset{k}{\max}\; |T^{(2)}_{nk}| \leq m^{(2)}_{n,p,1-\alpha}\Big\} \cap \tilde{\mathcal{A}}_n^c \right)\\&\hspace{0.5cm} - \mathbbm{P}_{H_{0K}'}\left( \Big\{\underset{k}{\max}\; b_2(p, n) \;|Z_{2k}| \leq m^{(2)}_{n,p,1-\alpha}\Big\} \cap \tilde{\mathcal{A}}_n^c \right)\bigg| \\
    & \leq 2\mathbbm{P}_{H_{0K}'}\left(\tilde{\mathcal{A}}_n^c\right).
\end{align*}
\endgroup
Hence, it follows that
\begin{equation*}
    \left|\mathbb{E}_{H_{0K}'}\left(\psi_{n, \alpha}^{(2)}\right) - \alpha\right| 
    = o(1).
\end{equation*} \qed

\noindent Proof of Theorem~\ref{theo:multi} (b):
 Note that, the test statistic can be written as $$ \bm{T}^{(2)}_n = \sqrt{n}\hat{\bm{\beta}}_n + b_2(p, n)\bm{Z}_2 = \bm{T}^{(2)}_{1n} + \sqrt{n} {\bm\beta}_n,$$ where under VSC, we can write $\bm{T}^{(2)}_{1n} =  \sqrt{n}\Big((\hat{\bm{\beta}}_n^{(1)} - {\bm{\beta}}_n^{(1)})^\top, \bm{0}^\top \Big) +  b_2(p, n)\bm{Z}_2 $ and ${\bm\beta}_n = \big({{\bm{\beta}}_n^\on}^\top, \bm{0}^\top\big)^\top = \big(({{\bm{\beta}}_{n,1}^\on})^\top,\dots,({{\bm{\beta}}_{n,(K-1)}^\on})^\top, \bm{0}^\top\big)^\top.$ 
Now following the proof of Lemma \ref{lem:asymnor}, and using Lemma \ref{vscm}, we have for any $j=1,\dots, (K-1)\;, k=k_j= 0,\dots,p_{0j}$,
   \begin{equation} \label{m22}
    \underset{x\in\mathbb{R}}{\sup} \;| \mathbbm{P}(T^{(2)}_{1nl} \leq  x) - {\Phi}(\sigma^{-1}_{nl} x)| = o(1),
\end{equation} 
where $\sigma^2_{nl} = ((\mathbb{E}\bm{L}_{11, n})^{-1})_{ll}$ and $l=\sum_{m=0}^{j-1}(p_{0m}+1)+k+1$. Note that, 
\begin{equation*}
    \sigma^2_{nl} \leq \|(\mathbb{E}\bm{L}_{11, n})^{-1}\|_\infty \leq C n^{a_2},
\end{equation*}
and due to Lemma 6 of \cite{tony2014two}, $m^{(2)}_{n,p,1-\alpha} \leq C \sqrt{\log(p)-\log\log(p)}.$ For brevity of notation, define $\breve{\mathcal{A}}_n = \Big\{\tilde{\mathcal{A}}_{n,j} = 0,\dots,p_{0j}, j=1\dots,(K-1)\Big\}$. Therefore, for some $j' = 1,\dots,(K-1),\; k'= k_j' = 0,\dots,(p_{0j'}-1),$ and $l' = \sum_{m=0}^{j'-1}(p_{0m}+1)+k'+1$, we have
\begingroup
\begin{align*}
    & \mathbb{E}_{H_{1K}'}\Big(\psi_{n, \alpha}^{(2)}\Big)\\
    & = 1 - \mathbbm{P}_{H_{1K}'} \Big( \underset{l}{\max}\; |T^{(2)}_{nl}| \leq m^{(2)}_{n,p,1-\alpha} \Big)\\
    & \geq 1 - \mathbbm{P}_{H_{1K}'} \Big( \Big\{\underset{l}{\max}\; |T^{(2)}_{nl}| \leq m^{(2)}_{n,p,1-\alpha}\Big\} \cap \breve{\mathcal{A}}_n \Big) - \mathbbm{P}_{H_{1K}'}\Big( \breve{\mathcal{A}}_n^c \Big)\\
    & \geq 1 - \mathbbm{P}_{H_{1K}'} \Big( \Big\{\underset{l}{\max}\; |T^{(2)}_{nl}| \leq m^{(2)}_{n,p,1-\alpha}\Big\} \cap \breve{\mathcal{A}}_n \Big) - o(1)\\
    & \geq 1 - \mathbbm{P}_{H_{1K}'} \Big( |T^{(2)}_{nl'}| \leq m^{(2)}_{n,p,1-\alpha}\Big\} \cap \breve{\mathcal{A}}_n \Big) - o(1)\\
    & \geq 1 - \mathbbm{P}_{H_{1K}'} \Big( \Big\{ |T^{(2)}_{1nl'}| \geq \sqrt{n} |\beta_{l'}| - m^{(2)}_{n,p,1-\alpha}\Big\}  \cap \breve{\mathcal{A}}_n \Big) - o(1)\\
    & \geq 1 - \Big| \mathbbm{P}_{H_{1K}'} \Big( T^{(2)}_{1nl'} \geq \sqrt{n} |\beta_{l'}| - m^{(2)}_{n,p,1-\alpha} \Big)\\&\hspace{5cm} - (1-\Phi(\sigma^{-1}_{nl'}[\sqrt{n}|\beta_{l'}| - m^{(2)}_{n,p,1-\alpha}])) \Big|\\
    & \qquad\qquad -\Big| \mathbbm{P}_{H_{1K}'} \Big( T^{(2)}_{1nl'} \leq - \sqrt{n} |\beta_{l'}| + m^{(2)}_{n,p,1-\alpha} \Big)\\&\hspace{5cm} - \Phi( \sigma^{-1}_{nl'}[-\sqrt{n}|\beta_{l'}| + m^{(2)}_{n,p,1-\alpha}]) \Big|\\
    & \qquad\qquad - \Big( 1-\Phi(\sigma^{-1}_{nl'}[\sqrt{n} |\beta_{l'}| - m^{(2)}_{n,p,1-\alpha}]) \Big) \\&\hspace{5cm} - \Phi( \sigma^{-1}_{nl'}[-\sqrt{n}|\beta_{l'}| + m^{(2)}_{n,p,1-\alpha}])- o(1)\\
    &\geq  1 - 2\Phi\Big( \sigma^{-1}_{nl'}[- \sqrt{n}|\beta_{l'}| + m^{(2)}_{n,p,1-\alpha}])- o(1)\\
    &\geq 1 - o(1),
\end{align*}
where the second inequality is due to VSC, which follows from Lemma~\ref{vscm}. The fourth inequality is due to the triangle inequality. The sixth inequality is due to \eqref{m22}. The last inequality is due to the beta-min condition. \qed

\section*{Appendix C: Additional real-data analyses} \label{sec: realdata}
The Leukemia dataset is based on the seminal study of \citet{OMLc1}, where gene expression measurements obtained using Affymetrix microarrays were used to distinguish between acute lymphoblastic leukemia (ALL) and acute myeloid leukemia (AML). In the original study, expression levels of $6{,}817$ genes were measured from bone marrow samples of leukemia patients. The key objective was to classify tumors by molecular profiles rather than morphological features, demonstrating that gene expression patterns alone can reliably distinguish biologically distinct cancer types. In our analysis, we consider an updated dataset comprising $n_1 = 47$ ALL samples and $n_2 = 25$ AML samples, with $p = 7{,}129$ genes available at \href{https://www.openml.org/search?type=data&status=active&id=1104}{OpenML} \citep{OpenML2017}.

The prostate cancer dataset is derived from the study of \citet{OMLc2}, where gene expression profiles from prostate tumors (52 samples) and normal tissues (50 samples) were analyzed using oligonucleotide microarrays containing approximately $12{,}600$ genes. This dataset is also available at \href{https://www.openml.org/search?type=data&status=active&id=45099&sort=runs}{OpenML}.


\begin{table}[t]
\centering
\caption{Rejection proportions (with standard errors in parentheses) and computation times for different methods.}
\label{tab:realdata2}

\vspace{0.1in}
\scriptsize
\begin{tabular}{lcccccc}
\toprule
 & D3 & D3op & CQ & XY & OP & ES \\
\midrule
\multicolumn{7}{c}{Rejection proportions over 500 random permutations} \\ \midrule
Leukemia & 0.052 (0.009) & 0.052 (0.009) & 0.060 (0.010) & 0.044 (0.009) & 0.056 (0.010) & 0.050 (0.009) \\
Prostate & 0.048 (0.009) & 0.044 (0.009) & 0.046 (0.009) & 0.054 (0.010) & 0.050 (0.009) & 0.052 (0.009) \\
\midrule
\multicolumn{7}{c}{Rejection proportions over 500 bootstrap datasets} \\ \midrule
Leukemia & 1.000 (0.000) & 1.000 (0.000) & 1.000 (0.000) & 0.848 (0.016) & 1.000 (0.000) & 0.854 (0.017) \\
Prostate & 1.000 (0.000) & 1.000 (0.000) & 1.000 (0.000) & 0.884 (0.014) & 1.000 (0.000) & 0.874 (0.014) \\
\midrule
\multicolumn{7}{c}{Average computation time in seconds over 50 iterations} \\ \midrule
Leukemia & 0.040 (0.005) & 0.263 (0.013) & 0.107 (0.032) & 0.943 (0.052) & 0.622 (0.050) & 1.514 (0.058) \\
Prostate & 0.081 (0.009) & 0.649 (0.021) & 0.213 (0.023) & 2.598 (0.095) & 1.332 (0.139) & 8.807 (0.366) \\
\bottomrule
\end{tabular}
\end{table}

All procedures reject the null hypothesis for both datasets. Specifically, the $p$-values for D3, D3op, CQ, and OP are below $0.001$ for both the Leukemia and Prostate datasets. The XY procedure yields $p$-values of $0.035$ and $0.014$ for the Leukemia and Prostate datasets, respectively, while the corresponding values for ES are $0.029$ and $0.015$. These results provide clear evidence of differences between the two population mean vectors in both datasets.

To assess empirical size, we randomly permute the sample labels $500$ times. As shown in Table~\ref{tab:realdata2}, all six procedures maintain reasonable size control, with rejection proportions close to the nominal level of $0.05$. For the Leukemia dataset, the empirical sizes range from $0.044$ to $0.060$, where CQ exhibits a slightly inflated size at $0.060$ relative to the nominal level. For the Prostate dataset, the empirical sizes range from $0.044$ to $0.054$. Overall, none of the retained procedures exhibits substantial size distortion in these examples.

For power evaluation, we generate $500$ bootstrap datasets by resampling independently within each population. D3, D3op, CQ, and OP attain empirical power equal to one for both datasets, indicating strong separation between the two groups. The powers of XY and ES are somewhat lower, with values of $0.848$ and $0.854$, respectively, for the Leukemia dataset and $0.884$ and $0.874$ for the Prostate dataset. Since the full-sample bootstrap results provide little separation among D3, D3op, CQ, and OP, we further investigate their performance under reduced sample sizes.

For this purpose, we conduct subsampling analysis by varying the proportion of observations retained from each group. The results are displayed in Figure~\ref{fig:subsampling_real}. For the Leukemia dataset, CQ exhibits the highest power at the smallest subsampling proportions and reaches power close to one by a proportion of approximately $0.3$. OP also performs strongly in the low-sample regime, followed by D3op. In contrast, D3 has relatively low power for the smallest subsamples but increases rapidly once approximately half of the observations are retained. XY and ES improve more gradually and require substantially larger subsamples to attain high power. For the Prostate dataset, D3op and OP provide the strongest power at the smallest proportions, while CQ also performs competitively and approaches power one by a subsampling proportion of approximately $0.6$. D3 again improves sharply after moderate subsample sizes, whereas XY and ES exhibit slower and nearly parallel increases. As the subsampling proportion approaches one, all procedures attain power close to one.

Finally, Table~\ref{tab:realdata2} reports average computation times over $50$ repetitions. D3 is the fastest method, requiring approximately $0.040$ seconds for the Leukemia dataset and $0.081$ seconds for the Prostate dataset. CQ is the second fastest, with corresponding average times of $0.107$ and $0.213$ seconds. D3op incurs additional cost because of its data-adaptive penalty selection, but remains substantially faster than XY, OP, and ES. Among the competing procedures, ES is the most computationally demanding, particularly for the higher-dimensional Prostate dataset, where its average running time is approximately $8.807$ seconds. Overall, D3 provides the greatest computational efficiency, while D3op remains competitive in both power and computation time. The subsampling results also show that the relative performance of the procedures depends on the underlying dataset and the available sample size.

\begin{figure}
\centering
\includegraphics[width=0.8\linewidth]{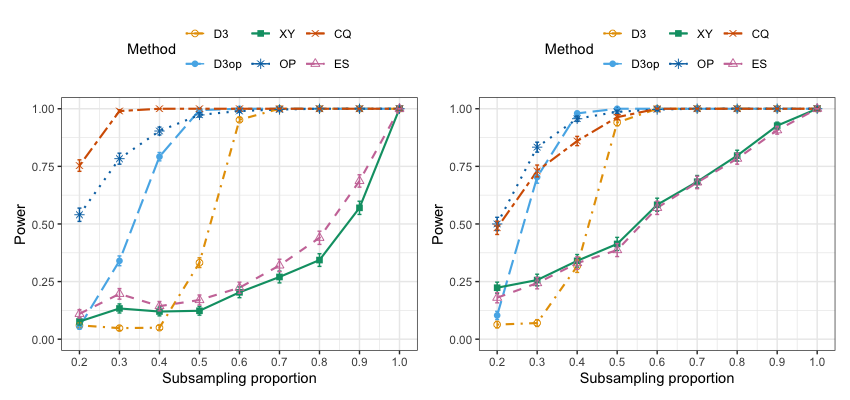}
\caption{Empirical power curves via $300$ subsamples (with vertical bars showing the standard errors). The left panel corresponds to the Leukemia data, while the right panel corresponds to the Prostate data.}
\label{fig:subsampling_real}
\end{figure}

\section*{Appendix D: Additional simulation studies} \label{sec: sim}
\subsection{Two-sample setting}
We compare D3 and D3op with several existing methods introduced earlier. The CQ test is implemented using the \texttt{PEtests} R package. Implementations of XY \citep{Xue2020}, OP \citep{huang2015projection}, and ES \citep{Kong2022} are based on codes provided by the respective authors. Owing to computational considerations, the XY procedure is implemented using $1{,}000$ bootstrap replications rather than the recommended $10{,}000$. For the ES procedure, we employ the non-studentized version of the infinity-norm statistic.

We consider a high-dimensional sparse mean-shift model with dimension $p=5{,}000$. Two sample-size configurations are examined {$(75,75)$, and $(100,50)$. Furthermore, we introduce coordinate-wise variance heterogeneity. The coordinate-wise variance factors are generated from a truncated log-normal distribution. Specifically,
\[
\sigma_j=\min\left\{\max\left(L_j,0.5\right),1.5\right\},
\qquad
L_j\sim \text{Lognormal}(0,0.15^2),
\]
independently for \(j=1,\ldots,p\). The same vector \(\sigma=(\sigma_1,\ldots,\sigma_p)\) is used for both populations, so the covariance structure remains common across groups. We also keep this choice fixed over all Monte Carlo iterations.

The latent observations are generated from one of the following correlation structures:

\begin{enumerate}
\item[(i)] Equicorrelation:
\[
R_{jk}=
\begin{cases}
1, & j=k,\\
\rho, & j\neq k,
\end{cases}
\]

\item[(ii)] AR(1):
\[
R_{jk}=\rho^{|j-k|},
\]

\item[(iii)] Block correlation:
coordinates are partitioned into blocks of size $50$, and within each block
\[
R_{jk}=
\begin{cases}
1, & j=k,\\
\rho, & j\neq k,
\end{cases}
\]
while different blocks are independent.
\end{enumerate}

Throughout the simulations, we set $\rho=0.3$. To assess robustness against deviations from normality, three marginal distributions are considered:

\begin{enumerate}
\item[(i)] standard Gaussian;
\item[(ii)] standardized chi-square distribution with $5$ degrees of freedom;
\item[(iii)] standardized Student's $t$-distribution with $5$ degrees of freedom.
\end{enumerate}

For empirical size, we set $\mu_1 = \mu_2 = 0$. For power, we fix $\mu_1 = 0$ and define $\mu_2$ as follows:
\[
\mu = s_j\delta \sqrt{\frac{\log p}{n}} \text{ and } 
\mu_2 = \big( \mu \cdot \mathbf{1}_{\lfloor 0.2\sqrt{p} \rfloor}^\top, \, 0_{p - \lfloor 0.2\sqrt{p} \rfloor}^\top \big)^\top,
\]
where $n=n_1+n_2$ and $s_j\in\{-1,1\}$ are independent Rademacher random variables. Consequently, positive and negative mean shifts occur with equal probability among the active coordinates. The signal strength parameter is varied over $\delta\in \{0,1,2,3\}.$

The observed samples are then generated according to
\[
X_i=\mu_1+D Z_i^{(1)},
\qquad
Y_j=\mu_2+D Z_j^{(2)},
\]
where $D=\mathrm{diag}(\sigma_1,\ldots,\sigma_p)$ and $Z_i^{(1)}, Z_j^{(2)}$ follow the selected dependence structure. Thus, both populations share the same covariance matrix, while exhibiting coordinate-wise heteroscedasticity arising from the randomly selected noisy coordinates.

All empirical sizes and powers are estimated using $3000$ Monte Carlo replications at significance level $\alpha=0.05$.

\begin{table}[t]
\centering
\caption{Empirical size and power for the two-sample setting with standard Gaussian marginals.}
\label{tab:sim_2sample_normal}
\scriptsize
\begin{tabular}{llcccccc}
\toprule
$(n_1,n_2)$ & Setting & D3 & D3op & CQ & XY & OP & ES \\
\midrule
\multicolumn{8}{c}{Equicorrelation: Empirical size ($\delta=0$)} \\ \midrule
$(75,75)$   &            & 0.060 & 0.058 & 0.077 & 0.050 & 0.049 & 0.053 \\
$(100,50)$  &            & 0.047 & 0.055 & 0.068 & 0.050 & 0.056 & 0.058 \\
\midrule
\multicolumn{8}{c}{Equicorrelation: Empirical power} \\ \midrule
$(75,75)$   & $\delta=1$ & 0.064 & 0.066 & 0.073 & 0.052 & 0.057 & 0.057 \\
            & $\delta=2$ & 0.431 & 0.393 & 0.062 & 0.181 & 0.152 & 0.192 \\
            & $\delta=3$ & 0.999 & 0.999 & 0.076 & 0.961 & 0.538 & 0.973 \\
$(100,50)$  & $\delta=1$ & 0.055 & 0.057 & 0.065 & 0.041 & 0.060 & 0.046 \\
            & $\delta=2$ & 0.189 & 0.294 & 0.072 & 0.136 & 0.146 & 0.151 \\
            & $\delta=3$ & 0.951 & 0.992 & 0.084 & 0.874 & 0.465 & 0.903 \\
\midrule
\multicolumn{8}{c}{AR(1): Empirical size ($\delta=0$)} \\ \midrule
$(75,75)$   &            & 0.065 & 0.055 & 0.048 & 0.041 & 0.044 & 0.046 \\
$(100,50)$  &            & 0.050 & 0.045 & 0.051 & 0.035 & 0.051 & 0.045 \\
\midrule
\multicolumn{8}{c}{AR(1): Empirical power} \\ \midrule
$(75,75)$   & $\delta=1$ & 0.064 & 0.059 & 0.090 & 0.041 & 0.051 & 0.049 \\
            & $\delta=2$ & 0.462 & 0.405 & 0.267 & 0.163 & 0.129 & 0.185 \\
            & $\delta=3$ & 0.999 & 0.998 & 0.727 & 0.938 & 0.438 & 0.951 \\
$(100,50)$  & $\delta=1$ & 0.053 & 0.060 & 0.082 & 0.045 & 0.065 & 0.059 \\
            & $\delta=2$ & 0.191 & 0.295 & 0.230 & 0.118 & 0.110 & 0.151 \\
            & $\delta=3$ & 0.952 & 0.988 & 0.642 & 0.814 & 0.355 & 0.869 \\
\midrule
\multicolumn{8}{c}{Block correlation: Empirical size ($\delta=0$)} \\ \midrule
$(75,75)$   &            & 0.061 & 0.056 & 0.060 & 0.043 & 0.054 & 0.053 \\
$(100,50)$  &            & 0.052 & 0.053 & 0.060 & 0.040 & 0.050 & 0.052 \\
\midrule
\multicolumn{8}{c}{Block correlation: Empirical power} \\ \midrule
$(75,75)$   & $\delta=1$ & 0.061 & 0.062 & 0.060 & 0.045 & 0.050 & 0.053 \\
            & $\delta=2$ & 0.462 & 0.400 & 0.135 & 0.183 & 0.074 & 0.204 \\
            & $\delta=3$ & 0.999 & 0.997 & 0.285 & 0.948 & 0.163 & 0.963 \\
$(100,50)$  & $\delta=1$ & 0.056 & 0.068 & 0.057 & 0.039 & 0.044 & 0.055 \\
            & $\delta=2$ & 0.196 & 0.300 & 0.128 & 0.109 & 0.073 & 0.141 \\
            & $\delta=3$ & 0.955 & 0.988 & 0.235 & 0.819 & 0.140 & 0.873 \\
\bottomrule
\end{tabular}
\end{table}

\begin{table}[t]
\centering
\caption{Empirical size and power for the two-sample setting with standardized $\chi^2_5$ marginals.}
\label{tab:sim_2sample_chisq}
\scriptsize
\begin{tabular}{llcccccc}
\toprule
$(n_1,n_2)$ & Setting & D3 & D3op & CQ & XY & OP & ES \\
\midrule
\multicolumn{8}{c}{Equicorrelation: Empirical size ($\delta=0$)} \\ \midrule
$(75,75)$   &            & 0.046 & 0.050 & 0.061 & 0.036 & 0.052 & 0.049 \\
$(100,50)$  &            & 0.051 & 0.054 & 0.064 & 0.032 & 0.041 & 0.045 \\
\midrule
\multicolumn{8}{c}{Equicorrelation: Empirical power} \\ \midrule
$(75,75)$   & $\delta=1$ & 0.046 & 0.053 & 0.064 & 0.033 & 0.051 & 0.046 \\
            & $\delta=2$ & 0.461 & 0.475 & 0.077 & 0.170 & 0.168 & 0.212 \\
            & $\delta=3$ & 0.999 & 0.999 & 0.081 & 0.922 & 0.550 & 0.954 \\
$(100,50)$  & $\delta=1$ & 0.051 & 0.056 & 0.073 & 0.037 & 0.058 & 0.053 \\
            & $\delta=2$ & 0.201 & 0.311 & 0.072 & 0.098 & 0.143 & 0.144 \\
            & $\delta=3$ & 0.953 & 0.990 & 0.072 & 0.772 & 0.469 & 0.867 \\
\midrule
\multicolumn{8}{c}{AR(1): Empirical size ($\delta=0$)} \\ \midrule
$(75,75)$   &            & 0.054 & 0.050 & 0.049 & 0.030 & 0.052 & 0.049 \\
$(100,50)$  &            & 0.055 & 0.052 & 0.047 & 0.027 & 0.050 & 0.057 \\
\midrule
\multicolumn{8}{c}{AR(1): Empirical power} \\ \midrule
$(75,75)$   & $\delta=1$ & 0.061 & 0.061 & 0.085 & 0.027 & 0.053 & 0.050 \\
            & $\delta=2$ & 0.476 & 0.473 & 0.263 & 0.120 & 0.129 & 0.179 \\
            & $\delta=3$ & 0.999 & 0.999 & 0.717 & 0.892 & 0.451 & 0.946 \\
$(100,50)$  & $\delta=1$ & 0.053 & 0.058 & 0.079 & 0.028 & 0.051 & 0.052 \\
            & $\delta=2$ & 0.198 & 0.307 & 0.240 & 0.070 & 0.108 & 0.126 \\
            & $\delta=3$ & 0.962 & 0.985 & 0.661 & 0.696 & 0.385 & 0.842 \\
\midrule
\multicolumn{8}{c}{Block correlation: Empirical size ($\delta=0$)} \\ \midrule
$(75,75)$   &            & 0.056 & 0.052 & 0.055 & 0.031 & 0.055 & 0.045 \\
$(100,50)$  &            & 0.066 & 0.047 & 0.056 & 0.027 & 0.048 & 0.053 \\
\midrule
\multicolumn{8}{c}{Block correlation: Empirical power} \\ \midrule
$(75,75)$   & $\delta=1$ & 0.067 & 0.062 & 0.076 & 0.036 & 0.049 & 0.059 \\
            & $\delta=2$ & 0.481 & 0.480 & 0.132 & 0.123 & 0.080 & 0.184 \\
            & $\delta=3$ & 0.999 & 0.999 & 0.297 & 0.903 & 0.177 & 0.948 \\
$(100,50)$  & $\delta=1$ & 0.053 & 0.060 & 0.078 & 0.030 & 0.055 & 0.062 \\
            & $\delta=2$ & 0.214 & 0.322 & 0.121 & 0.080 & 0.075 & 0.145 \\
            & $\delta=3$ & 0.965 & 0.986 & 0.247 & 0.694 & 0.149 & 0.840 \\
\bottomrule
\end{tabular}
\end{table}

\begin{table}[t]
\centering
\caption{Empirical size and power for the two-sample setting with standardized $t_5$ marginals.}
\label{tab:sim_2sample_t5}
\scriptsize
\begin{tabular}{llcccccc}
\toprule
$(n_1,n_2)$ & Setting & D3 & D3op & CQ & XY & OP & ES \\
\midrule
\multicolumn{8}{c}{Equicorrelation: Empirical size ($\delta=0$)} \\ \midrule
$(75,75)$   &            & 0.047 & 0.057 & 0.072 & 0.017 & 0.047 & 0.050 \\
$(100,50)$  &            & 0.053 & 0.047 & 0.070 & 0.010 & 0.054 & 0.046 \\
\midrule
\multicolumn{8}{c}{Equicorrelation: Empirical power} \\ \midrule
$(75,75)$   & $\delta=1$ & 0.055 & 0.061 & 0.073 & 0.022 & 0.058 & 0.054 \\
            & $\delta=2$ & 0.476 & 0.512 & 0.079 & 0.080 & 0.165 & 0.176 \\
            & $\delta=3$ & 0.998 & 0.999 & 0.081 & 0.688 & 0.549 & 0.923 \\
$(100,50)$  & $\delta=1$ & 0.055 & 0.064 & 0.072 & 0.013 & 0.059 & 0.054 \\
            & $\delta=2$ & 0.228 & 0.391 & 0.077 & 0.035 & 0.142 & 0.127 \\
            & $\delta=3$ & 0.960 & 0.993 & 0.076 & 0.437 & 0.471 & 0.802 \\
\midrule
\multicolumn{8}{c}{AR(1): Empirical size ($\delta=0$)} \\ \midrule
$(75,75)$   &            & 0.059 & 0.053 & 0.048 & 0.016 & 0.044 & 0.047 \\
$(100,50)$  &            & 0.053 & 0.054 & 0.049 & 0.010 & 0.049 & 0.051 \\
\midrule
\multicolumn{8}{c}{AR(1): Empirical power} \\ \midrule
$(75,75)$   & $\delta=1$ & 0.057 & 0.062 & 0.084 & 0.012 & 0.059 & 0.047 \\
            & $\delta=2$ & 0.498 & 0.511 & 0.273 & 0.063 & 0.142 & 0.161 \\
            & $\delta=3$ & 0.999 & 0.999 & 0.725 & 0.632 & 0.448 & 0.917 \\
$(100,50)$  & $\delta=1$ & 0.059 & 0.060 & 0.087 & 0.006 & 0.066 & 0.054 \\
            & $\delta=2$ & 0.223 & 0.376 & 0.238 & 0.019 & 0.119 & 0.112 \\
            & $\delta=3$ & 0.959 & 0.992 & 0.626 & 0.300 & 0.381 & 0.736 \\
\midrule
\multicolumn{8}{c}{Block correlation: Empirical size ($\delta=0$)} \\ \midrule
$(75,75)$   &            & 0.055 & 0.057 & 0.054 & 0.013 & 0.046 & 0.048 \\
$(100,50)$  &            & 0.053 & 0.054 & 0.055 & 0.008 & 0.047 & 0.056 \\
\midrule
\multicolumn{8}{c}{Block correlation: Empirical power} \\ \midrule
$(75,75)$   & $\delta=1$ & 0.061 & 0.061 & 0.066 & 0.012 & 0.049 & 0.051 \\
            & $\delta=2$ & 0.492 & 0.499 & 0.119 & 0.049 & 0.076 & 0.154 \\
            & $\delta=3$ & 0.999 & 0.999 & 0.301 & 0.635 & 0.173 & 0.904 \\
$(100,50)$  & $\delta=1$ & 0.050 & 0.063 & 0.079 & 0.005 & 0.053 & 0.050 \\
            & $\delta=2$ & 0.230 & 0.375 & 0.125 & 0.022 & 0.074 & 0.122 \\
            & $\delta=3$ & 0.975 & 0.995 & 0.251 & 0.313 & 0.150 & 0.768 \\
\bottomrule
\end{tabular}
\end{table}

The results in Tables~\ref{tab:sim_2sample_normal}, \ref{tab:sim_2sample_chisq}, and \ref{tab:sim_2sample_t5} summarize the empirical size and power under standard Gaussian, standardized $\chi^2_5$, and standardized $t_5$ marginals, respectively. Across the three covariance structures and both sample-size configurations, D3 and D3op generally maintain accurate Type~I error control, with empirical rejection probabilities close to the nominal level of $0.05$. OP and ES also exhibit satisfactory size performance. CQ is reasonably well calibrated overall, although it is slightly liberal under equicorrelation, with empirical sizes ranging from $0.061$ to $0.077$. In contrast, XY becomes increasingly conservative as the marginal distribution departs from Gaussianity. This behavior is most pronounced under the standardized $t_5$ distribution, where its empirical size ranges from $0.008$ to $0.017$.

When the signal is weak ($\delta=1$), the rejection probabilities of all procedures remain close to their corresponding null rejection probabilities, indicating that this signal level is difficult to distinguish from the null when $p=5{,}000$. More substantial differences emerge at the moderate signal level $\delta=2$. For the balanced design $(n_1,n_2)=(75,75)$, D3 and D3op have similar performance and clearly outperform the competing procedures across all marginal distributions and covariance structures. Their empirical powers range from $0.393$ to $0.512$, whereas the largest power among the competitors is $0.273$. D3 is slightly more powerful than D3op in all three Gaussian settings, while the two methods are nearly indistinguishable under the standardized $\chi^2_5$ distribution. Under the standardized $t_5$ distribution, D3op generally has a modest advantage.

The benefit of the data-adaptive penalty is more apparent for the unbalanced design $(100,50)$. At $\delta=2$, D3op uniformly exceeds D3 across the nine combinations of covariance structure and marginal distribution. Its power ranges from $0.294$ to $0.391$, compared with $0.189$ to $0.230$ for D3. The competing procedures remain substantially less powerful, with none attaining power above $0.240$. Thus, although D3 and D3op perform similarly in the balanced setting, the adaptive penalty used by D3op provides a clear advantage when the sample sizes are unequal.

At the stronger signal level $\delta=3$, D3 and D3op achieve power close to one throughout. Under the balanced design, both procedures have power of at least $0.997$, while under the unbalanced design D3op ranges from $0.985$ to $0.995$ and D3 ranges from $0.951$ to $0.975$. Among the competing procedures, performance depends strongly on the dependence structure and marginal distribution. ES and XY perform well under Gaussian marginals, particularly for balanced samples, but both lose power under unequal sample sizes and non-Gaussian observations. CQ performs relatively well under the AR(1) structure, with power between $0.626$ and $0.727$ at $\delta=3$, but is substantially weaker under equicorrelation and block correlation. OP also performs reasonably under equicorrelation but deteriorates markedly under block correlation.

Comparisons across marginal distributions further demonstrate the robustness of D3 and D3op. Their empirical size remains stable, and their power changes only modestly under skewness and heavy tails. In fact, at $\delta=2$, their balanced-sample power is often slightly higher under the standardized $\chi^2_5$ and $t_5$ distributions than under Gaussian marginals. In contrast, XY is highly sensitive to heavy tails: under the standardized $t_5$ distribution, its power at $\delta=2$ ranges only from $0.012$ to $0.080$, and substantial losses remain even at $\delta=3$. CQ and ES also exhibit reductions in several non-Gaussian settings, although the extent of the deterioration depends on the covariance structure.

These findings support the motivation for the proposed methodology. Although the two populations share the same covariance matrix, the coordinate-wise variance factors generated from the truncated log-normal distribution create heterogeneous signal-to-noise ratios across features. The sparse mean shifts are therefore embedded within a large collection of irrelevant coordinates having nonuniform marginal variability. By using penalized classification to identify a smaller set of discriminative variables, D3 and D3op reduce the effect of the heterogeneous background coordinates before constructing the test statistic. The adaptive penalty employed by D3op is particularly beneficial under sample-size imbalance, while retaining reliable Type~I error control across the dependence structures and marginal distributions considered.

\subsection{Three-sample setting}

We extend the numerical study to the three-sample problem. Two sample-size configurations are considered: $(75,75,75)$ and $(100,75,50)$.

The data-generation mechanism, dependence structures, marginal distributions, and active-set construction remain identical to those used in the two-sample setting.

Under the null hypothesis, we set
$\mu_1=\mu_2=\mu_3=0.$ Under the alternative, we
take $\mu_1$ = 0, $\mu_2$ as defined above, and construct $\mu_3$ by randomly permuting (which is kept fixed over the Monte Carlo iterations) the coordinates of $\mu_2$. This preserves both the sparsity level and signal magnitude while allowing the locations of the active coordinates to differ across populations.

\begin{table}[t]
\centering
\caption{Empirical size and power for the three-sample setting with Gaussian marginals.}
\label{tab:sim_3sample_normal}
\scriptsize
\begin{tabular}{llcccc}
\toprule
$(n_1,n_2,n_3)$ & Setting & D3 & D3op & KDCF & HDT \\
\midrule
\multicolumn{6}{c}{Equicorrelation: Empirical size ($\delta=0$)} \\ \midrule
$(75,75,75)$ &  & 0.058 & 0.053 & 0.043 & 0.074 \\
$(100,75,50)$ &  & 0.055 & 0.058 & 0.032 & 0.068 \\
\midrule
\multicolumn{6}{c}{Equicorrelation: Empirical power} \\ \midrule
$(75,75,75)$ & $\delta=1$ & 0.059 & 0.060 & 0.042 & 0.071 \\
 & $\delta=2$ & 0.204 & 0.221 & 0.068 & 0.073 \\
 & $\delta=3$ & 0.956 & 0.954 & 0.432 & 0.073 \\
$(100,75,50)$ & $\delta=1$ & 0.065 & 0.061 & 0.041 & 0.071 \\
 & $\delta=2$ & 0.432 & 0.391 & 0.065 & 0.081 \\
 & $\delta=3$ & 0.999 & 0.998 & 0.407 & 0.073 \\
\midrule
\multicolumn{6}{c}{AR(1): Empirical size ($\delta=0$)} \\ \midrule
$(75,75,75)$ &  & 0.068 & 0.053 & 0.036 & 0.046 \\
$(100,75,50)$ &  & 0.061 & 0.060 & 0.036 & 0.054 \\
\midrule
\multicolumn{6}{c}{AR(1): Empirical power} \\ \midrule
$(75,75,75)$ & $\delta=1$ & 0.057 & 0.063 & 0.043 & 0.069 \\
 & $\delta=2$ & 0.229 & 0.229 & 0.055 & 0.093 \\
 & $\delta=3$ & 0.969 & 0.963 & 0.406 & 0.154 \\
$(100,75,50)$ & $\delta=1$ & 0.069 & 0.068 & 0.034 & 0.047 \\
 & $\delta=2$ & 0.445 & 0.386 & 0.050 & 0.081 \\
 & $\delta=3$ & 0.999 & 0.994 & 0.364 & 0.135 \\
\midrule
\multicolumn{6}{c}{Block correlation: Empirical size ($\delta=0$)} \\ \midrule
$(75,75,75)$ &  & 0.057 & 0.052 & 0.037 & 0.063 \\
$(100,75,50)$ &  & 0.059 & 0.055 & 0.042 & 0.060 \\
\midrule
\multicolumn{6}{c}{Block correlation: Empirical power} \\ \midrule
$(75,75,75)$ & $\delta=1$ & 0.063 & 0.056 & 0.039 & 0.059 \\
 & $\delta=2$ & 0.216 & 0.224 & 0.051 & 0.068 \\
 & $\delta=3$ & 0.961 & 0.957 & 0.406 & 0.103 \\
$(100,75,50)$ & $\delta=1$ & 0.059 & 0.056 & 0.033 & 0.062 \\
 & $\delta=2$ & 0.444 & 0.379 & 0.055 & 0.074 \\
 & $\delta=3$ & 1.000 & 0.997 & 0.369 & 0.087 \\
\bottomrule
\end{tabular}
\end{table}

\begin{table}[t]
\centering
\caption{Empirical size and power for the three-sample setting with standardized $\chi^2_5$ marginals.}
\label{tab:sim_3sample_chisq}
\scriptsize
\begin{tabular}{llcccc}
\toprule
$(n_1,n_2,n_3)$ & Setting & D3 & D3op & KDCF & HDT \\
\midrule
\multicolumn{6}{c}{Equicorrelation: Empirical size ($\delta=0$)} \\ \midrule
$(75,75,75)$ &  & 0.055 & 0.058 & 0.034 & 0.071 \\
$(100,75,50)$ &  & 0.057 & 0.055 & 0.031 & 0.067 \\
\midrule
\multicolumn{6}{c}{Equicorrelation: Empirical power} \\ \midrule
$(75,75,75)$ & $\delta=1$ & 0.065 & 0.054 & 0.032 & 0.067 \\
 & $\delta=2$ & 0.235 & 0.239 & 0.047 & 0.073 \\
 & $\delta=3$ & 0.969 & 0.966 & 0.333 & 0.074 \\
$(100,75,50)$ & $\delta=1$ & 0.065 & 0.060 & 0.031 & 0.064 \\
 & $\delta=2$ & 0.435 & 0.412 & 0.040 & 0.071 \\
 & $\delta=3$ & 0.999 & 0.998 & 0.292 & 0.073 \\
\midrule
\multicolumn{6}{c}{AR(1): Empirical size ($\delta=0$)} \\ \midrule
$(75,75,75)$ &  & 0.054 & 0.057 & 0.027 & 0.054 \\
$(100,75,50)$ &  & 0.060 & 0.059 & 0.026 & 0.048 \\
\midrule
\multicolumn{6}{c}{AR(1): Empirical power} \\ \midrule
$(75,75,75)$ & $\delta=1$ & 0.062 & 0.057 & 0.030 & 0.059 \\
 & $\delta=2$ & 0.229 & 0.226 & 0.039 & 0.087 \\
 & $\delta=3$ & 0.963 & 0.960 & 0.309 & 0.157 \\
$(100,75,50)$ & $\delta=1$ & 0.070 & 0.062 & 0.031 & 0.060 \\
 & $\delta=2$ & 0.457 & 0.425 & 0.036 & 0.086 \\
 & $\delta=3$ & 0.999 & 0.997 & 0.236 & 0.155 \\
\midrule
\multicolumn{6}{c}{Block correlation: Empirical size ($\delta=0$)} \\ \midrule
$(75,75,75)$ &  & 0.053 & 0.051 & 0.028 & 0.055 \\
$(100,75,50)$ &  & 0.060 & 0.056 & 0.029 & 0.056 \\
\midrule
\multicolumn{6}{c}{Block correlation: Empirical power} \\ \midrule
$(75,75,75)$ & $\delta=1$ & 0.059 & 0.060 & 0.026 & 0.062 \\
 & $\delta=2$ & 0.243 & 0.235 & 0.042 & 0.074 \\
 & $\delta=3$ & 0.972 & 0.963 & 0.302 & 0.099 \\
$(100,75,50)$ & $\delta=1$ & 0.067 & 0.061 & 0.029 & 0.062 \\
 & $\delta=2$ & 0.458 & 0.414 & 0.037 & 0.071 \\
 & $\delta=3$ & 0.999 & 0.998 & 0.246 & 0.097 \\
\bottomrule
\end{tabular}
\end{table}

\begin{table}[t]
\centering
\caption{Empirical size and power for the three-sample setting with standardized $t_5$ marginals.}
\label{tab:sim_3sample_t5}
\scriptsize
\begin{tabular}{llcccc}
\toprule
$(n_1,n_2,n_3)$ & Setting & D3 & D3op & KDCF & HDT \\
\midrule
\multicolumn{6}{c}{Equicorrelation: Empirical size ($\delta=0$)} \\ \midrule
$(75,75,75)$ &  & 0.046 & 0.060 & 0.010 & 0.070 \\
$(100,75,50)$ &  & 0.055 & 0.052 & 0.008 & 0.065 \\
\midrule
\multicolumn{6}{c}{Equicorrelation: Empirical power} \\ \midrule
$(75,75,75)$ & $\delta=1$ & 0.057 & 0.065 & 0.008 & 0.072 \\
 & $\delta=2$ & 0.235 & 0.267 & 0.017 & 0.069 \\
 & $\delta=3$ & 0.971 & 0.979 & 0.127 & 0.079 \\
$(100,75,50)$ & $\delta=1$ & 0.060 & 0.064 & 0.008 & 0.076 \\
 & $\delta=2$ & 0.440 & 0.447 & 0.010 & 0.066 \\
 & $\delta=3$ & 0.999 & 0.998 & 0.095 & 0.077 \\
\midrule
\multicolumn{6}{c}{AR(1): Empirical size ($\delta=0$)} \\ \midrule
$(75,75,75)$ &  & 0.056 & 0.054 & 0.008 & 0.050 \\
$(100,75,50)$ &  & 0.063 & 0.061 & 0.003 & 0.049 \\
\midrule
\multicolumn{6}{c}{AR(1): Empirical power} \\ \midrule
$(75,75,75)$ & $\delta=1$ & 0.059 & 0.065 & 0.005 & 0.058 \\
 & $\delta=2$ & 0.249 & 0.282 & 0.012 & 0.086 \\
 & $\delta=3$ & 0.973 & 0.978 & 0.103 & 0.168 \\
$(100,75,50)$ & $\delta=1$ & 0.067 & 0.067 & 0.005 & 0.067 \\
 & $\delta=2$ & 0.480 & 0.459 & 0.006 & 0.084 \\
 & $\delta=3$ & 0.998 & 0.998 & 0.051 & 0.149 \\
\midrule
\multicolumn{6}{c}{Block correlation: Empirical size ($\delta=0$)} \\ \midrule
$(75,75,75)$ &  & 0.053 & 0.054 & 0.010 & 0.055 \\
$(100,75,50)$ &  & 0.057 & 0.058 & 0.006 & 0.051 \\
\midrule
\multicolumn{6}{c}{Block correlation: Empirical power} \\ \midrule
$(75,75,75)$ & $\delta=1$ & 0.056 & 0.058 & 0.005 & 0.068 \\
 & $\delta=2$ & 0.239 & 0.279 & 0.012 & 0.069 \\
 & $\delta=3$ & 0.972 & 0.980 & 0.103 & 0.098 \\
$(100,75,50)$ & $\delta=1$ & 0.062 & 0.065 & 0.004 & 0.062 \\
 & $\delta=2$ & 0.465 & 0.459 & 0.003 & 0.071 \\
 & $\delta=3$ & 0.999 & 0.998 & 0.052 & 0.084 \\
\bottomrule
\end{tabular}
\end{table}

The results in Tables~\ref{tab:sim_3sample_normal},
\ref{tab:sim_3sample_chisq} and \ref{tab:sim_3sample_t5}
summarize the performance of the competing multi-sample procedures
under standard Gaussian, standardized $\chi^2_5$, and standardized
$t_5$ marginals, respectively. Across the covariance structures and
sample-size configurations considered, D3 and D3op generally maintain
accurate Type~I error control, with empirical rejection probabilities
close to the nominal level of $0.05$. KDCF is conservative throughout,
and its conservativeness becomes substantially more pronounced under
the standardized $t_5$ distribution. HDT is reasonably calibrated
under the AR(1) and block-correlation structures but exhibits mild size
inflation under equicorrelation, where its empirical size ranges from
$0.065$ to $0.074$.

At the weakest signal level, $\delta=1$, the rejection probabilities of
all procedures remain close to their corresponding null rejection
rates, indicating that the sparse alternative is difficult to detect
at this signal strength. More substantial differences emerge at
$\delta=2$. Across all marginal distributions, covariance structures,
and sample-size configurations, D3 and D3op attain markedly higher
power than KDCF and HDT. Under the balanced design, the two proposed
procedures perform similarly, with D3op having a modest advantage in
several settings. Under the unbalanced design, D3 is more powerful in
most configurations, although the relative ordering is not uniform.
In either case, the powers of KDCF and HDT remain close to their null
rejection probabilities.

At the stronger signal level, $\delta=3$, D3 and D3op achieve power
close to one in every configuration. Under Gaussian marginals, KDCF
attains moderate power, ranging from approximately $0.36$ to $0.43$,
while HDT remains substantially less powerful. Both competing methods
deteriorate under non-Gaussian marginals. Under the standardized
$\chi^2_5$ distribution, the power of KDCF ranges from $0.236$ to
$0.333$, whereas under the standardized $t_5$ distribution it does not
exceed $0.127$. HDT also remains weak at $\delta=3$, with power below
$0.17$ throughout.

Comparisons across the three marginal distributions further illustrate the robustness of D3 and D3op. Their empirical sizes remain stable under Gaussian, skewed, and heavy-tailed observations, and their powers vary only modestly across the three distributions. In contrast, KDCF is highly sensitive to departures from Gaussianity. Its empirical size becomes increasingly conservative, and its power declines sharply under the standardized $\chi^2_5$ and especially the standardized $t_5$ distributions. HDT is less affected by the marginal distribution than KDCF, but its power remains consistently well below that of D3 and D3op.

These results indicate that the proposed methodology extends effectively from the two-sample to the three-sample setting. Here, the nonzero mean components may occur at different coordinate locations across the populations, while the observations are also subject to dependence and coordinate-wise variance heterogeneity. By using penalized classification to screen the high-dimensional feature space before constructing the test statistic, D3 and D3op retain reliable Type~I error control and achieve substantially higher power than the competing procedures across the dependence structures and marginal distributions
considered.

\end{document}